\newcommand{\OCAML}{\textsf{OCAML}}
\newcommand{\HASKELL}{\textsf{HASKELL}}
\newcommand{\EFF}{\textsf{EFF}}
\newcommand{\EPCF}{\textsf{EPCF}}
\newcommand{\HEPCF}{\textsf{HEPCF}}
\newcommand{\GEPCF}{\textsf{GEPCF}}
\newcommand{\PCF}{\textsf{PCF}}
\newcommand{\MSO}{\textsf{MSO}}
\newcommand{\ECPS}{\textsf{ECPS}}
\newcommand{\MSOp}[1]{\mathsf{MSO}_{#1}}
\newcommand{\la}{\lambda}
\newcommand{\laY}{\lambda Y}
\newcommand{\trees}{\mathbf{Tree}_{\Sigma}}
\newcommand{\treess}{\mathbf{Tree}_{\Sigma'}}
\newcommand{\etrees}[1]{\mathbf{Tree}_{\Sigma_{#1}}}
\newcommand{\set}[1]{\{ #1 \}}
\newcommand{\fix}{\mathtt{fix}}
\newcommand{\return}{\mathtt{return}}
\newcommand{\letin}[1]{\mathtt{let}~#1~\mathtt{in}}
\newcommand{\handle}[1]{\mathtt{with}~#1~\mathtt{handle}} 
\newcommand{\shandle}[1]{\mathtt{with}~#1~\mathtt{handle}^{\dagger}} 
\newcommand{\case}{\mathtt{case}}
\newcommand{\vect}[1]{\widetilde{#1}}
\newcommand{\type}[1]{\mathtt{#1}}
\newcommand{\ar}{\mathtt{ar}}
\newcommand{\ararrow}{\rightsquigarrow}
\newcommand{\sem}[1]{\llbracket #1 \rrbracket}
\newcommand{\toY}[1]{#1^{\star}}
\newcommand{\p}{\vdash}
\newcommand{\AXC}[1]{\AxiomC{#1}}
\newcommand{\UIC}[1]{\UnaryInfC{#1}}
\newcommand{\BIC}[1]{\BinaryInfC{#1}}
\newcommand{\TIC}[1]{\TrinaryInfC{#1}}
\newcommand{\DP}{\DisplayProof}
\newcommand{\vvskip}{\medskip}
\newcommand\midd{\; \mbox{\Large{$\mid$}}\;}
\newenvironment{framed}[0]{\begin{boxedminipage}{\linewidth}}{\end{boxedminipage}}
\newtheorem{definition}{Definition}
\newtheorem{theorem}{Theorem}
\newtheorem{lemma}{Lemma}
\newtheorem{example}{Example}
\newtheorem*{remark}{Remark}
\newtheorem{corollary}{Corollary}
\newcommand{\commentout}[1]{}
\lstdefinestyle{eff}{
	mathescape=true
	breaklines=true,
	basicstyle=\small,
	basewidth = {0.48em},
	emph={Flip,Set,Get,true,false,not,read,Read,raise,Raise,randomint,Randomint,randomfloat,Randomfloat,write,Write,print,Print,Choice,Do,Reward,Observe,Init,Op,choice,do,reward,observe,init,op}, emphstyle=\bfseries,
	emph={[2]unit,float,int,loc,bool},emphstyle={[2]\color{cyan!80!black}},
	keywordstyle=\color{blue!70!black},
	stringstyle=\color{orange!50!black},
	showstringspaces=false,
	comment=[s]{(*}{*)},
	commentstyle=\color{green!50!black},
	morestring=[b]",
	morekeywords={case,fix,perform,ref,begin,end,finally,continue,perform,type,use,rec,effect,handle,with,handler,let,return,fun,while,match,if,then,else,probability,in},
	literate= 
	{|->}{{$\mapsto$}}2
	{bot}{{$\bot$}}1
	{xor}{{$\oplus$}}1
	{lambda}{{$\lambda$}}1
	{odod}{{\textcolor{blue}{do}}}2
}
\begin{document}

	\title{On Model-Checking Higher-Order Effectful Programs}

\author{Ugo Dal Lago}
\affiliation{%
	\institution{University of Bologna, INRIA}
	\country{Italy}}
\authornote{Both authors contributed equally to this research.}
\email{ugo.dallago@unibo.it}
\orcid{1234-5678-9012}

\author{Alexis Ghyselen}
\affiliation{%
  \institution{University of Bologna}
  \country{Italy}}
\email{alexis.ghyselen@unibo.it}

\setcopyright{none}
\settopmatter{printacmref=false}

\renewcommand{\shortauthors}{Dal Lago and Ghyselen}

\begin{abstract}
  Model-checking is one of the most powerful techniques for verifying systems 
  and programs, which since the pioneering results by Knapik et al., Ong, and 
  Kobayashi, is known to be applicable to functional programs with higher-order 
  types against properties expressed by formulas of monadic second-order logic. 
  What happens when the program in question, in addition to higher-order 
  functions, also exhibits algebraic effects such as probabilistic choice or 
  global store? The results in the literature range from those, mostly positive, 
  about nondeterministic effects, to those about probabilistic effects, in the 
  presence of which even mere reachability becomes undecidable. This work takes a 
  fresh and general look at the problem, first of all showing that there is an 
  elegant and natural way of viewing higher-order programs producing algebraic 
  effects as ordinary higher-order recursion schemes. We then move on to 
  consider effect handlers, showing that in their presence the model checking 
  problem is bound to be undecidable in the general case, while it stays 
  decidable when handlers have a simple syntactic form, still sufficient to capture so-called \emph{generic effects}. Along the way we hint at how a general specification language could look like, this way justifying some of the results in the literature, and deriving new ones.
\end{abstract}



\keywords{higher-order recursion schemes, algebraic effects, model checking, 
effect handlers}

\newif\ifLong

\Longtrue

\maketitle

\section{Introduction}
Verifying the correctness of programs endowed with higher-order functions is a 
very challenging problem, which can be addressed with various methodologies, 
from type systems~\cite{HughesPareto1996:SizedTypes,RowanPfenning2000:Intersection,FreemanPfenning1991:Refinement} to program logics~\cite{Jung2018:IRIS,Brady2013:Idris}, from 
symbolic execution~\cite{King1976:SymbolicExecution,Tobin2012:HOSymExec} to verified compilation~\cite{Leroy2009:CompCert}. 
An approach with some peculiarities is that of higher-order model checking (HOMC in the following), 
which consists in seeing the 
program at hand as a structure, then checking whether it renders 
a logical formula capturing the desired property true, namely whether it is a 
\emph{model} of it. Saying it another way, HOMC can be seen as the application
of the model checking paradigm \cite{Clarke1997:ModelChecking,Clarke2018:ModelCheckingBook} to higher-order programs. 
One of the characteristics of this approach is that, contrary to most others, it is often both sound \emph{and 
complete}. As a consequence, the kind of languages to which the methodology can 
be applied are very often \emph{not} Turing-complete, the underlying 
verification problem being undecidable even for very simple logics.

The feasibility of higher-order model checking was scrutinized in the early 2000s, the objective 
being to extend classic results about model checking \emph{recursion 
schemes}~\cite{Courcelle1995:MSOTrees} to higher-order generalizations of the latter. The results obtained 
were initially very interesting but partial~\cite{Knapi2002:HORSareeasy,Knapik2001:MSOforHORS}, only concerning certain restricted forms of 
higher-order recursion schemes. The quest came to an end in 2006 with Ong's groundbreaking
result~\cite{Ong2006:HOMC} on the decidability of the model checking problem for trees generated by 
general higher-order recursion schemes against formulas of MSO (or, equivalently, 
of formulas the $\mu$-calculus or alternating parity tree automata \cite{Emerson91:TreeAutomata,Gradel2003:Automata}). 
This result was followed by many other ones \cite{Kobayashi2009:TypesHOMC,KobayashiOng2009:TypeSystemHOMC,Hague2008:CPDAandHORS,Salvati2011:KrivineMachineHORS,Broadbemt2010:HORSandLogicalReflection,Carayol2012:CPDAandHORS,Walukiewicz2016:AutomataHOMC}, whose goal was that of 
understanding the deep computational nature of the problem, at the same time 
generalizing the decidability result and building
concrete verification tools \cite{Kobayashi2011:LinearAlgorithm,Naetherway2012:TraversalAlgo,BroadbentKobayashi2013:AlgoSaturation,Ramsay2014:RefinementAlgoHOMC}, readers can refer to \cite{Ong2015:HOMC} for an overview. 

Among the various extensions of higher-order schemes considered in the 
literature, we should certainly mention extensions aimed at capturing 
higher-order recursion schemes subject to more permissive type disciplines than 
that of simple types, namely the one to which Ong's classic result applies. 
As an example, higher-order recursion schemes with 
recursive types have been recently considered~\cite{Kobayashi2013:HOMCRecursiveTypes}.
We should also mention some attempts at making the technique applicable to 
programs which are not pure, but which can produce, for example, 
nondeterministic and probabilistic effects. In the second case the decidability 
results scale back \cite{Kobayashi2020:Termination}, with undecidability showing up already at 
order three and for mere reachability properties. In the first case, instead 
results remain essentially unchanged \cite{Tsukada2014:NonDeterministic}. 

The motivation from which this work originates is precisely that of understanding 
the deep reasons for the aforementioned discrepancy, at the same time giving a 
general account of the HOMC problem in presence of effects. In doing so, we will
consider effects as being captured by algebraic operations~\cite{PlotkinPower2003:AlgebraicOperations}, 
the latter producing some pre-defined effects or interpreted by way of effect handlers~\cite{Plotkin2009:Handlers,KammarICFP2013:Handlers,Hillerstorm2017:Continuation}. In other
words, we will consider well-established ways of capturing effects in higher-order
$\lambda$-calculi. On the side of specifications, we analogously try not to consider ad-hoc formalisms, and look for conservative extensions of MSO in which the properties of interest can be captured in a unifying way. In particular, in the algebraic approach to effects, trees are always considered up to an equational theory, describing how the different algebraic operation should behave. In our approach, this means that it is particularly important to be able to define specifications that take into account this equational theory, and this is captured in our approach by the notion of an $\emph{observation}$ \cite{JohannSimpson2010:AlgebraicEffects,Simpson2019:BehaviouralEquivalenceEffects,MatacheStaton2019:LogicAlgebraicEffects}.  

The contributions of this paper are threefold:
\begin{itemize}
\item
We first of all consider a finitary version of Simpson and Voorneveld's \EPCF, a calculus with 
full recursion and algebraic effects, showing that the computation tree semantics of any
\EPCF\ computation $C$ is precisely the one of a $\laY$-term $C^*$ obtained by CPS-translating 
$C$. Since the $\laY$-calculus is well-known to be equiexpressive to higher-order recursion schemes \cite{NakamuraetAl2020:AverageCaseHardness}, we obtain that computation trees generated by \EPCF\ can be automatically checked against MSO specifications. This is in Section~\ref{s:EPCF} and Section~\ref{sect:translationEPCFtoY}.
\item
Then, we turn our attention to more general and more expressive specifications. Among the many proposals for logic for algebraic effects, we consider a variation on the one proposed by Simpson and Voorneveld, in which any effectful computations can be tested through the notion of an observation. We then prove that only certain notions of observations give rise to decidable model-checking problems, this way justifying some of the existing results in the literature, at the same time proving new ones. This is in Section~\ref{s:logic}.
\item
Finally, we consider the impact of effect handlers to the HOMC problem. We show that handlers are indeed harmful to decidability, at least when general, well-established notions of handlers are considered, including shallow and deep handlers. We conclude by considering a rather restricted class of handlers which are sufficiently expressive to capture generic effects \cite{PlotkinPower2003:AlgebraicOperations} but for which model checking remains decidable. This is in Section~\ref{s:handlers} and Section~\ref{s:modelcheckinghandlers}.
\end{itemize}

All in all, the results above provide a rather clear picture about how far one can go in applying existing HOMC methodologies to effectful programs. The take-home messages are that in principle, such techniques can be reused, provided the underlying notion of observation does not give rise to too complex specifications, while handlers are potentially very dangerous, and should be used with great care. The technical core of the paper is in Section~\ref{s:preliminaries} to Section~\ref{s:modelcheckinghandlers}, while Section~\ref{s:informal} serves as a gentle introduction to higher-order programming with effects and its verification. Related work is discussed in Section~\ref{s:relatedwork}.

\section{Higher-Order Effectful Programs, and how to Model-Check Them}
\label{s:informal}
While the $\lambda$-calculus is the reference paradigmatic model for pure 
functional programming, a standard way of raising \emph{effects} from within functional 
programs consists in invoking algebraic operations \cite{PlotkinPower2003:AlgebraicOperations}, 
each of them corresponding to a particular way of producing an observable effect. Even when 
the underlying programming language does not offer algebraic operations 
natively, many impure constructs can be interpreted this way. Consider, as an 
example, the \OCAML\ program in Figure~\ref{fig:ocamlp}, call it $\mathtt{P}$, 
which manipulates two ground global variables $\mathrm{r}$ and $\mathrm{q}$ 
through a recursive higher-order function $\mathrm{f}$.
\begin{figure}
	\centering
	\begin{subfigure}{0.55\textwidth}
\begin{lstlisting}[style = eff]
let r = ref true;;
let q = ref true;;
let rec f g =
  let y = !r in 
  let z = !q in 
  if (g y z) then failwith("Failure") else begin
    r := (not z);
    q := (not y);
    f g;
  end
in f (fun x y |-> x <> y)
\end{lstlisting}
		\caption{An \OCAML\ program $\mathtt{P}$ which never fails.}
		\label{fig:ocamlp}
	\end{subfigure}
\hspace{.05\textwidth}
	\begin{subfigure}{0.35\textwidth}
\begin{lstlisting}[style = eff]
Set((r,true);lambda_. 
Set((q,true);lambda_. 
let F = return(fix f.lambdag.
  Get(r,lambday. 
  Get(q,lambdaz.
  let x = (g y z) in 
  case(x, Raise(), 
    Set((r,not z);lambda_. 
    Set((q,not y);lambda_.
    f g ))
  )))) 
in F (lambda(x,y). x xor y)
))
\end{lstlisting}
		\caption{An \EPCF\ term $M_{\mathtt{P}}$.}
		\label{fig:LYP}
	\end{subfigure}	
\caption{}
\Description{}
\end{figure}
As can be easily realized, whenever the conditional is executed the two 
references $\mathrm{r}$ and $\mathrm{q}$ contain \emph{the same} boolean value. 
As a consequence, the Failure exception is never raised, and the program can be 
considered safe. Could we automatically verify the latter by way of 
higher-order model checking? Let us try to see if this is possible. 

We can assume $\type{Loc}$ to be a type of locations inhabited by 
$\mathrm{q}$ and $\mathrm{r}$ only, and that the program can invoke any effect-raising 
operations from the following typed signature: 
$$
\Sigma = \set{\mathbf{Get} : \type{Loc} \ararrow 2, \mathbf{Set} : \type{Loc} 
\times \type{Bool} \ararrow 1, \mathbf{Raise} : \type{Unit} \ararrow 0}. 
$$
Some standard abbreviations allow us to form the term in Figure~\ref{fig:LYP}, 
call it $M_{\mathtt{P}}$, whose structure is very similar to the one of 
$\mathtt{P}$. In doing so, we have adopted a syntax close to Simpson and 
Voorneveld's \EPCF.
Observe how \OCAML's reference commands have become algebraic operations from $\Sigma$, 
and how $\mathbf{Get}$ has arity equal to two, accounting for the fact that  
the program can proceed depending on the value read from memory. By the way, 
$M_{\mathtt{P}}$ closely corresponds to the way one would write $\mathtt{P}$ in 
languages like \EFF\ \cite{Pretnar2015:EFF}.
\begin{figure}
	\centering
	\begin{subfigure}{0.4\textwidth}
	\centering
	\includegraphics[scale=0.7]{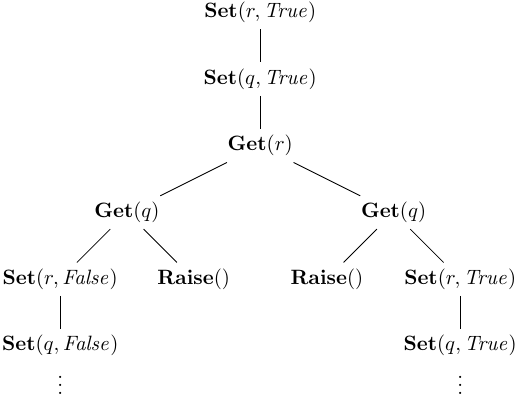}
	\caption{Tree Generated by $M_{\mathtt{P}}$.}
	\label{fig:storetree}
	\end{subfigure}
	\hspace{0.1\textwidth}
	\begin{subfigure}{0.4\textwidth}
	\centering
	\includegraphics[scale=0.7]{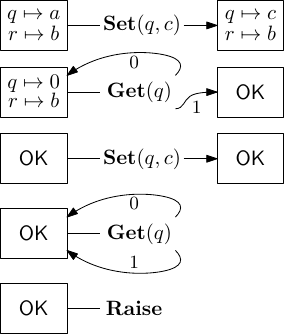}
	\caption{Automaton Capturing the safety of a program working with the 
	boolean variables $\mathrm{r}$ and $\mathrm{Q}$.}
	\label{fig:storeautomaton}
	\end{subfigure}	
	\caption{}
	\Description{}
\end{figure}

But why do algebraic operation could help in the task of verifying the safety 
of $\mathtt{P}$? Actually, the evaluation of programs which invoke algebraic 
operations naturally gives rise to a so-called \emph{effect tree}. The 
effect tree produced in output by $M_\mathtt{P}$ looks like one in 
Figure~\ref{fig:storetree}. Through it, one can verify that $\mathbf{Raise}$ is 
never executed by exhaustively considering all branches of the tree, and 
verifying that those which are somehow coherent with the store operations do 
not end in a leaf labeled with $\mathbf{Raise}$. Here, coherent branches are 
those that, e.g., when encountering $\mathbf{Get}(q)$ proceed left 
(respectively, right) depending on the last $\mathbf{Set}(q,a)$ operation 
performed. This reasoning can indeed be encoded in an alternating parity tree automata, and thus by 
a \MSO\ or $\mu$-calculus formula \cite{Gradel2003:Automata,Emerson91:TreeAutomata}. 
Let us briefly describe how this automaton can be constructed (here, by the way, we only need a top-down 
deterministic automata). Its set of states comprises all assignments of boolean values 
to the variables $\mathrm{r}$ and $\mathrm{q}$, together with a special state 
$\mathsf{OK}$. Intuitively, the latter captures incoherent states, and any 
action is accepted from there. Some example transitions are those in 
Figure~\ref{fig:storeautomaton}. The first two capture the expected behavior 
of $\mathbf{Set}$ and $\mathbf{Get}$ on stores, while the last three are there 
to model the fact that every action, including $\mathbf{Raise}$ is allowed in 
the state $\mathsf{OK}$. Of course, the latter is \emph{not} available in 
ordinary stores instead.

All in all, then, we indeed have a way to turn impure functional programs into 
terms of a calculus called \EPCF\ which generate trees which are, at least 
superficially, amenable to model checking. There is still a missing link 
though: it could well be that \EPCF\ is simply too expressive, and that the 
model-checking problem is undecidable. Fortunately, however, we can turn 
\EPCF\ programs into $\laY$-terms, for which \MSO\ model checking is indeed decidable \cite{NakamuraetAl2020:AverageCaseHardness}. 
This is precisely what we prove in Section 
\ref{sect:translationEPCFtoY} below. 

Summing up, HOMC indeed seems to be helpful when considered on effect trees, because 
the program we are interested in verifying, and arguably any term working with 
global references over finite domains can be turned into a term in the $\lambda$Y-calculus, while 
the property becomes an \MSO\ formula. But how about other effects? Can 
we turn the construction above into something more general and systematic? 
While specific kinds of effects are considered in the literature~\cite{Kobayashi2009:TypesHOMC,Kobayashi2020:Termination,Ong2015:HOMC}, no 
general result is known. In fact, each effect comes equipped with its 
intended notion of observation \cite{JohannSimpson2010:AlgebraicEffects,MatacheStaton2019:LogicAlgebraicEffects,Simpson2019:BehaviouralEquivalenceEffects,DalLago2017:EffectfulBisim}. Which ones of those are simple enough to 
guarantee that model checking useful properties stays decidable? 

For the sake of convincing the reader that the questions above are not trivial, 
let us consider another example of an effectful higher-order program, borrowed from \cite{Kobayashi2020:Termination}
\begin{lstlisting}[style = eff]
let F = return(fix f. lambda_. 
  Flip(lambda b.
    case(b, return(),
      let x = f () in let y = f() in f()
))) in F () 
\end{lstlisting}
The algebraic operation $\mathbf{Flip}(\la b. C)$ should be understood as flipping a random coin, storing the result in $b$, and continuing as $C$. Intuitively, the program described above creates a procedure $f$ that throws a coin. If the coin returns head, the procedure stop. Otherwise, the procedure is executed three times. This program generates an effect tree $T_0$ where for every $n$, the tree $T_n$ can be defined as:
\[
T_0 \triangleq 
\Tree[.$\mathbf{Flip()}$ 
[.$\return()$ ] 
[.$T_2$ ]
]
\qquad 
\qquad 
\qquad 
T_{n+1} \triangleq 
\Tree[.$\mathbf{Flip()}$ 
[.$T_n$ ] 
[.$T_{n + 3}$ ]
]
\]
Indeed, the sequential composition in the third argument of the $\case$ operator gives raise to a stack of continuations, meaning that a leaf $\return()$ is replaced by the tree computed by the continuation. Here, the tree $T_n$ should be understood as a call to the recursive function $f$ with a continuation corresponding to $n$ calls to this function. This behavior is similar to the one of a random walk, but the program described above does not use any infinite type, which is essential for Higher-Order Model Checking. In fact, this tree can be computed by a term of the $\la$Y-calculus, using a CPS translation, see Example 2.10 from \cite{Kobayashi2020:Termination}: 
\begin{align*}
	(Y (\la F,k,x. \mathbf{Flip}~(k~x)~(F~(F~(F~k))~x)))~(\la x.x)~\return
\end{align*}
Here, the use of higher-order functions is important, since the function $F$ takes as a first input a continuation, and this continuation 
determines the number of calls to the procedure that still need to be executed. 
For such a program, it is natural to wonder whether $\return$ is called or not (if the program terminates), 
but given the presence of (probabilistic) nondeterminism, there are various ways in which 
this can be spelled out. Do we mean that the program must (or may) reach 
$\return$? Or do we rather mean that the program reach $\return$ with 
probability $1$? The latter question seems the most appropriate, given that 
probabilistic choice is captured by the subdistribution monad 
$\mathsf{D}(\cdot)$ and that $\mathsf{D}(\return)$ is just a real number between $0$ 
and $1$, i.e., the probability of not diverging. If this is the case, however, 
recent results by Kobayashi, Dal Lago, and Grellois \cite{Kobayashi2020:Termination} show 
that HOMC is \emph{not} decidable in general, so the construction of an \MSO\ formula (or an automaton)
like the one above is simply not possible. May or must termination, instead, can
easily be captured: may termination consists in exploring the tree and finding at least one $\return$, while must termination 
is satisfied if the tree has no infinite branch, and all leaves are $\return$ leaves, which is a property that can be encoded as a parity condition of an alternating parity automaton.  

To sum up, we can see effectful higher-order programs as effect trees, computed by a term of the $\la$Y-calculus (or equivalently, higher-order recursion schemes) for which we know that \MSO\ model-checking is decidable. This works particularly well for some effects, for example global store with finite domains, since the properties of interest can be expressed as an \MSO\ formula. However, for some more involved effects, such as probabilistic choice, some important properties are not expressible in \MSO. The aim of this work is to take a general look at this problem, and understand the deep reason behind this discrepancy. 

\section{Preliminaries about Higher-Order Model Checking}
\label{s:preliminaries}
\subsection{Infinite Trees Generated by $\la$Y-Terms}

In Higher-Order Model Checking, models are traditionally taken to be infinite 
trees produced by so-called Higher-Order Recursion Schemes (HORS in the 
following). In this work, we rather consider B\"ohm trees generated by ground 
type terms in the $\laY$-calculus with first-order constants, which is well 
known to express the same class of trees as HORSs \cite{NakamuraetAl2020:AverageCaseHardness}. 
This section is devoted to presenting some preliminaries about the calculus and the model checking 
problem for it.

Formally, the $\laY$-calculus can be seen as the simply-typed 
$\lambda$-calculus 
extended with full recursion and with first-order function symbols:
\begin{align*}
	&\mbox{(Types)} & T,U &::= o \midd T \rightarrow U \\ 
	&\mbox{(Terms)} & M,N &::= x \midd \la x.M \midd M~N \midd Y M \midd f \in 
	\Sigma
\end{align*}
The \emph{signature} $\Sigma$ is a set of first-order constants, such that each 
$f \in \Sigma$ comes equipped with an arity $\ar(f) \ge 0$ capturing the 
fact that the type of $f$ is $\underbrace{o \rightarrow o \rightarrow 
o}_{\ar(f)} \rightarrow o$, often abbreviated as $o^{\ar(f)} \rightarrow o$. By 
an abuse of notation, we usually write $f : \ar(f)$ to specify the arity of  
$f$ in a given signature when this does not cause ambiguity.
Typing rules are standard, and can be found in Figure~\ref{f:lambdaYtypes}. 

\begin{figure}
	\begin{framed}
		\begin{center}
			\AXC{}
			\UIC{$\Gamma, x : T \p x : T$}
			\DP 
			\qquad 
			\AXC{$\Gamma, x: T \p M : U $}
			\UIC{$\Gamma \p \la x. M : T \rightarrow U$}
			\DP 
			\qquad 
			\AXC{$\Gamma \p M : T \rightarrow U$}
			\AXC{$\Gamma \p N : T$}
			\BIC{$\Gamma \p M~N : U$}
			\DP 
			\\
			\vvskip 
			\AXC{$\Gamma \p M : T \rightarrow T$}
			\UIC{$\Gamma \p Y M : T$}
			\DP 
			\qquad
			\AXC{} 
			\UIC{$\Gamma \p f : o^{\ar(f)} \rightarrow o$}
			\DP 
		\end{center}
	\end{framed}
	\caption{Typing Rules for $\la$Y-terms}
	\label{f:lambdaYtypes}
	\Description{}
\end{figure} 

We see the $\laY$-calculus as a tool to generate infinite trees. In order to 
precisely define the tree generated by a typable term, one has to define a form 
of dynamic semantics, which we here take as \emph{weak head reduction}: we 
never reduce the argument of any application, and we never evaluate terms in 
the scope of $\lambda$-abstractions. Rules are again standard, and can be found in 
Figure~\ref{f:lambdaYreduction}.
\begin{figure}
	\begin{framed}
		\begin{center}
			\AXC{}
			\UIC{$(\la x. M) N \rightarrow M[N / x]$}
			\DP 
			\qquad 
			\AXC{}
			\UIC{$Y M \rightarrow M (Y M)$}
			\DP 
			\qquad 
			\AXC{$M \rightarrow M' $}
			\UIC{$M~N \rightarrow M'~N$}
			\DP 
		\end{center}
	\end{framed}
	\caption{Weak Head Reduction Rules}
	\label{f:lambdaYreduction}
	\Description{}
\end{figure} 
It is relatively easy to see that typed closed terms in \emph{weak head normal 
form}, i.e. typed terms with no free variables that cannot be further reduced are 
precisely the (typable) terms generated by the following grammar:
$$ 
WHNF ::= \la x.M \midd f~M_1~M_2~\cdots~M_n 
$$ 
Indeed, a $\la$-abstraction is in normal form, constants are in normal form, 
and $M~N$ is in normal form if and only if $M$ is in normal 
form and $M$ is not a $\la$-abstraction. As a consequence, we can easily 
realize that terms of ground types in weak head normal form have the shape 
$f~M_1~M_2~\cdots~M_{\ar(f)}$, where each $M_i$ is itself a term of ground 
type. This naturally suggests a potentially infinite process turning any such 
term of ground type into a tree with root $f$ and $\ar(f)$ subtrees obtained by 
evaluating $M_1~\cdots~M_{\ar(f)}$, respectively. This can be made formal as 
follows:

\begin{definition}[Infinite Trees]
	The set of (potentially infinite) trees generated by a signature $\Sigma$,
	denoted $\trees$, is coinductively defined by the grammar:
	$$ t ::= f(t_1,\cdots,t_{\ar(f)}) \midd \bot $$
	where $f:\ar(f) \in \Sigma$.
 \end{definition}
The constant $\bot$ represents a non-terminating non-productive computation 
that does not generate any function symbol. As an example, if $\Sigma$ is the 
signature $\set{g : 2, f : 1, a : 0}$, one can form the non-regular infinite 
tree $g(a,g(f(a),g(f(f(a)),\cdots))$.
 We are now ready to define how any closed 
ground $\la$Y-term generates such an infinite tree:
\begin{definition}[Böhm Trees of Closed Ground Terms]
	Given a closed term $M$ of type $o$, i.e. we have $\cdot \p M : o$,  the \emph{Böhm tree} of $M$, 
	denoted $BT(M)$, is defined by way of the following 
	essentially infinitary process. Starting from $M$, we apply $\rightarrow$ 
	\emph{ad infinitum}. This can have two possible outcomes: 
	\begin{itemize}
		\item $M$ can be reduced infinitely, and in this case $BT(M)$ is simply 
		$\bot$
		\item $M$ can be reduced to a term $N$ in weak head normal form
		$N = f~M_1~M_2~\cdots~M_{ar(f)}$ such that for all $1 \le i \le n$, we 
		have $\p M_i : o$. Then, $BT(M) = 
		f(BT(M_1),BT(M_2),\cdots,BT(M_{\ar(f)}))$.
	\end{itemize}
\end{definition} 
The aforementioned process is infinitary in two different ways: the evaluation 
of $M$ can diverge, and $BT(M)$ can be infinite. The process above is 
well-defined only for closed terms of ground type and is thus less general 
than the one generating B\"ohm Trees for arbitrary terms of the $\la Y$-calculus \cite{Clairambault2013:HORS} 
(there, in particular, one has to deal with $\lambda$-abstractions).

\begin{example}
\label{ex:tree}
As an example, with the signature $\Sigma = \set{g : 2, f : 1, a : 0}$, consider the term
$$M \triangleq (Y~(\la F. \la x. g~x~(F~(f~x))))$$
We pose $M_{\mathit{step}} \triangleq (\la F. \la x. g~x~(F~(f~x)))$. By applying weak head reduction rules, we obtain 
$$M~a \rightarrow M_{\mathit{step}}~M~a \rightarrow (\la x. g~x~(M~(f~x)))~a \rightarrow g~a~(M~(f~a))$$
Thus, $BM(M~a) = g(a,BM(M~(f~a)))$, and similarly $BM(M~(f~a)) = 
g(f(a),BM(M~(f~(f~a))))$, finally we obtain that $BM(M~a) = 
g(a,g(f(a),g(f(f(a)),\cdots)))$ the infinite tree described above.
\end{example}

\subsection{Expressing Properties As Alternating Parity Tree Automata}

Now that we have defined the class of models for higher-order model checking, we can define the specification language. The gold standard in model-checking consists in properties expressed by formulas of monadic second order logic \cite{Knapik2001:MSOforHORS} (\MSO\ for short). However, several equiexpressive specification languages have been defined in the literature, notably $\mu$-calculus \cite{Walukiewicz1993:MuCalculus} and alternating parity tree automata (APT for short) \cite{Emerson91:TreeAutomata,Gradel2003:Automata}. The latter is commonly used in the HOMC literature \cite{Ong2006:HOMC,Ong2015:HOMC,KobayashiOng2009:TypeSystemHOMC}, and as examples we will define through the paper have a simple representation as automata, contrary to \MSO\ formulas, we also chose this specification language.

Given a set $X$ of variables, we define the \emph{positive Boolean formulas} over $X$, denoted $B^+(X)$, with the grammar:
$$ \phi,\psi ::= \mathtt{tt} \midd \mathtt{ff} \midd x \midd \phi \land \psi \midd \phi \lor \psi $$
with $x \in X$. A subset $Y \subseteq X$ satisfies a formula $\phi$, denoted $Y \vDash \phi$ if and only 
if the formula $\phi$ in which all elements $x \in Y$ are replaced by $\mathtt{tt}$ and all elements 
$x \notin Y$ are replaced by $\mathtt{ff}$ is semantically 
true. We can now give the formal definition of an APT automaton:

\begin{definition}[Alternating Parity Tree Automaton]
	An \emph{alternating parity tree automaton} is a tuple $\mathcal{A} = (\Sigma,Q,\delta,q_i,\Omega)$
	where 
	\begin{itemize}
		\item $\Sigma$ is a signature, defining tree constructors with their arity. 
		\item $Q$ is a \emph{finite} set of states, with $q_i \in Q$ the initial state. 
		\item $\delta$ is the transition function, such that for each $f \in \Sigma$, we have $\delta(q,f) \in B^+(\set{0,\dots,\ar(f)-1} \times Q)$. A translation $\delta(q,f)$ is thus a positive formula on a state assignment to children nodes.  
		\item $\Omega : Q \rightarrow \set{0,\dots,M}$ is the priority function, with $M \ge 0$ an integer.  
	\end{itemize}
\end{definition}

To define the acceptance condition of an automaton, we need to introduce the notion of positions in a tree. 

\begin{definition}
	Let $\Sigma$ be a set of tree constructors with maximal arity $A$. The set of all position of a tree $t$, denoted $dom(t)$, is a set of words on the alphabet $\Gamma = \set{0,\cdots,A}$ defined by: 
	$$dom(\bot) = \epsilon \qquad dom(f(t_0,\dots,t_k)) = \set{\epsilon} \cup 0 \cdot dom(t_0) \cup \cdots \cup k \cdot dom(t_k)$$
	where $\cdot$ is the concatenation operator for words. 
	
	For a tree $t$ and a position $\alpha \in dom(t)$ we define $t(\alpha) \in \Sigma \cup \set{\bot}$ by: 
	$$ \bot(\epsilon) = \bot \qquad f(t_1,\dots,t_k)(\epsilon) = f \qquad f(t_1,\dots,t_k)(n \cdot \alpha) = t_n(\alpha) $$
	essentially giving the node of the tree $t$ at position $\alpha$. 
\end{definition}

A \emph{run-tree} of an automaton $\mathcal{A}$ over a tree in $\trees$ is a tree with constructors in 
$dom(t) \times Q$. This run-tree must satisfy the following two constraints:
\begin{itemize}
	\item The root is $(\epsilon,q_i)$, representing the root of $t$ in the state $q_i$. 
	\item For any node $(\alpha,q)$ of the run-tree, there is a set $S \subseteq \set{0,\dots,\ar(t(\alpha))-1}$ which satisfies the transition $\delta(q,t(\alpha))$. And, for each $(i,q') \in S$, one of the child of $(\alpha,q)$ in the run-tree is $(\alpha \cdot i,q')$. 
\end{itemize}
Finally, we say that an infinite branch $(\epsilon,q_i) \cdots 
(\alpha_k,q_k) \cdots$ of the run tree satisfies the 
\emph{parity condition} if the largest priority that occurs 
infinitely often in $\Omega(q_i) \cdots \Omega(q_k) \cdots$ is 
even. A run-tree is \emph{accepting} if every infinite path in 
it satisfies the parity condition. And of course, a tree $t \in \trees$ is 
accepted by an automaton $\mathcal{A}$ if there exists an 
accepting run-tree for $t$.  

\begin{example}
	We define an automaton $\mathcal{A} = (\set{a :2, b : 1, c : 0},\set{q_0,q_1},\delta,q_0,\Omega)$ such that $\mathcal{A}$ accepts a tree $t$ if and only if for every path in $t$, $c$ occurs eventually after $b$ occurs. We pose: 
	\begin{itemize}
		\item For all $q \in \set{q_0, q_1}:$
		$$\delta(q,a) = (0,q) \land (1,q) \qquad \delta(q,b) = (1,q_1) \qquad \delta(q,c) = \mathtt{tt}$$
		Meaning intuitively that a node $a$ propagates the state to both children, seeing the letter $b$ changes the state to $q_1$ and a leaf $c$ is always accepted. 
		\item $$ \Omega(q_0) = 2 \qquad  \Omega(q_1) = 1$$
	\end{itemize} 
	By definition, if $q_1$ appears infinitely often in a branch of the run tree, and $q_0$ does not, then the largest priority is $1$ in this branch, and the tree does not satisfy the parity condition. Thus, with this automata, to accept a tree $t$, it must have only finite branches (that terminates with the only available leaf $c$) or infinite branches in which $b$ does not occur (in order to stay in the state $q_0$). This corresponds indeed to the property that for every path in $t$, $c$ occurs eventually after $b$ occurs".   
\end{example}

We can now define the model-checking problem and give the theorem expressed in \cite{Ong2006:HOMC}. 

\begin{definition}[\MSO\ Model-Checking for Böhm Trees (or APT Acceptance Problem)]
	Given a closed $\la$Y-term $M$ of with $\p M : o$ and an APT $\mathcal{A}$, is $BT(M)$ accepted by $\mathcal{A}$ ?
\end{definition} 

 The groundbreaking result of higher-order model checking \cite{Ong2006:HOMC} can then be summed up with the following theorem:

\begin{theorem}[Decidability \cite{Ong2006:HOMC,Ong2015:HOMC}]
	\MSO\ model-checking on Böhm Trees of closed ground terms of the $\la$Y-calculus is decidable. 
\end{theorem}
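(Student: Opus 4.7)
The plan is to reduce APT acceptance to a decidable problem via an intersection type system refining the simple types of the $\laY$-calculus, following Kobayashi and Ong's approach, which I find more palatable than Ong's original game-semantic argument. First I would translate the closed ground $\laY$-term into an equivalent higher-order recursion scheme (HORS) generating the same Böhm tree; the two formalisms are equi-expressive, so decidability for one transfers to the other, and the HORS presentation makes recursion explicit as a finite set of non-terminal productions, which is convenient for what follows.

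Second, given an APT $\mathcal{A} = (\Sigma,Q,\delta,q_i,\Omega)$, I would refine simple types by states. Base type $o$ refines to any $q \in Q$, and an arrow $T \to U$ refines to $\bigwedge_{i \in I} (\theta_i,p_i) \to \theta$, where each $\theta_i$ refines $T$, the element $\theta$ refines $U$, and the $p_i \in \Omega(Q)$ are priorities tagging each intersection component. The typing rule for a constant $f$ mirrors the transition function: to type $f~M_1 \cdots M_{\ar(f)}$ with state $q$, one picks some $S \vDash \delta(q,f)$ and types each $M_i$ with every $q'$ such that $(i,q') \in S$. The rule for $Y$ is coinductive, and the priority that a recursive call carries records which states are crossed through unfolding.

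Third, the crucial step is to prove soundness and completeness of this type system: $M$ is typable with the initial state $q_i$ iff $BT(M)$ is accepted by $\mathcal{A}$. One direction extracts an accepting run-tree from a (possibly infinite) typing derivation, reading off state assignments from the refinement annotations; the other builds a derivation from an accepting run-tree. Both directions must handle the parity condition, and this is where the main difficulty lies: proving that an infinite chain of recursive unfoldings is "accepting in the types" iff the maximum priority occurring infinitely often on the corresponding branch of the run-tree is even. Closing this equivalence cleanly relies on positional determinacy of parity games on infinite arenas.

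Finally, I would argue decidability of type inference. For a fixed HORS and a fixed APT, the simple-type skeletons of the non-terminals are fixed, and at each simple type the set of refinements built from the finite $Q$ is finite (modulo reasonable canonical forms on intersections). Whether the starting non-terminal admits a typing $q_i$ thus reduces to solving a finite parity game whose positions are candidate typing judgments for the non-terminals of the scheme, and whose moves are prescribed by the production rules; this is decidable by classical results on parity games. The main obstacle throughout is the third step, packaging the parity condition into the types without destroying decidability of inference — everything else is either syntactic translation or an appeal to established machinery on parity games.
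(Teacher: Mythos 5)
The paper does not actually prove this theorem: it is imported wholesale as a citation to Ong's 2006 result (and the 2015 survey), and everything else in the paper is built on top of it as a black box. Your sketch is therefore necessarily a different route from "the paper's proof," but it is a faithful outline of one of the two standard proofs in the literature, namely the Kobayashi--Ong intersection-type characterisation (which the paper itself cites elsewhere as \cite{KobayashiOng2009:TypeSystemHOMC}), as opposed to Ong's original argument via game semantics and traversals. Your decomposition is the right one: the passage to HORS is routine equi-expressiveness; the refinement of $o$ by states and of arrows by priority-tagged intersections is exactly the Kobayashi--Ong type language; the equivalence between typability at $q_i$ and APT acceptance is the genuinely hard lemma, and you correctly locate the difficulty in matching the parity condition along infinite unfolding chains with the priorities recorded in the (infinite, or cyclic) derivation, which is where positional determinacy of the acceptance parity game is invoked; and the final decidability step via a finite parity game over candidate judgments is sound because the set of refinement types over a fixed simple-type skeleton and a finite $Q$ and finite priority range is finite. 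What the type-theoretic route buys over the game-semantic one is a proof whose decision procedure is directly implementable (it underlies the practical model checkers the paper mentions); what it costs is that the soundness/completeness argument for the parity condition is delicate and cannot be waved through --- as a blind proof your outline is acceptable only insofar as one is permitted to appeal to that established machinery rather than reprove it.
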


The results obtained in \cite{Ong2006:HOMC} give the precise complexity of this problem, which is exponential depending on the \emph{order} of the $\la$Y-term (or, equivalently, the order of the higher-order recursion scheme \cite{NakamuraetAl2020:AverageCaseHardness}). The aim of our work is on decidability, so we will not take orders into account, but we plan to study complexity issues in the future, given that finer complexity analysis for similar problems is already available \cite{NakamuraetAl2020:AverageCaseHardness,Tsukada2014:NonDeterministic}. 

Thanks to this theorem, to show that \MSO\ model-checking is decidable for a class of infinite trees, it is sufficient to show that this class of trees can be computed as Böhm trees generated from $\la$Y-terms, and this can be done via program transformation to the $\la$Y-calculus, as we will see in many occasions starting from the next section. 

\section{Higher-Order Programs with Effects}
\label{s:EPCF}
In this section, we introduce a calculus with algebraic operations and 
fixpoints, called \EPCF. This language can be seen as a fine-grained 
call-by-value variation on Plotkin's \PCF\ endowed with effect-triggering 
operations, as described in 
\cite{Simpson2019:BehaviouralEquivalenceEffects}. We consider a finitary 
version of this language which differs from Simpson and Voorneveld's one only 
in minor ways.

\EPCF\ is built around two syntactic categories, namely \emph{values}, which 
denote data or functions, and \emph{computations}, which are instead programs 
which potentially produce effects when evaluated. Terms and types for \EPCF\  
are as follows:
\begin{align*}
 &\mbox{(Values)} & 	V,W &::= v \midd \underline{n} \midd x \midd \la x. C \midd \fix~x. V \\
&\mbox{(Computations)} & 	C,B,A &::= V~W \midd \return(V) \midd \letin{x = C}~B \midd \sigma(V;x.C) \midd \\
&&&\phantom{::= }\case(V;C_1,\dots,C_k) \\
 &\mbox{(Types)} & 	T,U &::= \type{B} \midd \type{k} \midd T \rightarrow U
\end{align*}
We suppose given a set of \emph{finite} ground types, ranged over by 
$\type{B}$. Examples include the unit type and the type of 
booleans. We similarly assume a set $\mathcal{V}$ of constant values each of a 
ground type, ranged over by $v$, e.g. a unique 
constructor $() \in \mathcal{V}$ with its associated type $\type{Unit}$. 
We also assume finite enumeration types $\type{k}$ to be present, each of them 
inhabited by the values 
$\underline{0},\underline{1},\dots,\underline{k-1}$. We introduce 
enumeration types, which support pattern matching, just to keep them 
distinct from finite base types. 

Algebraic operations have the shape $\sigma(V;x.C)$ where $V$ is said to be a 
\emph{parameter}, and $C$ represents the \emph{continuation} of the 
computation. The 
variable $x$ is bound in $C$ and represents the choice made by the algebraic 
operation. We suppose given a signature set $\Sigma$ of symbols for algebraic 
operations, where each $\sigma \in \Sigma$ is given an arity $\type{B} 
\ararrow k$, meaning that $\sigma$ is an algebraic operation of arity $k$ 
depending on a parameter of type $\type{B}$. In an operation of arity $k$, the type 
of $x$ in the continuation $C$ is $\type{k}$. In particular, defining 
$k$ different continuations $C_1,\dots,C_k$ depending on the possible values of 
$x$ can be done by way of the computation $\sigma(V;x. 
\case(x,C_1,\dots,C_k))$. As common in call-by-value calculi, the fixpoint operator $\fix~x. V$ is a 
value and can only be attributed a function type without limiting the 
expressive power of the language. Finally, the shape of computations is 
restricted in that the only way to compose computations is to use the 
$\mathtt{let}$ constructor. For example, if we want to apply the result of the 
computation $D$ to the function computed by $C$, we need to define a 
computation 
$$\letin{x = D}~\letin{y = C}~y~x \qquad \text{or} \qquad \letin{y = C}~\letin{x = D}~y~x $$
As can be seen from this example, the order of evaluation is imposed by 
sequencing, a property which turns out to be crucial in presence of effects, 
as we will see when looking at the semantics. 

As for the type system of \EPCF, which is pretty natural and standard, 
we give some of the rules in Figure~\ref{f:EPCFtypes}.
\begin{figure}
	\begin{framed}
		\begin{center}
			\AXC{$ v : \type{B} \in \mathcal{V}$}
			\UIC{$\Gamma \p v : \type{B}$}
			\DP 
			\qquad 
			\AXC{$n < k$}
			\UIC{$\Gamma \p \underline{n} : \type{k}$}
			\DP 
			\qquad 
			\AXC{$\Gamma, x : T \rightarrow U \p V : T \rightarrow U$}
			\UIC{$\Gamma \p \fix~x.V : T \rightarrow U$}
			\DP 
			\qquad 
			\AXC{$\Gamma \p V : T$}
			\UIC{$\Gamma \p \return(V) : T$}
			\DP 
			\\
			\vvskip 
			\AXC{$\Gamma \p C : T$}
			\AXC{$\Gamma, x : T \p D : U$}
			\BIC{$\Gamma \p \letin{x = C}~D : U$}
			\DP 
			\qquad 
			\AXC{$(\sigma : \type{B} \ararrow k) \in \Sigma$}
			\AXC{$\Gamma \p V : \type{B}$}
			\AXC{$\Gamma, x : \type{k} \p C : T$}
			\TIC{$\Gamma \p \sigma(V;x.C) : T$}
			\DP 
			\\
			\vvskip 
			\AXC{$\Gamma \p V : \type{k}$}
			\AXC{$(\Gamma \p C_i : T)_{1 \le i \le k}$}
			\BIC{$\Gamma \p \case(V;C_1,\dots,C_k) : T$}
			\DP 
		\end{center}
	\end{framed}
	\caption{Typing Rules for \EPCF}
	\label{f:EPCFtypes}
	\Description{}
\end{figure}
The typing rule for algebraic operations stipulates that an effect does not 
change the type of the whole computation, which is however arbitrary. 
Intuitively this is because calling an algebraic operation induces an effect 
$\sigma$, depending on the parameter $V$, and the computation will continue as 
$C$ after throwing this effect. 

We now introduce the dynamic semantics of \EPCF\ using the reduction relation 
in Figure~\ref{f:EPCFsemantics}.
\begin{figure}
	\begin{framed}
		\begin{center}
			\AXC{}
			\UIC{$(\la x. C)~W \rightarrow C[W/x]$}
			\DP 
			\qquad 
			\AXC{}
			\UIC{$(\fix~x.V)~W \rightarrow (V[(\fix~x.V)/x])~W$}
			\DP 
			\\ \vvskip 
			\AXC{}
			\UIC{$\case(\underline{n},C_0,\cdots,C_{k-1}) \rightarrow C_n$}
			\DP 
			\\ \vvskip 
			\AXC{}
			\UIC{$\letin{x = \return(V)}~C \rightarrow C[V / x]$}
			\DP 
			\qquad 
			\AXC{$C \rightarrow C'$}
			\UIC{$\letin{x = C}~B \rightarrow \letin{x = C'}~B$}
			\DP 
			\\
			\vvskip  
			\AXC{}
			\UIC{$\letin{x = \sigma(V;y.C)}~D  \rightarrow \sigma(V;y. \letin{x = C}~D)$}
			\DP
		\end{center}
	\end{framed}
	\caption{Semantics for \EPCF}
	\label{f:EPCFsemantics}
	\Description{}
\end{figure}
The cornerstone of this relation is the last rule, allowing to algebraic 
operations to ``percolate out'' of non-trivial evaluation contexts. 
Intuitively, this rule means that if we have a computation of the shape 
$\letin{x=C}~D$, and $C$ is a computation that throws an effect $\sigma$, then 
this effect is immediately visible from the outside, its continuation now 
incorporating the $\mathtt{let}$ expression at hand. If we go back to the 
aforementioned example describing the composition of computations, namely the 
pair of terms
$$
\letin{x = D}~\letin{y = C}~y~x \qquad\qquad \letin{y = C}~\letin{x = D}~y~x 
$$
we can see that in the first case, the effects thrown in $D$ appears first, and 
then the one in $C$, and conversely for the other case. 
\begin{example}
	Let us consider a computation which flips a fair coin, then returning the 
	the exclusive or of the obtained value and itself (which is thus always 
	going to be true). All this makes use of an algebraic operation 
	$\mathbf{Flip}$ having arity $\type{unit}\ararrow 2$ and of a binary 
	function $\oplus$ computing the exclusive or, and which can anyway be 
	easily written as a term itself:
	$$
	\letin{x = \mathbf{Flip}(\cdot;y. y)}~(x \oplus x) 
	$$
	In a call-by-value setting, this program should flip a coin, to obtain 
	$\underline{0}$ or $\underline{1}$ thus always returning $\underline{0}$. 
	Indeed, the dynamic semantics allows the $\mathbf{Flip}$ operation to be 
	visible from the outside:
	$$
	\letin{x = \mathbf{Flip}(\cdot;y. y)}~(x \oplus x) \rightarrow 
	\mathbf{Flip}(\cdot;y. \letin{x = y}~ x \oplus x)
	$$
	There are two observations one needs to make now. The first is that the
	continuation $\letin{x = y}~ x \oplus x$ returns $\underline{0}$ for all 
	values of $y$, as expected, but since it occurs as the argument of 
	$\mathbf{Flip}$ it cannot be further reduced, and its evaluation must be 
	somehow be captured differently. The second one is that in 
	presence of effects, confluence is lost. If, indeed, we pass 
	$\mathbf{Flip}(\cdot;y. y)$ unevaluated to the right-hand side, in a 
	call-by-name fashion, we land on a totally different term, namely one which 
	flips \emph{two} coins, possibly returning $\underline{1}$:
	$$
	\letin{x = \mathbf{Flip}(\cdot;y. y)}~(x \oplus x) \rightarrow_{name} 
	(\mathbf{Flip}(\cdot;y. y) \oplus \mathbf{Flip}(\cdot;y. y)).
	$$
\end{example}
The example above implicitly tells us that, similarly to what happens in the 
$\laY$-calculus, the terms of \EPCF\ generate infinite trees that we 
commonly call \emph{effect trees}, but that the dynamic semantics does not by 
itself take care of the unfolding: indeed, typed closed computations in 
normal form in \EPCF\ are either of the shape $\return(V)$, which is expected, 
or $\sigma(V;x.C)$, meaning that evaluation apparently stops as soon as an 
algebraic operation is encountered. The effect tree of a computation is what we 
are looking for as is defined as follows:
\begin{definition}[Effect Trees for \EPCF\ Computations]
	\label{d:effecttrees}
	For an algebraic effect signature $\Sigma$ and a type $T$, let us define
	the signature $\Sigma_{T}$ as follows:
	$$
	\Sigma_{T} = \set{\sigma : k + 1 \mid ((\sigma : \type{B} \ararrow k) \in 
	\Sigma)} \cup \set{\return(V) : 0 \mid\;\p V : T} \cup \set{v : 0 \midd v \in \mathcal{V}}
	$$
	Note that $\Sigma_{T}$ is infinite in the general case, but that it is 
	guaranteed to be finite when $\mathcal{V}$ is finite and $T$ is a ground 
	(and thus finite) type. The effect tree of a typed computation $\p C : T$, 
	denoted $ET(C)$, is a tree in $\etrees{T}$ defined by the following 
	essentially infinite process. First of all, reduce $C$ ad infinitum, and 
	there are \emph{three} possible outcomes: 
	\begin{itemize}
		\item If $C$ can be reduced infinitely without reaching a normal form, then $ET(C) = \bot$
		\item If $C \rightarrow^* \return(V)$ then $ET(C) = \return(V)$
		\item If $C \rightarrow^* \sigma(v;x.C)$ with $\sigma: \type{B} \ararrow k$ then $ET(C) = \sigma(v,ET(C[x:= \underline{0}]),\cdots,ET(C[x := \underline{k-1}]))$ 
	\end{itemize}
\end{definition}

This definition of effect trees differs a bit from the one given in the literature by the use of a special branch for parameters. Formally, standard effect trees \cite{Simpson2019:BehaviouralEquivalenceEffects,MatacheStaton2019:LogicAlgebraicEffects} would be defined with the signature: 
$$\Sigma'_{T} = \set{\sigma_v : k \mid ((\sigma : \type{B} \ararrow k) \in 
\Sigma) \land (\p v : \type{B} \in \mathcal{V})} \cup \set{\return(V) : 0 \mid \p V 
: T}$$ 
with the expected definition when seeing an effect $\sigma(v;x.C))$. Our 
definition is not harmful compared to this usual notion of effect trees in 
which operations are indexed by their parameters because we are in a finite 
case, where the parameters can only be a constant value for some finite type 
$\type{B}$. In particular, any \MSO\ formula on a standard tree with indexed 
effects for can be transferred to an equivalent formula on effect trees with 
branching for parameters, see Section~\ref{s:logic}, 
Proposition~\ref{t:MSObranchingtrees} for more details. Anyway, one can remark that 
this definition is similar to the notion of Böhm trees we defined before. And 
indeed, we show that we can relate the notion of effect trees in \EPCF\ with 
the notion of Böhm trees in $\la Y$-calculus. Formally, we need to use a CPS 
translation similar to the one described in 
\cite{Matache2018:MasterThesis,MatacheStaton2019:LogicAlgebraicEffects}. 

\begin{example}
	As an example, we can compute a tree with similarities with the one of 
	Example~\ref{ex:tree} in the $\la$Y-calculus. We take $\Sigma 
	=\set{\sigma_f : \type{Unit} \ararrow 1, \sigma_g : \type{Unit} \ararrow 
	2}$, and the \EPCF\ computation 
	$$C \triangleq \fix~F.(\la x. \sigma_g((),n. \case(n;,x~(),F (\la z. \sigma_f((),\_.x~()))))$$
	For the sake of clarity, let us ignore all the parameters for algebraic 
	effect, which do not bring anything interesting here because the parameter 
	is always $()$. Then, we have:
	$$C~(\la z. \return(())) \rightarrow \sigma_g((\la z. \return(()))~(),C~(\la z'. \sigma_f((\la z. \return(()))~())~()))$$
	And if we continue the computation, we can see that the effect tree $t$ of $C~(\la z. \return(())) $ is
	$$ t \triangleq \sigma_g(\return(()),\sigma_g(\sigma_f(\return(())),\cdots) $$
	Note that because of the call-by-value paradigm, we need to use thunk 
	functions, something which is not needed in the $\la$Y-calculus. As 
	expected, in the translation from \EPCF\ to $\la$Y-calculus, we have to 
	take into account of all this along the CPS translation. 
\end{example}

\section{From EPCF to the $\lambda Y$-Calculus}
\label{sect:translationEPCFtoY}
It is now time to turn the informal arguments we made at the 
end of the last section formal. Indeed, translating any \EPCF\ 
computation into a term of the $\laY$-calculus is indeed 
possible, as we are going to show. 

The fact that \EPCF\ ground values, arities, and parameters are 
all taken from finite types implies that for every algebraic 
effect signature $\Sigma$, the signature $\Sigma_Y$ defined as 
$\set{v : 0 \mid v \in \mathcal{V}} \cup \set{\sigma : k + 1 
\mid ((\sigma : \type{B} \ararrow k) \in \Sigma)}$ is itself 
finite. This enables the defining of a transformation from 
\EPCF\ to the $\lambda Y$-calculus with constants in 
$\Sigma_Y$, which we first give on types:

\begin{definition}[CPS Type Transform]
	For the sake of conciseness, we use $\neg T$ to denote the type $T \rightarrow o$. We pose 
	$$ \type{B}^* = o \qquad \type{k}^* = \neg o^k  \qquad  (T \rightarrow U)^* = T^* \rightarrow \neg U^* \rightarrow o $$
	We extend this function to contexts, with $\emptyset^* = \emptyset$ and $(\Gamma, x : U)^* = \Gamma^*, x : U^*$.
\end{definition}
Observe how base types and enumeration types are translated in 
two essentially different ways, and that arrows are translated 
following the usual double-negation scheme. The following 
Definition captures how our translation works on \emph{terms}:
\begin{definition}[CPS Term Transformation]
	We define $V^*$ and $C^*$ with the intuition that if $\Gamma \p V : T$ and $\Gamma \p C : U$ then $\Gamma^* \p V^* : T^* $ and $\Gamma \p C^* : \neg \neg U^*$. 
	\begin{align*}
		v^* &= v \text{ when } v \in \mathcal{V} 
		&(\la x. C)^* &= \la x. C^* \\
		x^* &= x 
		&(\fix x.V)^* &= Y (\la x. V^*) \\
		\underline{n}^* &= \la x_0,\dots,x_{k-1}. x_n
		&\case(V;C_1,\dots,C_k)^* &= \la c. V^*~(C_1~c)~\cdots~(C_k~c) \\
		(V~W)^* &= \la c. V^*~W^*~c 
		&(\letin{x = C}~B) &= \la c. C^*~(\la x. B^*~c) \\  
		\return(V)^* &= \la c. c~V^* &&
	\end{align*}
	$$ \sigma(V;x.C)^* = \la c. \sigma~V^*~(C^*[x := \underline{0}^*]~c)~\cdots~(C^*[x := \underline{k-1}^*]~c)$$
\end{definition}

Remark that this translation can be done because of our choice 
of effect trees with a special branch for parameters, indeed it 
is not possible at runtime to know exactly the value of the 
parameter $V$ of an algebraic operation, and with our 
definition of effect trees, we can postpone the identification 
of $V^*$ by just passing it as an argument. This translation is 
well-typed, as shown by the following lemma, whose proof is 
straightforward:
\begin{lemma}
	If $\Gamma \p V : T$ then $\Gamma^* \p V^* : T^* $ and if $\Gamma \p C : U$ then $\Gamma \p C^* : \neg \neg U^*$.
\end{lemma}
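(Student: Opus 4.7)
The plan is to prove both statements simultaneously by mutual induction on the typing derivations of $V$ and $C$ as presented in Figure~\ref{f:EPCFtypes}. Each inductive case amounts to constructing a corresponding $\laY$-typing derivation for the translated term, using the translated context and the target type dictated by the CPS type transform; the proof is essentially a bookkeeping exercise showing that the translation commutes with the typing rules of Figure~\ref{f:lambdaYtypes}.

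For the value cases, the constant case $v \in \mathcal{V}$ is immediate since $\Sigma_Y$ contains $v : 0$ and $\type{B}^* = o$. The variable case is trivial from $(\Gamma, x : T)^* = \Gamma^*, x : T^*$. For the numeral $\underline{n}$ of type $\type{k}$, one directly checks that $\la x_0, \dots, x_{k-1}. x_n$ inhabits $\type{k}^* = \neg o^k$, which in the unfolded form is $o \rightarrow \cdots \rightarrow o \rightarrow o$ with $k+1$ occurrences of $o$. The abstraction case uses the induction hypothesis on the body to conclude $\Gamma^*, x : T^* \p \la x. C^* : T^* \rightarrow \neg\neg U^* = (T \rightarrow U)^*$, and the fixpoint case applies the $Y$-rule of Figure~\ref{f:lambdaYtypes} to $\la x. V^*$ of type $(T \rightarrow U)^* \rightarrow (T \rightarrow U)^*$.

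For the computation cases, each translation introduces a continuation variable $c : \neg U^*$ in order to produce a term of type $\neg\neg U^*$. In $(V~W)^*$ the induction hypotheses yield $V^* : T^* \rightarrow \neg\neg U^*$ and $W^* : T^*$, so $V^*\, W^*\, c : o$; in $\return(V)^*$ one applies $c$ to $V^*$ directly; in the $\mathtt{let}$ case, $\la x. B^*\, c$ has type $\neg T^*$ and is fed to $C^* : \neg\neg T^*$; and in the $\case$ case the scrutinee $V^* : \neg o^k$ is applied to the $k$ values $C_i^*\, c : o$. The algebraic operation case is the one mild subtlety worth flagging: the constant $\sigma \in \Sigma_Y$ has arity $k+1$ and receives the parameter $V^* : o$ together with $k$ continuations obtained from $C^*$ by substituting the variable $x$, whose translated type is $\type{k}^* = \neg o^k$, by the numerals $\underline{n}^*$, which inhabit precisely $\neg o^k$. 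This substitution is type-preserving by the standard substitution lemma for the $\laY$-calculus, so the resulting term is typable of type $\neg\neg T^*$ as required. No other case raises any genuine difficulty; the whole argument is a routine line-by-line verification that the CPS translation of terms matches the CPS translation of types.
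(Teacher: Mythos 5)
Your proof is correct and is exactly the routine mutual induction on typing derivations that the paper itself dismisses as ``straightforward'' without giving details; all the cases, including the only mildly delicate one (the algebraic-operation case, where the substitution $C^*[x := \underline{n}^*]$ is type-preserving because $\underline{n}^* : \neg o^k = \type{k}^*$), check out against the type transform. The only cosmetic slip is writing $\Gamma^*, x : T^* \p \la x.\, C^* : T^* \rightarrow \neg\neg U^*$ in the abstraction case, where $x$ should of course be discharged from the context.
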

Then, we would like to show that this translation is a 
simulation. But, in order to do that, we need to take into 
account that there are some reductions that need to be done 
\emph{outside} weak head reduction in $\lambda$Y, but those 
reductions are actually \emph{administrative} 
\cite{Plotkin94:CPS,Groote1994:CPSTranslation,Hillerstorm2017:Continuation}.
 Formally, we introduce a new reduction relation 
$\rightarrow_{\mathit{adm}}$ by way of the following rules:

\begin{center}
	\AXC{}
	\UIC{$(\la c. M_i)~c \rightarrow_{\mathit{adm}} M_i$}
	\DP 
	\qquad 
	\AXC{$f$ constant}
	\AXC{$(M_i \rightarrow_{\mathit{adm}} N_i)_{1 \le i \le n}$}
	\BIC{$f~M~M_1~\cdots~M_k \rightarrow_{\mathit{adm}} 
	f~M~N_1~\cdots~N_k$}
	\DP 
\end{center}
Intuitively, an administrative reduction here consists in applying a $\beta$-rule, and possibly in the argument of an application (which not possible in head reduction). We take this small set of rules as it is exactly what we need to have the simulation, and it should be intuitive that applying those rules has no consequence in the construction of a Böhm tree of a $\la$Y-term. Remark that we could have defined the transformation of a $\sigma$ using $(\la x. C^*)~\underline{0}^*$ instead of directly $C^*[x := \underline{0}^*]$ if we wanted to avoid using substitution in the encoding. But this would induce more administrative reductions.  

\begin{lemma}
	\label{l:simulationEPCF}
	If $C \rightarrow B$ then for all continuation $c$, $C^*~c 
	\rightarrow^+ M$ such that $B^*~c 
	\rightarrow_{\mathit{adm}}^{\le 2} M$. 
\end{lemma}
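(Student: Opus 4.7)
The plan is to proceed by induction on the derivation of $C \rightarrow B$, case-splitting on the six reduction rules of Figure~\ref{f:EPCFsemantics}. Before starting the induction, I would establish the auxiliary substitution lemma $(D[W/x])^* = D^*[W^*/x]$ together with its analogue for values, by a straightforward structural induction: every clause of the CPS translation is homomorphic on free variables, so the equality is essentially syntactic.

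For each of the five non-contextual axioms, the argument is a direct symbolic unfolding of the CPS image on both sides. In the $\beta$-rule $(\la x.D)\,W \rightarrow D[W/x]$, for instance, $C^*\,c$ unfolds to $(\la c'.(\la x.D^*)\,W^*\,c')\,c$, which reduces in two weak-head steps to $D^*[W^*/x]\,c$; by the substitution lemma this is literally $B^*\,c$, so no administrative step is required. The cases for $\fix$, $\case$, and $\letin{x = \return(V)}~D$ proceed similarly, using the $Y$-unfolding rule in the first case and the fact that $\underline{n}^* = \la x_0 \ldots x_{k-1}.\,x_n$ in the second; each requires at most one admin step on the $B^*\,c$ side. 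The delicate axiom is the sigma-bubbling rule $\letin{x=\sigma(V;y.D)}~E \rightarrow \sigma(V;y.\letin{x=D}~E)$: both sides converge to a common term $M$ of the shape $\sigma\,V^*\,(D^*[y:=\underline{0}^*]\,(\la x. E^*\,c))\cdots(D^*[y:=\underline{k-1}^*]\,(\la x. E^*\,c))$. From $C^*\,c$ this takes two weak-head steps (first stripping the outer $(\la c'.\ldots)\,c$, then firing the $\lambda$ produced by the $\sigma$-translation); from $B^*\,c$ it takes exactly two administrative steps: one to strip the outer $(\la c'.\ldots)\,c$, and one congruence-under-$\sigma$ step that discharges the $k$ inner $(\la c''.\ldots)\,c$ redexes in parallel---precisely the purpose of the congruence-under-constants admin rule. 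A minor but essential subtlety is that $y$ does not occur free in $E$, so the substitutions $[y:=\underline{i}^*]$ slide past $E^*$ harmlessly under $\alpha$-conversion.

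The only truly inductive case is the let-context rule $\letin{x=D}~E \rightarrow \letin{x=D'}~E$ with $D \rightarrow D'$. Both $C^*\,c$ and $B^*\,c$ have the shape $(\la c'.(\cdot)^*\,(\la x. E^*\,c'))\,c$: a single weak-head step (respectively a single admin step) brings them to $D^*\,(\la x.E^*\,c)$ and $D'^*\,(\la x.E^*\,c)$, and the induction hypothesis, instantiated at the continuation $(\la x. E^*\,c)$, closes the diagram at a common reduct $N$. The main obstacle throughout is the careful bookkeeping between weak-head and administrative reductions so as to keep the stated bound on admin steps tight; this is made tractable by the very uniform shape of the CPS image, which always exposes an outer $(\la c'.\ldots)\,c$ redex at the head and, for operation symbols, permits parallel admin reductions under the constant.
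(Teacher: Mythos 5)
Your overall strategy is exactly the paper's: induction on the derivation of $C \rightarrow B$, a syntactic substitution lemma for the CPS translation, and the observation that the only rule genuinely requiring administrative work is the $\sigma$-bubbling rule, for which both sides meet at $\sigma~V^*~(D^*[y:=\underline{0}^*]~(\la x.\,E^*~c))\cdots(D^*[y:=\underline{k-1}^*]~(\la x.\,E^*~c))$, reached in two weak-head steps from $C^*~c$ and two administrative steps (one outer $\beta$, one parallel congruence under the constant $\sigma$) from $B^*~c$. Your treatment of the axioms is correct and in fact more careful than the paper's own one-line claim: as you note, the $\fix$ case does require one administrative step to strip the outer $(\la c'.\,V'^*~W^*~c')~c$ redex, whereas the paper asserts that only the $\sigma$ rule needs any.

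There is, however, one genuine gap, located precisely in the only case where the induction hypothesis is invoked. For $\letin{x = D}~E \rightarrow \letin{x = D'}~E$ you spend one administrative step bringing $B^*~c$ to $D'^*~(\la x.\,E^*~c)$ and then appeal to the induction hypothesis, which may itself consume up to two further administrative steps (it consumes exactly two when $D \rightarrow D'$ is a $\sigma$-bubbling step). That is three steps in total, so the bound $\rightarrow_{\mathit{adm}}^{\le 2}$ is not closed under your induction. The overrun is not an artefact of sloppy accounting: weak head reduction from $C^*~c$ is deterministic, so the common reduct $M$ is forced, and for $C = \letin{x = (\letin{x' = \sigma(V;y.A)}~A')}~E$ one checks that three administrative steps from $B^*~c$ are genuinely necessary, because the outermost continuation redex is not under a constant and therefore cannot be contracted in parallel with the inner ones by the congruence rule. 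The required number of administrative steps in fact grows with the nesting depth of $\mathtt{let}$s around the bubbling operation. To be fair, the paper's proof is silent on this composition and carries the same defect; the honest repair is to weaken the conclusion to $B^*~c \rightarrow_{\mathit{adm}}^{*} M$ (or a depth-indexed bound), which is all the subsequent theorem on B\"ohm trees uses, since administrative reductions never create or destroy a head constant. Your stated aim of keeping the bound of two tight is thus the one claim in your argument that does not survive scrutiny.
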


\begin{proof}
	The proof can be done by induction on $C \rightarrow B$. 
	All cases are straightforward and do not need 
	administrative reduction except for the rule $\letin{x = 
	\sigma(V;y.C)}~D  \rightarrow \sigma(V;y.\letin{x = C}~D)$, 
	in which we do need administrative reductions: 
	\begin{align*}
	&\la c. \sigma~V^*~((\la c. C^*[y := \underline{0}^*]~(\la 
	x. D^*~c))~c)~\cdots~((\la c. C^*[y := 
	\underline{k-1}^*]~(\la x. D^*~c))~c)~c \\ \rightarrow^{\le 
	2}_{\mathit{adm}} &\sigma~V^*~(C^*[y := 
	\underline{0}^*]~(\la x. D^*~c))~\cdots~(C^*[y := 
	\underline{k-1}^*]~(\la x. D^*~c))
	\end{align*}
\end{proof}
This means in particular that if a computation $C$ never 
reaches a normal form, then $C^*~c$ does not too. Indeed, the 
only case when we use the administrative reduction is the rule 
for a let with an algebraic operation, and if this rule can be 
used in a computation, then the computation will have a normal 
form, beginning with this effect. Then, by comparing the normal 
form of $C$ and $C^*~c$, we obtain the following theorem. 

\begin{theorem}
	For any $\p C : T$, for all $\p c : T^* \rightarrow o$, $BT(C^*~c) = ET(C)[\return(V) \leftarrow BT(c~V^*)]$
\end{theorem}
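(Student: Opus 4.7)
The plan is to proceed by coinduction on the tree equation, with a case analysis driven by the three possible outcomes describing $ET(C)$ in Definition~\ref{d:effecttrees}. Before tackling the cases, I would establish two preparatory facts. First, a straightforward induction on $C$ would show that the CPS translation commutes with substitution of the enumeration constants used in the effectful case, namely $(C[x := \underline{i}])^* = C^*[x := \underline{i}^*]$, for every $i < k$. Second, I would observe that every administrative reduction is simply an ordinary $\beta$-step (performed at the head or at the outermost level of an argument of a constant); by the standard confluence and Böhm-tree theory of the $\lambda Y$-calculus, $M \to_{\mathit{adm}}^* N$ therefore implies $BT(M) = BT(N)$. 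Combined with the fact that weak head reduction also preserves Böhm trees, this would let me iterate Lemma~\ref{l:simulationEPCF} and conclude that whenever $C \to^* C'$ we have $BT(C^* c) = BT(C'^* c)$.

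With these tools in place the case analysis is short. In the divergent case, iterating the simulation shows that $C^* c$ admits an unbounded weak head reduction sequence without reaching a weak head normal form, hence $BT(C^* c) = \bot = ET(C)$, and the substitution is trivial on $\bot$. In the $\return(V)$ case, one further $\beta$-step gives $(\return(V))^* c = (\lambda c'.\, c'\, V^*)\, c \to c\, V^*$, so $BT(C^* c) = BT(c\, V^*)$, which is exactly what replacing $\return(V)$ by $BT(c\, V^*)$ yields on the right-hand side. In the $\sigma(v;x.C')$ case, unfolding the translation and discharging admin steps reduces $C^* c$ to the weak head normal form $\sigma\, v\, (C'^*[x := \underline{0}^*]\, c)\, \cdots\, (C'^*[x := \underline{k-1}^*]\, c)$, using that $v \in \mathcal{V}$ is ground so $v^* = v$. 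Its Böhm tree therefore has root $\sigma$, parameter-branch leaf $v$, and $i$-th continuation subtree $BT(C'^*[x := \underline{i}^*]\, c)$; combining the substitution lemma with the coinductive hypothesis rewrites each such subtree into $ET(C'[x := \underline{i}])[\return(V) \leftarrow BT(c\, V^*)]$, exactly matching the definition of $ET(C)$ in case~(iii).

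The main obstacle I expect is the bookkeeping around administrative reductions. Lemma~\ref{l:simulationEPCF} only relates $C^* c$ to $B^* c$ up to a bounded number of admin steps, and iterating it along an arbitrarily long reduction sequence of $C$ lets these residues accumulate. The cleanest fix is the once-and-for-all Böhm-tree invariance of $\to_{\mathit{adm}}$ sketched above, after which the simulation can be chained without worrying about intermediate syntactic discrepancies. Once this invariance is in place, the remainder is a routine structural comparison between the two trees at each coinductive step.
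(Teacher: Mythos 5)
Your overall strategy is the one the paper itself follows: use Lemma~\ref{l:simulationEPCF} to transport the reduction of $C$ to a weak-head reduction of $C^*~c$ modulo administrative steps, then compare the resulting (weak head) normal forms case by case against the three clauses of Definition~\ref{d:effecttrees}. Your two preparatory facts are both sound and genuinely needed: the substitution lemma $(C[x := \underline{i}])^* = C^*[x := \underline{i}^*]$ is implicit in the paper's definition of $\sigma(V;x.C)^*$, and the observation that administrative steps are $\beta$-steps preserving Böhm trees is exactly the right way to discharge the residues in the $\return$ and $\sigma$ cases. Those two cases of your coinduction are correct and match the intended argument.

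The one place where your argument has a gap is the divergent case. Your stated tool is that $C \rightarrow^* C'$ implies $BT(C^*~c) = BT(C'^*~c)$; applied along an infinite reduction $C \rightarrow C_1 \rightarrow C_2 \rightarrow \cdots$ this only yields $BT(C^*~c) = BT(C_n^*~c)$ for every $n$, which is compatible with all of these Böhm trees being some common non-$\bot$ tree. To conclude $BT(C^*~c) = \bot$ you must actually exhibit an unbounded weak-head reduction from $C^*~c$, and chaining the simulation lemma for that purpose runs into precisely the admin-residue accumulation you identify: the endpoint $M_n$ of the $n$-th stage is only admin-related to the starting point $C_n^*~c$ of the next. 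The paper closes this with a separate observation: administrative reductions are needed \emph{only} when simulating the rule $\letin{x = \sigma(V;y.B)}~D \rightarrow \sigma(V;y.\letin{x=B}~D)$, and once that rule fires the \EPCF\ computation reaches a normal form of the shape $\sigma(\cdots)$ within finitely many further steps. Hence along an infinite reduction of $C$ no administrative step ever occurs, the simulation is step-for-step exact, and $C^*~c$ literally admits an infinite (deterministic) weak-head reduction. You should either add this observation or prove a postponement/commutation lemma for $\rightarrow_{\mathit{adm}}$ against weak-head reduction; Böhm-tree invariance alone does not settle this case.
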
 
Thus, $BT(C^*~c)$ simulates perfectly the effects of $C$, but for the return values, instead of a leaf we have a tree representing the effect applied to this value by the continuation. As a consequence, using standard results of Higher-Order Model Checking, we have:

\begin{corollary}
 \label{t:EPCFdecidability}
 	For any $\p C : T$ and $\p c : T^* \rightarrow o$, \MSO\ model-checking on $ET(C)[\return(V) \leftarrow BT(c~V^*)]$ is decidable.   
\end{corollary}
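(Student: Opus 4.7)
The plan is to reduce this corollary directly to the preceding theorem and to Ong's decidability result. The theorem immediately before the corollary gives the syntactic identity $BT(C^*~c) = ET(C)[\return(V) \leftarrow BT(c~V^*)]$, so the first step is simply to observe that \MSO\ model-checking on the right-hand side is literally the same problem as \MSO\ model-checking on $BT(C^*~c)$. Nothing has to be reproved about the logic itself: a formula holds on one tree if and only if it holds on the other.

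Next, I would check that $C^*~c$ satisfies the hypotheses of the decidability theorem for the $\laY$-calculus, namely that it is a closed term of ground type $o$ built on a finite signature. Closedness is inherited from $\p C : T$ and $\p c : T^* \rightarrow o$. For the type, the typing lemma for the CPS transform gives $\p C^* : \neg \neg T^*$, i.e. $\p C^* : (T^* \rightarrow o) \rightarrow o$, and applying it to $c$ of type $T^* \rightarrow o$ yields $\p C^*~c : o$. Finiteness of the underlying signature $\Sigma_Y = \set{v : 0 \mid v \in \mathcal{V}} \cup \set{\sigma : k+1 \mid (\sigma : \type{B} \ararrow k) \in \Sigma}$ follows from finiteness of $\mathcal{V}$, of $\Sigma$, and of each arity $k$, which were all assumed when defining \EPCF.

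Having dispatched these bookkeeping points, the third and last step is to invoke Ong's theorem on $BT(C^*~c)$, which yields the decidability of \MSO\ model-checking on this Böhm tree and hence, by the identity of the first step, on $ET(C)[\return(V) \leftarrow BT(c~V^*)]$.

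I do not expect a real obstacle here: the nontrivial content has already been isolated in the preceding theorem (soundness of the CPS translation with respect to tree semantics) and in Ong's theorem (decidability for $\laY$). The only point deserving a line of justification is the finiteness of $\Sigma_Y$, which is what makes Ong's theorem applicable and is also the reason we restricted \EPCF\ to finite ground and enumeration types in the first place.
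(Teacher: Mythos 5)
Your proof is correct and follows exactly the route the paper intends: the preceding theorem identifies $ET(C)[\return(V) \leftarrow BT(c~V^*)]$ with $BT(C^*~c)$, and the corollary is then an immediate application of Ong's decidability theorem to the closed ground $\laY$-term $C^*~c$ over the finite signature $\Sigma_Y$. The bookkeeping you add (typing of $C^*~c$ at $o$ and finiteness of $\Sigma_Y$) is exactly the right justification, which the paper leaves implicit.
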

In particular, if the continuation always return a new constant 
$K$ for example, then $BT(C^*~c)$ is equal to $ET(C)$ with all 
return leaves replaced by $K$. Also, if the type $T$ is a 
ground type, then $c$ can be taken as the identity, and we have 
exactly decidability of \MSO\ model-checking for $ET(C)$. In general, the set of all 
possible values in the form $\return(V)$ for a type $T$ is 
infinite, as they could be $\la$-abstractions for example, but 
the previous theorem gives us a degree of freedom as we can 
always chose the continuation that we want in order to 
distinguish some of those values. 

\section{Specifications Through Observations}
\label{s:logic}
In the previous section, we proved that \EPCF\ computations can 
be translated into equivalent (in the sense of the tree they 
generate) $\laY$-terms. But how can we actually \emph{exploit} 
this result for the sake of verifying interesting properties 
about the effect produced by \EPCF\ programs? Of course, we can 
automatically verify whether any \MSO\ formula has (the effect 
tree generated by) any \EPCF\ term as its model. But, is it 
the end of the story?

As already, mentioned, there are many proposals in the 
literature~\cite{PlotkinPretnar2008:LogicAlgebraicEffects,Simpson2019:BehaviouralEquivalenceEffects,MatacheStaton2019:LogicAlgebraicEffects} about \emph{logics} specifically designed 
for the sake of expressing properties of interest for effectful 
(possibly higher-order) programs. In this 
paper we will somehow follow the path 
recently hinted at by Simpson and 
Voorneveld~\cite{Simpson2019:BehaviouralEquivalenceEffects}. 

Effect trees as we have used them so far are not subject to any 
\emph{equational theory}, the latter often being associated to 
effects 
and justified by a monadic 
interpretation~\cite{Moggi1998:ComputationalMonads}. As an 
example, the following equations hold in the state monad, 
itself a way to interpreting the first example from 
Section~\ref{s:informal}: 
$$ \mathbf{Set}(\ell,\mathtt{true},\_.\mathbf{Get}(\ell,y.C)) = \mathbf{Set}(\ell,\mathtt{true},\_.C[y := \underline{1}]) \quad \mathbf{Set}(\ell,a,\_.\mathbf{Set}(\ell,b,\_.C)) = \mathbf{Set}(\ell,b,\_.C)  $$
As pointed out in 
\cite{Simpson2019:BehaviouralEquivalenceEffects}, 
we can capture equations between effect trees agnostically, 
without the need of explicitly referring to monads, through the 
notion of an \emph{observation}, namely a set of 
\emph{properties} of effect trees:
\begin{definition}[Observations \cite{Simpson2019:BehaviouralEquivalenceEffects}]
	For a given set of algebraic operations $\Sigma$, we define a set of 
	\emph{observations} as a set, denoted $\mathcal{O}$, such that each $o \in 
	\mathcal{O}$ is a subset of the effect trees of type $\type{unit}$. 
\end{definition}
On top of a set of observations $\mathcal{O}$, Simpson and 
Voorneveld define a logic 
endowed with a modal operator of the form $o\phi$, where $o$ is 
an element of $\mathcal{O}$ and 
$\phi$ is a formula. An effect tree satisfies $o\phi$ if it is 
among the trees in $o$ \emph{and} all its leaves satisfy 
$\phi$. This way, the logic can \emph{observe} the tree at 
hand through the lenses of the observations in $\mathcal{O}$, 
implicitly accounting for equations: it is necessary to define 
$\mathcal{O}$ such that all its elements are invariant by the 
desired set of equations. The resulting logic is 
quite powerful, being infinitary in nature, and turns out to 
precisely capture observational equivalence 
\cite{DalLago2017:EffectfulBisim} for (a calculus closely 
related to) \EPCF. Noticeably, most notions of effects, 
including nondeterministic, probabilistic and state effects can 
be captured this 
way~\cite{Simpson2019:BehaviouralEquivalenceEffects} 

Unfortunately, however, the aforementioned logic is strictly 
more powerful than \MSO\ or the $\mu$-calculus.  
This happens for two distinct reasons: on the one hand, it is 
essentially infinitary in nature, this way accounting for the 
fact that the target calculus has \emph{countable} types. On 
the other hand, the underlying set of observations 
$\mathcal{O}$ can be arbitrarily complex, thus injecting a huge 
discriminating power into the logic, something that \MSO\ 
cannot do natively.

Motivated by all that, we address the following question in the 
rest of this section: can we express useful properties of 
effect trees as observations \emph{from within} \MSO\ itself? In 
other words, is it the case that for a given notion of effect, all the 
elements of $\mathcal{O}$ are captured by \MSO\ formulas? This 
way, we could at least be sure, given the results from 
Section~\ref{sect:translationEPCFtoY}, that model-checking is 
decidable \emph{relative to} $\mathcal{O}$. 
\commentout{
Typical sets of observations for common algebraic effects can 
be found in 
\cite{Simpson2019:BehaviouralEquivalenceEffects}. The results in \cite{Simpson2019:BehaviouralEquivalenceEffects}
show in particular that observational equivalence for effectful computation 
 can be captured by logical equivalence 
for a logic based on this notion of observation. In particular, this 
shows that having this notion of observation is a good way to 
reason on effect trees up to the equational theory of the algebraic 
effects. Note that the logic defined in \cite{Simpson2019:BehaviouralEquivalenceEffects}
is more expressive than \MSO/$\mu$-calculus, and thus it cannot be used 
as it for a model-checking problem for effect trees. Conversely, 
\MSO\ formulas on trees can express properties of effect 
trees that does not represent the actual behavior of the 
computation under the equational theory. For example, with the 
effect tree of Section~\ref{s:informal}, we could ask the \MSO\ property "does $Raise$
appears in the tree", which would be true here, even if in a practical run of the 
program, because of the equational theory of the state monad, $Raise$ is never 
reached. However, the important question in practice, because we have 
the decidability property of Corollary~\ref{t:EPCFdecidability}, is "Can we 
express useful properties on effect trees with \MSO\ formulas ?". 
This question is qualitative, and thus we cannot give a definitive mathematical 
answer, but we believe that because \MSO\ is the common standard for model-checking, 
and that observations is a very useful logical object for expressing properties on 
effect trees, as seen in $\cite{Simpson2019:BehaviouralEquivalenceEffects,MatacheStaton2019:LogicAlgebraicEffects}$
then, being able to express \MSO\ formulas and observations should be a decent standard for 
expressivity.} 
This is precisely what motivates the following definition of 
an extension of \MSO\ (or $\mu$-calculus) with observation.

\begin{definition}[Observation Predicate]
	Suppose given a signature $\Sigma$ for effect trees, containing a set of algebraic operations, and a set of other possible 
	constructors (that could come, e.g., from parameter values, 
	or special constants of the $\la$Y-calculus). Fix a set of 
	observations $\mathcal{O}$.
	Let $t \in \trees$, and $P$ a subset of all the possible constructors of arity $0$. We define 
	$t[\in P]$ as the tree obtained from  $t$ by replacing the 
	leaves in $P$ with $\return(())$ and all the other leaves 
	with $\bot$. We next define, for each $o \in \mathcal{O}$ 
	and each $P$, a formula $o_P$, called an observation 
	predicate, such that for any effect tree t, we have 
	$$ t \vDash o_P \text{ iff } t[\in P] \in o$$
	Note that we have indeed that $t[\in P]$ is an effect tree 
	of type $\type{Unit}$, thus it can be an element of $o$.
	Write $\MSOp{\mathcal{O}}$ for the conservative extension 
	of \MSO\ (on infinite trees) obtained by enriching the class of 
	formulas with observation predicates.
\end{definition}
This notion of observation predicate having access to a set $P$ 
is, thanks to finiteness, equiexpressive to the formulas $o 
\phi$ defined in 
\cite{Simpson2019:BehaviouralEquivalenceEffects}, in which 
$\phi$ is supposed to be a formula that separates the correct 
leaves from the incorrect ones in an \emph{infinite} set of 
leaves. Here, because of finiteness, we have a finite 
number of such separations, thus we can take a set $P$ instead 
of a formula $\phi$.  

$\MSOp{\mathcal{O}}$ can thus be seen as a reasonable logic for 
writing down specifications about higher-order effectful 
programs. But is the model checking problem emerging out of it 
decidable? There are at least two possible answers to this 
question: 
\begin{itemize}
\item 
	Either the observation predicates from $\mathcal{O}$ are 
	definable in \MSO\ \emph{itself}, in this case 
	$\MSOp{\mathcal{O}}$ is equiexpressive with \MSO, and the 
	model checking problem remains decidable.
\item 
	Or observation predicates are not \MSO-definable, and we 
	cannot conclude, with the concrete risk of landing into an 
	undecidable verification problem.
\end{itemize}

We will now see some cases of effects and observations, each of 
them corresponding to one of the two cases above. Let us start 
from an example of effect which is hard to capture.

\begin{example}[Probabilistic Observations]
	The typical counter-example of effects to \MSO-definability 
	of observation predicates comes from probabilistic 
	effects. The set of observations for probabilistic choice, 
	with a unique 
	operation $\mathbf{Flip} : \type{unit} \ararrow 2$, is given by 
	$\mathcal{O} = \set{o_{>q} \midd q \in \mathbb{Q}, 0 \le q 
	< 
	1}$.
	Essentially, an observation predicate $o_{>q}$ means that 
	the probability that 
	the program terminates is greater than $q$. This is  
	\emph{not} a property which can be defined in \MSO\ on 
	effect trees, and this happens for very good reasons, since 
	we know that higher-order model checking for probabilistic 
	HORSs is in general undecidable 
	\cite{Kobayashi2020:Termination}). 
\end{example}

Before going into positive examples, we need to solve a small discrepancy between the literature on algebraic effects, especially with the standard definition of effect trees \cite{Simpson2019:BehaviouralEquivalenceEffects}. Indeed, as explained in Section~\ref{s:EPCF}, the Definition~\ref{d:effecttrees} that introduces effect trees in our work is slightly different from the standard notion of effect tree, which was essential for the translation of Section~\ref{sect:translationEPCFtoY}. However, in practice we would like to work directly with standard effect trees, as we did in Section~\ref{s:informal}, since they are simpler and corresponds to the literature. Concretely, we can show that our definition is not harmful for the question of \MSO\ model-checking, since we can transfer any \MSO\ properties (or APT) on standard effect trees to properties on effect trees with a special branch for parameters. Formally, we have:

\begin{definition}[Adding a branch for parameters]
	For any tree $t$ defined with the signature 
	$$\Sigma' = \set{\sigma_v : k \mid ((\sigma : \type{B} \ararrow k) \in \Sigma) \land (\p v : \type{B} \in \mathcal{V})} \cup S$$
	where $S$ is a set of possible other constructors (coming from a chosen continuation in the $\la$Y-calculus according to Corollary~\ref{t:EPCFdecidability} for example), we define the tree $t_p$ with the signature
	$$\Sigma = \set{\sigma : k + 1 \mid ((\sigma : \type{B} \ararrow k) \in \Sigma)} \cup \set{v : 0 \midd v \in \mathcal{V}} \cup S$$
	as the tree $t$ in which all nodes $\sigma_v(t^1,\cdots,t^k)$ are replaced by the tree $\sigma(v,t^1_p,\cdots,t^k_p)$. This translation corresponds exactly to the translation of a standard effect tree to the one we introduced in Definition~\ref{d:effecttrees}.
\end{definition}

We can show that any formula (or APT) that is definable in standard effect tree in $\treess$ is definable for the effect trees in $\trees$ of Definition~\ref{d:effecttrees}. 

\begin{proposition}[Effect Trees and APT]
	\label{t:MSObranchingtrees}
	 
	For any APT $\mathcal{A}$ that recognizes \emph{standard} effect trees in $\treess$, there exists an APT $\mathcal{A}_p$ that recognizes the set $\set{t_p \in \trees \midd t \text{ is accepted by } \mathcal{A}}$
\end{proposition}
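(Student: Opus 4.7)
The plan is to construct $\mathcal{A}_p$ by adding, on top of the states of $\mathcal{A}$, a finite collection of auxiliary ``parameter-check'' states and by reorganizing the transition function so that, whenever $\mathcal{A}_p$ reads a node $\sigma$ in $t_p$, it first guesses the parameter value $v$, sends a parameter-check copy of itself to the distinguished parameter child (the one labelled $v$), and simultaneously runs the transition that $\mathcal{A}$ would have used at $\sigma_v$ on the remaining children.

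Write $\mathcal{A} = (\Sigma', Q, \delta, q_i, \Omega)$ and put $\mathcal{A}_p = (\Sigma, Q \cup \{c_v \mid v \in \mathcal{V}\}, \delta_p, q_i, \Omega_p)$, where the set $\mathcal{V}$ of parameter values is finite because all base types are finite. For the check states, I set $\delta_p(c_v, v) = \mathtt{tt}$, $\delta_p(c_v, w) = \mathtt{ff}$ for every leaf constructor $w \neq v$ (including constructors coming from $S$), and $\delta_p(c_v, \sigma) = \mathtt{ff}$ for every operation symbol $\sigma$; the priority $\Omega_p(c_v)$ is set to any even value, which is harmless since check states occur only at leaves and never produce infinite branches. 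For an operation $\sigma$ of arity $k$ in $\Sigma$, I define
\[
\delta_p(q, \sigma) \;=\; \bigvee_{v \in \mathcal{V} : \type{B}} \Bigl( (0, c_v) \,\land\, \mathrm{shift}\bigl(\delta(q, \sigma_v)\bigr) \Bigr),
\]
where $\mathrm{shift}$ is the obvious renaming sending each atom $(i, q') \in \{0, \dots, k-1\} \times Q$ of $\delta(q, \sigma_v)$ to the atom $(i+1, q') \in \{1, \dots, k\} \times Q$, reflecting the fact that the original $k$ children in $t$ become children $1, \dots, k$ in $t_p$ after the parameter leaf is inserted in position $0$. For every other constructor $s \in S$ and every leaf of $\Sigma'$, the transition is the same as in $\mathcal{A}$. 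Finally, $\Omega_p$ extends $\Omega$ on the original states.

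Correctness is then a direct translation of run-trees. Given an accepting run-tree $\rho$ of $\mathcal{A}$ on $t$, I build a run-tree $\rho_p$ of $\mathcal{A}_p$ on $t_p$ by copying $\rho$ node-for-node on the positions that come from $t$ and, at each node labelled by some $\sigma_v$, selecting the disjunct indexed by $v$ in $\delta_p$ and attaching an extra child $(\alpha \cdot 0, c_v)$ pointing at the (leaf) parameter child of $\sigma$ in $t_p$; this extra child is accepting by construction of $\delta_p(c_v, v)$. Conversely, any accepting run-tree $\rho_p$ of $\mathcal{A}_p$ on $t_p$ must, at each internal node of label $\sigma$, select some disjunct $v$ whose check-conjunct is satisfied, and since $\delta_p(c_v, w) = \mathtt{ff}$ for $w \neq v$, this forces $v$ to be exactly the parameter value present in $t_p$; deleting the check subtrees and undoing the shift yields a valid accepting run-tree of $\mathcal{A}$ on $t$. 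Because check states appear only at the very bottom of branches and the other priorities are unchanged, the parity condition is preserved in both directions.

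The only delicate point is to make sure that the disjunction in $\delta_p(q, \sigma)$ actually forces the ``right'' parameter to be chosen rather than allowing the automaton to cheat by guessing a value different from the one at position $0$; this is precisely what the conjunct $(0, c_v)$ and the falsifying clauses $\delta_p(c_v, w) = \mathtt{ff}$ for $w \neq v$ ensure. Everything else is bookkeeping on indices and on the straightforward extension of $\delta$ to the extra leaves coming from $S$, which are not touched by the translation $t \mapsto t_p$.
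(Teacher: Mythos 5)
Your construction is essentially identical to the paper's: the check states $c_v$ are the paper's $q_v$, your $\mathrm{shift}$ is the paper's $[+1]$ index renaming, and the transition at $\sigma$ is the same disjunction over parameter values of a parameter-check conjunct and the shifted original transition. The proof is correct; your bidirectional run-tree translation just spells out in more detail the equivalence the paper states informally.
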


\begin{proof}
	The proof consists in adding a special state $q_v$ for each variable $v \in \mathcal{V}$ and define transition as a big \emph{or} depending on the value of the parameter. 
	Formally, we define $\mathcal{A}_p = (\Sigma,Q_p,\delta_p,q_i,\Omega_p)$ such that:
	\begin{itemize}
		\item $Q_p \triangleq Q \sqcup \set{q_v \midd v \in \mathcal{V}}$
		\item $\Omega_p(q) = \Omega(q)$ for $q \in Q$ and $\Omega_p(q_v) = 0$.
		\item For the transition function we have:
		\begin{itemize}
			\item $\delta_p(q,s) = \delta(q,s)$ if $s \in S$  
			\item For all $\sigma : \type{B} \ararrow k$, we have $\delta_p(q,\sigma) = \bigvee\limits_{v : \type{B} \in \mathcal{V}} (\delta(q,\sigma_v)[+1] \land (0,q_v))$ where $\phi[+1]$ is the formula $\phi$ in which all couples $(n,r)$ are
			replaced by $(n+1,r)$.
			\item For all $v \in \mathcal{V}$, $\delta_p(q_v,v) = \mathtt{tt}$ and $\delta_p(q_v,f) = \mathtt{ff}$ for all $f \ne v$.  
		\end{itemize}
	\end{itemize}
		Thus, intuitively, when we reach a node $\sigma$, only one of all the $(0,q_v)$ will be satisfied, the one corresponding to the actual parameter $v$ of $\sigma$. And then, the only satisfiable formula in this big or would be $\delta(q,\sigma_v)[+1] \land (0,q_v)$, meaning that the transition is logically equivalent to $\delta(q,\sigma_v)[+1]$. Thus, we can see that the new transition $\delta_p(q,\sigma)$ is equivalent to the actual formula $\delta(q,\sigma_v)[+1]$ where $v$ is the parameter of $\sigma$ in the tree $t_p$, and from this it is straightforward to see that $\mathcal{A}_p$ recognizes the set $\set{t_p \midd t \text{ is accepted by } \mathcal{A}}$. 
\end{proof}

Let us now turn to examples of observations which are indeed \MSO-definable. 

\begin{example}[Non-determinism]
	Let us consider the unique effect $\set{or : \type{Unit} \ararrow 2}$. The notion of observations for non-determinism is given by the two modalities $\diamond$ and $\square$, where $\diamond_P$ is the set of effect trees that has at least one finite branch with a leaf in $P$, and $\square_P$ is the set of effect trees that has only finite branches, and all of them has a leaf in $P$. 
	
	Those observations are definable by an APT. For the sake of the example, we explain informally the encoding of $\square_P$. We take a unique state $q_i$. To impose finite branching everywhere, it is sufficient to pose $\Omega(q_i) = 1$, because with this no infinite branch satisfies the parity condition. Then, We can just explore through the tree, with the transition $\delta(q_i,or) = (0,q_i) \land (1,q_i)$. And finally, we can separate leaves with $\delta(q_i,a) = \mathtt{tt}$ if $a \in P$ and $\delta(q_i,a) = \mathtt{ff}$ if $a \notin P$ for all leaves $a$. 
\end{example} 

\begin{example}[Finite State Monad]
Going back to the examples used in Section~\ref{s:informal}, we start with the state monad, where the
effects are $\mathbf{Get}$ and $\mathbf{Set}$, with finite domains. As expressed in \cite{Simpson2019:BehaviouralEquivalenceEffects},
the set of observations for this effect is defined by 
$$ \mathcal{O} = \set{o_{p \mapsto q} \midd p,q \text{ are memory states}} $$ 
with the intuitive semantics that $o_{p \mapsto q}$ contains all effect trees 
mapping the state $p$ to the state $q$ (this computation is deterministic 
because the initial state is fixed, so we always know which branch should be 
taken on a $\mathbf{Get}$).
It is easy to see that the observations predicates coming from this set are 
definable in an APT, because an observation predicate $(o_{p \mapsto q})_P$ can 
be represented by an automaton starting from the state $p$, that accepts only when 
the unique path leads to a leaf $x \in P$ in the state $q$, similarly to the automata
defined in Section~\ref{s:informal}. Thus, as expected, 
we can deduce that many interesting properties can be decided on programs using 
a finite state monad, since the observations are \MSO-definable. 
\end{example}

\section{A Calculus of Effects and Handlers}
\label{s:handlers}
An algebraic operation in a language such as \EPCF\ comes out 
\emph{uninterpreted}: a 
computation produces an effect tree, in which all algebraic operations are tree 
constructors, and no meaning is given to operations in this tree. The notion of 
observation can indeed be seen as a way to at least identify trees which should 
be considered the same, implicitly capturing equations between them. 
Observations, however, attribute a \emph{static} meaning to programs, and the 
programmer has no control over them. There is however yet another way of 
attributing meaning to effects in which the 
programmer has direct control on what happens. This can be seen as an 
abstraction on the notion of an exception handler, but 
also of, e.g., the \HASKELL's monad construction. We are referring of course to 
the so-called \emph{effect handlers} \cite{KammarICFP2013:Handlers}. 

In this section, we present a calculus obtained by endowing \EPCF\ with 
effect handlers, called \HEPCF, whose definition closely follows the 
literature on the subject~\cite{KammarICFP2013:Handlers,Hillerstorm2017:Continuation}. This is necessary to even understand what 
doing model checking actually means in presence of handlers, this way paving 
the way for positive and negative results about it, which are deferred to the next 
Section.

The language \HEPCF\ differs from \EPCF\ both at the level of terms and at the 
level of types. The former are extended with the $\mathtt{handle}$ clause, in 
which a computation is evaluated in a protected environment, in such a way as 
to be 
able to capture (and handle) the effects it raises. The latter are enriched 
with annotations keeping track of which ones among the operations are visible. 
The syntax of \HEPCF\ is defined as follows:
\begin{align*}
 &\mbox{(Types)}&	T,U &::= \type{B} \mid k \mid T \rightarrow_E U \\
 &\mbox{(Effects)}&	E &::= \emptyset \midd \set{\sigma : \type{B} \ararrow k} \cup E \\
 &\mbox{(Values)}&	V,W &::= v \midd \underline{n} \midd x \midd \la x. C \midd \fix~f. V \\
 &\mbox{(Computations)}&	C,D &::= V~W \midd \sigma(V; x.C) \midd \return(V) \midd \letin{x = C}~D \\
 &&&\phantom{::=} \midd  \handle{H}~C \midd \case(V;C_1,\dots,C_k) \\ 
 &\mbox{(Handlers)}&	H &::= \set{\return(x) \mapsto C_r} \cup \set{\sigma_i(x;r) \mapsto C_i \mid 1 \le i \le n}
\end{align*}

We add a new information in the type $T \rightarrow_E U$, denoting the set $E$ 
of available algebraic operation the function at hand can possibly perform once 
applied to an argument of type $T$. An handler consists of a return clause 
$\return(x) \mapsto C_r$ and one clause $\sigma_i(x;r) \mapsto C_i$ for each 
handled operation $\sigma_i$ in some effect set $E$. The intuition behind such 
an handler is that it takes a computation using the effects in $\set{\sigma_i 
\mid 1 \le i \le n}$, and interprets each call to the algebraic operation 
$\sigma_i$ as the computation $C_i$ in which the continuation is passed as a 
variable $r$. The return computation $C_r$ is called when the initial 
computation returns a value.
 
Selected rules of the type system are in Figure~\ref{f:HEPCFtyping}. A judgment 
attributing a type to a computation $C$ now has the shape $\Gamma \p_E C : T$, 
meaning that $C$ has the type $T$ in the effect context $E$. 
\begin{figure}
	\begin{framed}
		\begin{center}
			\AXC{$\Gamma, x : T \p_E C : U $}
			\UIC{$\Gamma \p \la x.C : T \rightarrow_E U$}
			\DP 
			\qquad 
			\AXC{$\Gamma, f : T \rightarrow_E U \p V : T \rightarrow_E U $}
			\UIC{$\Gamma \p \fix~f.V : T \rightarrow_E U$}
			\DP 
			\\
			\vvskip  
			\AXC{$\Gamma \p V : T \rightarrow_E U$}
			\AXC{$\Gamma \p W : T$} 
			\BIC{$\Gamma \p_E V~W : U$}
			\DP 
			\qquad   
			\AXC{$(\sigma : \type{B} \ararrow k) \in E$}
			\AXC{$\Gamma \p V : \type{B}$}
			\AXC{$\Gamma, x : k \p_E C : T$}
			\TIC{$\Gamma \p_E \sigma(V;x.C) : T$}
			\DP 
			\\
			\vvskip  
			\AXC{$\Gamma \p V : T$}
			\UIC{$\Gamma \p_E \return(V) : T $}
			\DP 
			\qquad
			\AXC{$\Gamma \p_E C : U$}
			\AXC{$\Gamma, x : U \p_E D : T $} 
			\BIC{$\Gamma \p_E \letin{x = C}~D : T $}
			\DP
			\\ 
			\vvskip 
			\AXC{$\Gamma \p V : k$}
			\AXC{$(\Gamma \p_E C_i : T)_{1 \le i \le k}$}
			\BIC{$\Gamma \p_E \case(V;C_1,\dots,C_k) : T$}
			\DP
			\qquad  
			\AXC{$\Gamma \p H : U _{E'}\Rightarrow_E T$}
			\AXC{$\Gamma \p_{E'} C : U$}
			\BIC{$\Gamma \p_E \handle{H}~C : T$}
			\DP 
			\\ 
			\vvskip  
			\AXC{$E' = \set{\sigma_i : \type{B}_i \ararrow k_i \mid 1 \le i \le n}$}
			\AXC{$\Gamma, x : U \p_E C_r : T $}
			\AXC{$\Gamma, x : \type{B}_i, r : k \rightarrow_E T \p_E C_i : T$}
			\TIC{$\Gamma \p \set{\return \mapsto C_r} \cup \set{\sigma_i(x;r) \mapsto C_i \mid 1 \le i \le n} : U _{E'}\Rightarrow_E T$}
			\DP 
		\end{center}
	\end{framed}
	\caption{Static Semantics of HEPCF}
	\label{f:HEPCFtyping}
	\Description{}
\end{figure}
As the reader can see in the rule for a call to an algebraic operation, a 
computation of type $\Gamma \p_E C : T$ can only give rise to the operations in 
$E$. As expected, a handler takes a computation $C$ of some type $U$ that uses 
the effects in $E'$ and transforms this computation into a computation of type 
$T$ using the effects in $E$. This operation is very similar in principles to 
an application, that is why we describe the type of an handler with an arrow 
type. In details, the type $U _{E'}\Rightarrow_E T$ means that all the 
computations defined in the typed handler should use only the effect in $E$, 
and the handled operations are exactly those in $E'$. In such a context, a 
continuation for an algebraic operation is given as the variable $r$ of type $k 
\rightarrow_E T$, meaning that this continuation is, as expected, of type $T$ 
with effects in $E$, and there are $k$ different continuations of this type, 
corresponding to the $k$ possible branches of this algebraic operation. 

Finally, the dynamic semantics of \HEPCF\ is described in 
Figure~\ref{f:HEPCFSemantics} and Figure~\ref{f:EPCFsemantics}.
\begin{figure}
	\begin{framed}
		\begin{center}
			\AXC{$C \rightarrow C'$} 
			\UIC{$\handle{H}~C \rightarrow \handle{H}~C' $}
			\DP 
			\\ 
			\vvskip  
			\AXC{$H \equiv \set{\return(x) \mapsto C_r} \cup \set{\sigma_i(x;r) \mapsto C_i \mid 1 \le i \le n}$}
			\UIC{$\handle{H}~\return(V) \rightarrow C_r[x := V]$} 
			\DP 
			\\
			\vvskip 
			\AXC{$H \equiv \set{\return(x) \mapsto C_r} \cup \set{\sigma_i(x;r) \mapsto C_i \mid 1 \le i \le n}$}
			\UIC{$\handle{H}~\sigma_i(V;x.C) \rightarrow C_i[x := V, r := (\la x. \handle{H}~C)] $}
			\DP 
		\end{center}
	\end{framed}
	\caption{Semantics for HEPCF}
	\label{f:HEPCFSemantics}
	\Description{}
\end{figure}
As expected, handling a returned value consists in calling the return clause, 
and for the case of an algebraic operation, it is important to note that the 
handler acts also on the continuation. As for the definition of an effect tree, 
we can see that as before, a closed typed computation $\p_E C : T$ in normal 
form is either an algebraic operation in $E$ or $\return(V)$, thus we can 
define effect trees similarly to Definition~\ref{d:effecttrees}, with the 
signature $E$ instead of $\Sigma$. Note that we can see the typing $\p H : T 
_{E'}\Rightarrow_E U$ as defining a tree operation from effect trees on $E'$ of 
type $T$ to effect trees on $E$ of type $U$. In practice, this transformation 
takes a tree defined by $\p_{E'} C : T$ and gives a tree defined by $\p_E 
\handle{H}~C : U$. We will see in the next section that, unfortunately, the 
tree transformations that can be expressed by handlers form a very large class, 
definitely too broad for our purposes.

\section{Model Checking Handled Effects: Positive and Negative Results}
\label{s:modelcheckinghandlers}
In the previous section, we introduced effect handlers, a very powerful and 
elegant linguistic construction by which the interpretation of algebraic 
operations can somehow be delegated to the programmer. In this section that 
this notion of handlers is too expressive for our purposes, meaning that \MSO\ 
model-checking of effect trees produced by terms of \HEPCF\ is \emph{not} 
decidable. We will also show that when handlers have a restricted but 
non-trivial form model checking becomes decidable.

\subsection{Undecidability Through the Halting's Problem}
The proof of undecidability is structured around the encoding of Plotkin's 
\PCF\ into \HEPCF. The encoding is not trivial, because \PCF\ is taken in its 
full generality (thus including a type of natural numbers) while \HEPCF\ only 
has \emph{finite} base types. Since the halting problem for \PCF\ is well-known 
to be undecidable, and can be written down as an \MSO\ formula, undecidability 
of the HOMC problem for \HEPCF\ easily follows. Formally, we introduce standard 
\PCF\ as the following language:
\begin{definition}[Plotkin's \PCF]
	The \PCF\ language (in fine-grained call-by-value) is defined by the following grammar:
	\begin{align*}
	&\mbox{(Types)}&	T &::= \type{Nat} \mid T \rightarrow U \\
	&\mbox{(Values)}&	V,W &::= 0 \midd \mathtt{succ}(V) \midd x \midd \la x. C \midd \fix~f. V \\
	&\mbox{(Computations)}& C,D &::= V~W \midd \return(V) \midd \letin{x = C}~D \midd \case(V;0 \mapsto C;\mathtt{succ}(x) \mapsto D) 
	\end{align*}
\end{definition}
This is a well-known language for which the halting problem is not decidable: 
encoding all partial recursive functions is an easy exercise. We omit the 
definition of the static and dynamic semantics, which are anyway natural and 
very standard.

In order to prove the undecidability of \HEPCF\, we show a translation from 
\PCF\ to \HEPCF. Again, note that the crucial difference between the two 
languages is that \PCF\ has access to an infinite type $\type{Nat}$ for natural 
numbers, with a pattern matching constructor, whereas in \HEPCF\ there are 
handlers and algebraic effects but all base types must be finite. 
But how could we encode the infinite type of natural numbers using finite 
types, effects and handlers? In order to do this, we introduce the effect $E = 
\set{\sigma : \type{Unit} \ararrow 1}$. For the sake of simplicity, we will 
ignore the parameter for this operation, of type $\type{Unit}$. Intuitively, we 
will represent a natural number $n$ by a computation in which this operation 
$\sigma$ is called exactly $n$ times, and then the computation halts by 
returning the only inhabitant of the type $\type{Unit}$. But, because a natural 
number $n$ is supposed to be a value, we will use a thunk function $f_n$ along 
the encoding, and each time we need to inspect the value of a natural number 
$n$, we can call $f_n()$ to produce the effect tree corresponding to the tree 
representation of this number $n$. 

\begin{definition}[Translation from PCF to \HEPCF]
For any type $T$ of \HEPCF\, we define the \emph{zero} of this type, denoted 
$\mathtt{Z}_T$ as follows:
$$ \mathtt{Z}_{\type{Unit}} = () \qquad \mathtt{Z}_k = \underline{0} \qquad \mathtt{Z}_{T \rightarrow_E U} = \la x. \return(\mathtt{Z}_U)   $$
Intuitively, the zero of a type is just a particular closed value of this type 
chosen arbitrarily, in which no effects are used. For the sake of clarity, we 
introduce a notation for several $\la$-abstractions and several applications in 
a row:
$$ \la f,g. C \triangleq \la f. \return(\la g. C) \qquad V~W_1~W_2 \triangleq \letin{x = V~W_1}~(x~W_2)$$
for which we obtain, as expected, that 
$(\la f,g. C)W_1~W_2 \rightarrow^+ C[f := W_1][g := W_2]$. The translation 
$\sem{\cdot}$ from \PCF\ to finitary \HEPCF\ is then given by
$$ \sem{\type{Nat}} = \type{Unit} \rightarrow_E \type{Unit} \qquad \sem{T \rightarrow U} = \sem{T} \rightarrow_E \sem{U}$$
\begin{align*}
	\sem{0} &\triangleq \mathtt{Z}_{\sem{Nat}} 
	&\sem{\mathtt{succ}(V)} &\triangleq \la x. \sigma(y. \sem{V} ()) \\
	\sem{x} &\triangleq x 
	&\sem{\la x. C} &\triangleq \la x. \sem{C} \\ 
	\sem{\fix f. V} &\triangleq \fix f. \sem{V}  
	&\sem{V~W} &\triangleq \sem{V}~\sem{W} \\ 
	\sem{\return(V)} &\triangleq \return(\sem{V}) 
	&\sem{\letin{x = C}~D} &\triangleq (\letin{x = \sem{C}}~\sem{D}) 
\end{align*}   
$$\sem{\case(V;0 \mapsto C;\mathtt{succ}(n) \mapsto D)} \triangleq \letin{a = (\handle{H(C,D)}~\sem{V}~())}~(a~(\la f,g. g~()))$$
where $H(C,D)$ is defined, when $C$ and $D$ have type $T$, by: 
\begin{align*}
	\return(V) &\mapsto \return(\la p. p~\mathtt{Z}_{\type{Unit} \rightarrow \type{Unit}}~(\la x. \sem{C})) \\
	\sigma(r) &\mapsto \letin{n = \return(\la x. \letin{y = r~\underline{0}~(\la f,g. \letin{z = f~()}~\mathtt{Z}_{\sem{T}})}~\return(\mathtt{Z}_{\type{Unit}}))} \\
	&\phantom{\mapsto }~ \return(\la p. p~(\la x. \sigma(z. n~()))~(\la x. \sem{D})) 
\end{align*}
\end{definition}

Let us now give some hints about the translation. As expected, zero is just the 
function that return $()$ when called, and a successor is a call to the effect 
$\sigma$. In order to do the pattern matching for a natural number $V$ given as 
a thunk function $\sem{V}$, we produce the effect tree representing $n$, by 
calling $\sem{V}()$, and we handle it. This handler produces a pair of thunk 
computations. The first computation is just a copy of the initial computation 
$\sem{V}()$. This is rather obvious for the return clause, and for the case of 
$\sigma$, the value $n$ corresponds intuitively to just taking the first 
computation in $r$, thus a copy of the predecessor, and in the returned pair of 
computation, the first pair is $\sigma$ applied to this $n$, which gives the 
successor of the predecessor, i.e. a copy of the current number. The second 
computation simulates the initial pattern matching. For the case of a zero, the 
second computation is just $C$, and for the case of a successor (an operation 
$\sigma$), the second computation takes the copy of the predecessor given by 
$n$, and then returns the computation $D$ that uses this predecessor. 

The proof that this translation is correct is inspired by the proof of the 
encoding of shallow handlers using deep handlers, described in 
\cite{Hillerstrom2018:Shallow}. Intuitively, this is because a shallow handler 
is a handler that only handles the root of an effect tree, which is exactly 
what a pattern matching here is supposed to do. First, we can show that this 
translation is well-typed:

\begin{lemma}
	If $\Gamma \p V : T $ then $\sem{\Gamma} \p \sem{V} : \sem{T}$ and if $\Gamma \p C : T $ then $\sem{\Gamma} \p_E \sem{C} : \sem{T}$
\end{lemma}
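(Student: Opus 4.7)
The plan is to proceed by a straightforward mutual induction on the typing derivations of $V$ and $C$, exploiting the key identities $\sem{\type{Nat}} = \type{Unit} \rightarrow_E \type{Unit}$ and $\sem{T \rightarrow U} = \sem{T} \rightarrow_E \sem{U}$. The cases for variables, $\lambda$-abstraction, application, $\return$, $\mathtt{let}$ and $\fix$ are direct appeals to the induction hypothesis followed by the corresponding \HEPCF\ rule and require no ingenuity.

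Before handling the remaining cases, I would establish the auxiliary fact that $\Gamma \p \mathtt{Z}_T : T$ in every $\Gamma$, by induction on $T$: the base cases are immediate and the arrow case uses $\la x.\return(\mathtt{Z}_U)$ together with the inductive hypothesis. This disposes of $\sem{0}$. For $\sem{\mathtt{succ}(V)} = \la x. \sigma(y. \sem{V}\,())$, the induction hypothesis gives $\sem{V} : \type{Unit} \rightarrow_E \type{Unit}$, so $\sem{V}\,()$ is a computation of type $\type{Unit}$ in effect context $E$; the $\sigma$-rule and the $\lambda$-rule then yield the desired type $\type{Unit} \rightarrow_E \type{Unit} = \sem{\type{Nat}}$.

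The genuinely delicate case is the $\case$ construct. Here I would first unfold the abbreviations $\la f,g.\,C \equiv \la f.\return(\la g. C)$ and $V\,W_1\,W_2 \equiv \letin{x = V\,W_1}~(x\,W_2)$, and fix the intended type
$$\Pi \triangleq \sem{\type{Nat}} \rightarrow_E (\type{Unit} \rightarrow_E \sem{T}) \rightarrow_E \sem{T},$$
which is precisely the type of the variable $p$ in the body of both clauses of $H(C,D)$. This suggests the handler type $\type{Unit}\,{}_E{\Rightarrow}_E\,(\Pi \rightarrow_E \sem{T})$, which I would then verify clause by clause. The $\return$-clause uses only the $\mathtt{Z}$-lemma and the induction hypothesis on $C$. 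The $\sigma$-clause is more involved: the continuation $r$ is bound at type $1 \rightarrow_E (\Pi \rightarrow_E \sem{T})$, and a careful unfolding shows that the computation bound to $n$ has type $\sem{\type{Nat}}$; once $n : \sem{\type{Nat}}$ is in scope, $\sigma(z.\,n\,())$ again inhabits $\sem{\type{Nat}}$ while $\la x.\sem{D}$ inhabits $\type{Unit} \rightarrow_E \sem{T}$ by the induction hypothesis on $D$ (where the bound \PCF-variable $n$ is translated precisely to this $n$). Assembling the pieces, $\handle{H(C,D)}~(\sem{V}\,())$ has type $\Pi \rightarrow_E \sem{T}$; since $\la f,g.\,g\,()$ typechecks at $\Pi$, the final instantiation delivers the target type $\sem{T}$.

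The hard part is not conceptual but purely bureaucratic bookkeeping: keeping track of the types of $r$ and $p$ while unfolding the syntactic sugar, so that every subterm of the handler lines up with the single consistent type $\Pi \rightarrow_E \sem{T}$. Once this is done, no further insight is required and the typing tree closes.
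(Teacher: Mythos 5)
Your proposal is correct and follows essentially the same route as the paper: a direct (mutual) induction on typing derivations whose only nontrivial case is $\case$, resolved by assigning the handler $H(C,D)$ the type $\type{Unit}\ {}_E\Rightarrow_E (\Pi \rightarrow_E \sem{T})$ with $\Pi = \sem{\type{Nat}} \rightarrow_E (\type{Unit} \rightarrow_E \sem{T}) \rightarrow_E \sem{T}$, which is exactly the intermediate typing the paper states. Your explicit auxiliary lemma on $\mathtt{Z}_T$ and the clause-by-clause bookkeeping for $r$ and $n$ merely spell out what the paper leaves implicit.
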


\begin{proof}
	The proof is direct except for the $\case$ construct, where we need to 
	prove this intermediate result:
	
	If $\Gamma \p C : T$, $\Gamma, n : \type{Nat} \p D : T$, $\sem{\Gamma} \p \sem{C} : \sem{T}$ and $\sem{\Gamma}, n : \sem{\type{Nat}} \p \sem{D} : \sem{T}$, then 
	$$\sem{\Gamma} \p H : \type{Unit} _E \Rightarrow_E ((\type{Unit} \rightarrow \type{Unit}) \rightarrow (\type{Unit} \rightarrow \sem{T}) \rightarrow \sem{T}) \rightarrow \sem{T}$$
	With this type in mind, the proof that $H(C,D)$ respects this type is straightforward, and we can conclude that the translation for $\case$ is well-typed.
\end{proof}

We want to prove that the translation is a correct simulation. In order to do 
this, we need to introduce formally an approximation of a term up to some 
administrative context, as in \cite{Hillerstrom2018:Shallow}, because in this 
translation we encode integers as functions, and we will sometimes have a term 
that is overly complex because all the computation is hidden behind a $\la$, 
but this term will behave exactly like another simple term corresponding to the 
encoding of an integer. This is exactly what happens here; the variable $n$ in 
the handler is essentially a copy of the predecessor, but syntactically $n$ is 
described as a complex computation.  

\begin{definition}[Administrative Contexts (See \cite{Hillerstrom2018:Shallow},Definition~6)]
	Evaluation contexts are defined with the following grammar:
	\begin{align*}
		\mathcal{E} ::= [] \midd \letin{x = \mathcal{E}}~C \midd \handle{H}~\mathcal{E}
	\end{align*}
	They corresponds exactly to context in which a computation can be reduced. An evaluation context $\mathcal{E}$ for a computation is called \emph{administrative}, denoted adm($\mathcal{E}$) if and only if:
	\begin{itemize}
		\item For all values $V$, $\mathcal{E}[\return(V)] \rightarrow^* \return(V)$
		\item For all operations $\sigma$, $\mathcal{E}[\sigma(V;x.C)] \rightarrow^* \sigma(V;x.D)$ with $D \rightarrow^* \mathcal{E}[C]$.  
	\end{itemize}
\end{definition} 

Remark that the composition of two administrative contexts is also an administrative context. 

\begin{definition}[Approximation up to Administrative Reductions (See \cite{Hillerstrom2018:Shallow},Definition~7)]
	We define $\ge$ for terms as the closure under \emph{any} context of the following rule:
	\begin{center}
		\AXC{}
		\UIC{$C \ge C$}
		\DP 
		\qquad 
		\AXC{$adm(\mathcal{E})$}
		\AXC{$C \ge D$}
		\BIC{$\mathcal{E}[C] \ge D$}
		\DP 
		\qquad
		\AXC{$C \rightarrow C'$}
		\AXC{$C' \ge D $}
		\BIC{$C \ge D$}
		\DP 
	\end{center}
	Intuitively, we have $C \ge D$ when $C$ could be reduced to $D$ using strong rules of reduction (reducing under any context) without changing normal forms (of the form $\return$ or $\sigma$).   
\end{definition}

We can then prove the following lemma on this approximation, showing that the properties of administrative context are valid for approximations. 

\begin{lemma}
	If $C' \ge C$ then:
	\begin{itemize}
		\item If $C$ is $\return(V)$, then $C' \rightarrow^* \return(V')$ with $V' \ge V$
		\item If $C$ is $\sigma(V;y.B)$, then $C' \rightarrow^* \sigma(V';y. B')$ with $B' \ge B$
	\end{itemize}
\end{lemma}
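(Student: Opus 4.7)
The plan is to prove both statements simultaneously by induction on the derivation of $C' \ge C$, corresponding to the three rules generating $\ge$. The reflexivity case is immediate: if $C' = C = \return(V)$, take $V' := V$ and use reflexivity of $\ge$ on $V$; similarly for $C = \sigma(V;y.B)$ take $B' := B$. The reduction case, where $C' \to C''$ with $C'' \ge C$, follows directly from the induction hypothesis applied to $C'' \ge C$ by prepending the single step $C' \to C''$ to the reduction chain produced by the IH.

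The interesting case is the administrative context rule: $C' = \mathcal{E}[C'']$ with $\mathrm{adm}(\mathcal{E})$ and $C'' \ge C$. For the return subcase, the IH gives $C'' \to^* \return(V')$ with $V' \ge V$. Closing reduction under evaluation contexts yields $\mathcal{E}[C''] \to^* \mathcal{E}[\return(V')]$, and the first clause of $\mathrm{adm}(\mathcal{E})$ then reduces $\mathcal{E}[\return(V')]$ to $\return(V')$, which is exactly what we need. For the sigma subcase, the IH gives $C'' \to^* \sigma(V'';y.B'')$ with $B'' \ge B$. Then $\mathcal{E}[C''] \to^* \mathcal{E}[\sigma(V'';y.B'')]$, and the second clause of $\mathrm{adm}(\mathcal{E})$ further reduces this to $\sigma(V'';y.D)$ with $D \to^* \mathcal{E}[B'']$. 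To conclude we must establish $D \ge B$: the reduction rule of $\ge$ applied to $D \to^* \mathcal{E}[B'']$ yields $D \ge \mathcal{E}[B'']$, while the admin-context rule of $\ge$ applied to $B'' \ge B$ yields $\mathcal{E}[B''] \ge B$, and transitivity then produces $D \ge B$.

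The main obstacle is that the last step of the previous paragraph relies on transitivity of $\ge$, which is not among the three defining rules. I would therefore prove transitivity as a preliminary lemma: if $A \ge B$ and $B \ge C$ then $A \ge C$, by induction on the derivation of $A \ge B$. In the reflexivity case, $A = B$, so the conclusion is immediate. In the reduction case $A \to A'$ with $A' \ge B$, the IH gives $A' \ge C$ and one prepends the reduction step. In the admin context case $A = \mathcal{E}[A']$ with $A' \ge B$, the IH yields $A' \ge C$ and the same admin-context rule gives $\mathcal{E}[A'] \ge C$. This establishes the auxiliary transitivity needed above.

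A secondary subtlety is the interpretation of $V' \ge V$ when $V$ is a value: we read $\ge$ as being defined uniformly on the syntactic category of terms (values being a sublanguage of terms), so reflexivity suffices to give $V' \ge V$ in the base case, and the admin case reproduces an equality $V'$ obtained by the IH. One should also check at the outset that evaluation under evaluation contexts is closed (so that $C'' \to^* N$ entails $\mathcal{E}[C''] \to^* \mathcal{E}[N]$ for the admin contexts $\mathcal{E}$ of Definition~6), which is straightforward from the operational rules of Figures~\ref{f:EPCFsemantics} and~\ref{f:HEPCFSemantics}.
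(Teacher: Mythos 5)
Your proof follows essentially the same route as the paper's: induction on the derivation of $C' \ge C$, with the administrative-context case discharged by applying the two clauses of $\mathrm{adm}(\mathcal{E})$ to the output of the induction hypothesis, exactly as the paper does. Two minor remarks: the transitivity lemma is an unnecessary detour, since the reduction rule of $\ge$ already has the form ``$C \to C'$ and $C' \ge D$ imply $C \ge D$'', so iterating it along $D \to^* \mathcal{E}[B'']$ with final premise $\mathcal{E}[B''] \ge B$ (itself an instance of the admin-context rule) derives $D \ge B$ directly; and because $\ge$ is defined as the closure of the three rules under \emph{arbitrary} contexts, the induction also has congruence cases (e.g.\ $\return(V') \ge \return(V)$ with $V' \ge V$, or $\sigma(V';y.B') \ge \sigma(V;y.B)$ with $B' \ge B$) which your case analysis omits --- these are precisely why the conclusion reads $V' \ge V$ rather than $V' = V$, and they are settled by the empty reduction sequence, as the paper notes in one line.
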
 

\ifLong 
\begin{proof}
	We proceed by induction on $C' \ge C$. All the contextual rules are straightforward because $C'$ has directly the desired shape in this case. 
	\begin{itemize}
		\item If $C' \equiv \mathcal{E}[C'']$ with $C'' \ge C$, then we have, by induction hypothesis, if $C \equiv \return(V)$ then $C' \rightarrow^* \mathcal{E}[\return(V')] \rightarrow^* \return(V')$ with $V' \ge V$ since $\mathcal{E}$ is administrative. Similarly, if $C \equiv \sigma(V;y.B)$ then $C' \rightarrow^* \mathcal{E}[\sigma(V';y. B')] \rightarrow^* \sigma(V', y. B'')$ with $B'' \rightarrow^* \mathcal{E}[B']$ and $B' \ge B$, thus $B'' \ge B$. Moreover, $V' \ge V$. 
		\item If $C' \rightarrow C''$ with $C'' \ge C$, then this is direct by induction hypothesis. 	
	\end{itemize}
\end{proof}
\else 
\begin{proof}
	The proof is done by induction on $C' \ge C$ and can be found in the supplementary material.   
\end{proof}
\fi 

We can then show that this approximation is a simulation. 

\ifLong 
\begin{lemma}
	If $C' \ge C$ and $C \rightarrow D$ then there exists $D'$ such that $D' \ge D$ and $C' \rightarrow^+ D'$
\end{lemma}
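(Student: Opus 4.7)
The plan is structural induction on the derivation of $C' \ge C$, which is built from three generating clauses (reflexivity, a reduction prefix, and an administrative prefix) closed under arbitrary contexts, yielding four cases to treat.

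Reflexivity is immediate: simply take $D' = D$. The reduction-prefix case, $C' \to C''$ with $C'' \ge C$, follows by applying the induction hypothesis to $C'' \ge C$ to obtain $C'' \to^+ D'$ with $D' \ge D$, then prepending the step $C' \to C''$. For the administrative-prefix case, where $C' = \mathcal{E}[C'']$ with $\mathrm{adm}(\mathcal{E})$ and $C'' \ge C$, the induction hypothesis gives $C'' \to^+ D''$ with $D'' \ge D$. Since every administrative context is built from the evaluation frames $\letin{x = [\,]}~B$ and $\handle{H}~[\,]$, this reduction sequence lifts to $\mathcal{E}[C''] \to^+ \mathcal{E}[D'']$, and reapplying the administrative rule to $D'' \ge D$ gives $\mathcal{E}[D''] \ge D$; we take $D' = \mathcal{E}[D'']$.

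The main obstacle is the contextual-closure case, where $C' = \mathcal{F}[E']$ and $C = \mathcal{F}[E]$ with $E' \ge E$, and we must simulate the reduction $\mathcal{F}[E] \to D$. The plan is to proceed by case analysis on the position of the fired redex. If the redex sits strictly inside $E$, say $E \to E''$ with $D = \mathcal{F}[E'']$, then the surrounding $\mathcal{F}$ is necessarily an evaluation context, and the induction hypothesis applied to $E' \ge E$ yields $E' \to^+ E'''$ with $E''' \ge E''$; lifting through $\mathcal{F}$ gives $D' = \mathcal{F}[E''']$ with $D' \ge D$ by contextual closure. If the redex is internal to $\mathcal{F}$ and independent of $E$, the same one-step reduction fires in $C'$, and contextual closure supplies the required relation. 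The delicate subcase is when the redex straddles the interface, e.g.\ $\mathcal{F} = \handle{H}~[\,]$ or $\mathcal{F} = \letin{x = [\,]}~B$ with $E$ already a $\return$ or an algebraic operation: here the preceding lemma on $\ge$ and head-normal forms is used to first drive $E'$ into the matching head shape, after which the root reduction fires and the conclusion follows from contextual closure (together with stability of $\ge$ under the substitutions involved in the handler and let rules).
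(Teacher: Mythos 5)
Your proposal is correct and follows essentially the same route as the paper: induction on the derivation of $C' \ge C$, with the reflexivity, reduction-prefix, and administrative-prefix cases handled exactly as in the paper, and the contextual case resolved by locating the fired redex, invoking the preceding lemma to drive $C'$ to a matching $\return$/$\sigma$ head shape when the redex straddles a $\mathtt{let}$ or $\mathtt{handle}$ frame, and using stability of $\ge$ under substitution. The only cosmetic difference is that the paper organizes the contextual case by the top-level constructor of $C$ (application, let, handle, case) rather than by redex position, but the case split and the arguments coincide.
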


\begin{proof}
	We proceed by induction on $C' \ge C$.
	\begin{itemize}
		\item If $C'$ is $\mathcal{E}(C'')$ with $C'' \ge C$ and $adm(\mathcal{E})$. Then, by induction hypothesis, there exists $D''$ with $D'' \ge D$ and $C'' \rightarrow^+ D''$. We pose $D' \triangleq \mathcal{E}(D'')$. Because $\mathcal{E}$ is an evaluation context, we have $\mathcal{E}(C'') \rightarrow^+ \mathcal{E}(D'')$, and because $\mathcal{E}$ is administrative, we have $\mathcal{E}(D'') \ge D$. This concludes this case. 
		\item If $C' \rightarrow C''$ with $C'' \ge C$. Then by induction hypothesis, there exists $D''$ such that $C'' \rightarrow^+ D''$ with $D'' \ge D$, and we can pose $D' \triangleq D''$.  
		\item If $C'$ is $V'~W'$ with $V' \ge V$ and $W' \ge W$, and $C = V~W$. Then, there are two cases to consider depending on the reduction  $C \rightarrow D$.
		\begin{itemize}
			\item $V = \la x. E$ for some computation $E$. Then, by definition, $V' = \la x. E'$ with $E' \ge E$. The reduction gives us $C \rightarrow E[x := W]$. But, we have $C' \rightarrow E'[x := W']$, and $E'[x := W'] \ge E[x := W ]$. This concludes this case. 
			\item The case $V = \fix f. U$ is similar to the previous one, we take the obvious reduction and it concludes. 
		\end{itemize}
		\item $C'$ cannot be of the shape $\return(V)$ or $\sigma(V;x.B)$ because it would mean that $C$ is in normal form, and thus cannot be reduced. 
		\item If $C \equiv \letin{x = C_1}~C_2$ and $C' \equiv \letin{x = C_1'}~C_2'$ where $C_1' \ge C_1$, $C_2' \ge C_2$. There are three cases to consider depending on the reduction $C \rightarrow D$. 
		\begin{itemize}
			\item $C_1$ is $\return(V)$ for some $V$. Then, $C'_1 \rightarrow^* \return(V')$ for some $V' \ge V$. Thus,
			$$C' \rightarrow^* \letin{x = \return(V')}~C_2' \rightarrow C_2'[x := V'] $$ 
			and $C_2'[x := V'] \ge C_2[x := V]$, this concludes this case.
			\item $C_1$ is $\sigma(V;y.B)$. Then, $C'_1 \rightarrow^* \sigma(V';y.B')$ for some $V' \ge V$ and $B' \ge B$. Then,
			$$C' \rightarrow^* \letin{x = \sigma(V';y.B')}~C_2' \rightarrow \sigma(V';y. \letin{x = B'}~C_2')$$ 
			and $\sigma(V';y. \letin{x = B'}~C_2') \ge \sigma(V;y. \letin{x = B}~C_2)$, this concludes this case.
			\item If $C_1 \rightarrow D_1$ for some $D_1$, then we have, by induction hypothesis on $C_1'$ the existence of $D_1'$ such that
			$$C' \rightarrow^+ \letin{x = \mathcal{E}[D_1']}~C_2'$$
			and this computation is greater than $\letin{x = D_1}~C_2$. This concludes this case. 
		\end{itemize}
		\item If $C \equiv \handle{H}~B$. Then $C' \equiv \handle{H'}~B'$. Again, three cases to consider:
		\begin{itemize}
			\item $If B \rightarrow A$, then we can find, by induction hypothesis, $A'$ such that $B' \rightarrow^+ A'$ and $A' \ge A$, and we have $C' \rightarrow^+ \handle{H'}~\mathcal{E}[A']$.
			\item If $B$ is $\return(V)$ for some $V$. Then, $B' \rightarrow^* \return(V')$ for some $V' \ge V$. Thus,
			$$C' \rightarrow^* \handle{H'}~\return(V') \rightarrow C_r'[x := V']$$
			and we have $C_r'[x := V'] \ge C_r[x := V]$, this concludes this case.
			\item If $B$ is $\sigma_i(V;y.A)$. Then, $B' \rightarrow^* \sigma_i(V';y.A')$ for some $V' \ge V$ and $B' \ge B$. Then,
			$$C' \rightarrow^* \handle{H'}~\sigma_i(V';y. A') \rightarrow C_i'[x := V'][r := \la y. \handle{H'}~A']$$  
			and $$C_i'[x := V'][r := \la y. \handle{H'}~A'] \ge C_i[x := V][r := \la y. \handle{H}~A]$$, this concludes this case.
		\end{itemize}
		\item If $C \equiv \case(V;C_1,\dots,C_k)$. Then $C' \equiv \case(V';C_1',\dots,C_k')$. And we can easily concludes because the only $V \ge \underline{n}$ for a given $\underline{n}$ is $V \equiv \underline{n}$. 
	\end{itemize}
\end{proof}

Now that we have this lemma, we can prove that our translation is correct up to approximation.

\begin{lemma}
	If $C \rightarrow D$, then there exists $D'$ such that $\sem{C} \rightarrow^+ D'$ and $D' \ge \sem{D}$. 
\end{lemma}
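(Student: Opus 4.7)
The plan is to proceed by induction on the derivation of $C \rightarrow D$, dispatching the purely syntactic cases compositionally and devoting the bulk of the argument to the $\case$ rule, since that is the only place where the translation does not mirror the source reduction step-for-step.

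For the congruence rule on $\letin{x = C}~B$ and the structural beta-like rules (namely $(\la x. E)W \rightarrow E[W/x]$, $(\fix f.V)W \rightarrow V[(\fix f.V)/f]\,W$, and $\letin{x = \return(V)}~B \rightarrow B[V/x]$), the translation commutes up to a handful of administrative reductions with the source substitution. The reason is that $\sem{\cdot}$ is homomorphic on each of these constructors, so in each case one verifies that $\sem{C}$ can be reduced in the very same position to a term which is syntactically $\sem{D}$, possibly under a substitution that matches the substitution produced by the source reduction. Here one invokes the straightforward substitution lemma $\sem{E[V/x]} = \sem{E}[\sem{V}/x]$, and then $\sem{D} \ge \sem{D}$ by reflexivity of $\ge$.

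The real work is the case rule $\case(\underline{n}; C_0;\mathtt{succ}(m)\mapsto C_1) \rightarrow \cdots$, which corresponds to an inspection of the encoded natural number $\sem{V}$. I would split into two subcases according to whether $V = 0$ or $V = \mathtt{succ}(W)$. In the zero subcase, $\sem{0}\,() = (\la x.\return(()))\,() \rightarrow \return(())$, the handler $H(C,D)$ applies its $\return$ clause and yields (after some administrative reductions) a pair whose second component, selected by $(\la f,g.\,g\,())$, reduces to $\sem{C}$. In the successor subcase, $\sem{\mathtt{succ}(W)}\,()$ reduces to $\sigma(y.\sem{W}\,())$; the handler then fires its $\sigma$ clause, binding a local $n$ to a thunk whose observable behaviour is that of $\sem{W}$ (a copy of the predecessor, as intended by the encoding), and returning a pair whose second component, again selected by $(\la f,g.\,g\,())$, reduces to $\sem{D}[n := n_H]$ where $n_H$ is the handler's reconstructed predecessor.

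The main obstacle is this last point: $n_H$ is not literally $\sem{W}$ but a more baroque term that performs an extra round-trip through the handler's continuation $r$. This is precisely why the statement asserts $D' \ge \sem{D}$ rather than equality, and why the definition of $\ge$ is closed under arbitrary contexts and under administrative evaluation contexts. To discharge this case I would prove, as an auxiliary step, that $n_H$ is administratively greater than $\sem{W}$ at the expected type $\sem{\type{Nat}}$, and then conclude by compatibility of $\ge$ with substitution, obtaining $\sem{D}[n := n_H] \ge \sem{D}[n := \sem{W}] = \sem{D[n := W]}$. Combining the two subcases with the sequence of reductions performed to reach the $\case$ clause gives the required $\sem{C} \rightarrow^+ D'$ with $D' \ge \sem{D}$, which closes the induction.
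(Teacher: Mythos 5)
Your proposal is correct and follows essentially the same route as the paper's proof: induction on the reduction, with the only nontrivial case being $\case$, split into the zero and successor subcases, and the crux handled by showing that the handler's reconstructed predecessor is only an administrative approximation ($\ge$) of the translated predecessor, whence $D' \ge \sem{D}$ via compatibility of $\ge$ with substitution. The paper makes the auxiliary step concrete by exhibiting the specific evaluation context $\mathcal{E}$ around $\sem{V}~()$ and proving it administrative (and, like you, it quietly appeals to an $\eta$-step to identify $\la x.\sem{V}~()$ with $\sem{V}$), but the structure is the same.
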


\begin{proof}
	For this proof, we will consider that all values of type unit are equal. We proceed by induction on $C \rightarrow D$. All cases are straightforward except for the $\case$, that we detail here. 
	\begin{itemize}
		\item Consider the reduction $\case(0;0 \mapsto C;\mathtt{succ}(x) \mapsto D) \rightarrow C$. We have 
		\begin{align*}
			\sem{\case(0;0 \mapsto C;\mathtt{succ}(x) \mapsto D} &\triangleq \letin{a = (\handle{H(C,D)}~\mathtt{Z}_{\sem{\type{Nat}}}~())}~(a~(\la f,g. g~())) \\
			&\rightarrow^+ \letin{a = (\handle{H(C,D)}~\return(()))}~(a~(\la f,g. g~())) \\
			&\rightarrow^+ \letin{a =  \return(\la p. p~\mathtt{Z}_{\type{Unit} \rightarrow \type{Unit}}~(\la x. \sem{C}))}~(a~(\la f,g. g~())) \\ 
			&\rightarrow^+ (\la f,g. g~())~(\mathtt{Z}_{\type{Unit} \rightarrow \type{Unit}})~(\la x. \sem{C}) \\ 
			&\rightarrow^+ \sem{C} 
		\end{align*}
		This concludes this case. 
		\item Consider the reduction $\case(\mathtt{succ}(V);0 \mapsto C;\mathtt{succ}(n) \mapsto D) \rightarrow D[n := V]$.
		We have: 
		\begin{align*}
			&\phantom{\triangleq} \sem{\case(\mathtt{succ}(V);0 \mapsto C;\mathtt{succ}(x) \mapsto D} \\
			&\triangleq \letin{a = (\handle{H(C,D)}~\sem{\mathtt{succ}(V)}~())}~(a~(\la f,g. g~())) \\
			&\rightarrow^+ \letin{a = (\handle{H(C,D)}~\sigma(y. \sem{V}()))}~(a~(\la f,g. g~()))
		\end{align*}
		We can show that the following context:
		$$\mathcal{E} \triangleq \letin{z = \letin{\alpha =\handle{H(C,D)}~[]}~\alpha~(\la f,g. \letin{z = f~()}~\mathtt{Z}_{\sem{T}})}~\return(\mathtt{Z}_{\type{Unit}})$$ is administrative. 
		\begin{itemize}
			\item We have:
			\begin{align*}
				&\phantom{\rightarrow^*}~\handle{H(C,D)}~\sigma(y.B) \\ 
				&\rightarrow^+ \letin{n = \return( \\
				& \la x. \letin{z = (\la y. \handle{H(C,D)}~B)~\underline{0}~(\la f,g. \letin{z = f~()}~\mathtt{Z}_{\sem{T}})}~\return(\mathtt{Z}_{\type{Unit}}))}~...
			\end{align*}
			Also, 
			$$ \letin{z = (\la y. \handle{H(C,D)}~B)~\underline{0}~(\la f,g. \letin{z = f~()}~\mathtt{Z}_{\sem{T}})}~\return(\mathtt{Z}_{\type{Unit}}) \rightarrow^* \mathcal{E}[B]$$ so let us call this computation $B'$ for the sake of clarity. We have:  
			\begin{align*}
				&\phantom{\rightarrow^*}~\handle{H(C,D)}~\sigma(y.B) \\ 
				&\rightarrow^+ \letin{n = \return(\la x. B')}~\return(\la p. p~(\la x. \sigma(z. n~()))~(\la x. \sem{D}))
			\end{align*}
			By going back to the previous computation in the whole context, we obtain:
			\begin{align*}
				& \phantom{\rightarrow^*}~\mathcal{E}[\sigma(V;y.B)] \\
				&\rightarrow^+ \letin{k = \sigma(z. (\la x. B')~())}~\return(\mathtt{Z}_{\type{Unit}}) \\ 
				&\rightarrow^+ \letin{k = \sigma(z. B')}~\return(\mathtt{Z}_{\type{Unit}}) \\
				&\rightarrow^+ \sigma(z.\letin{k = B'}~\return(\mathtt{Z}_{\type{Unit}}))
			\end{align*}
			And $\letin{k = C}~\return(())$ with $k$ of type unit is $\eta$-equivalent to $C$. This concludes this case.
			\item  For the case of $\return$, we have:
			\begin{align*}
				& \phantom{\rightarrow^*}~\mathcal{E}[\return(())] \\
				&\rightarrow^+ \letin{z = \letin{a = \handle{H(C,D)}~\return(())} \\
				&(a~(\la f,g. \letin{z = f~()}~\mathtt{Z}_{\sem{T}}))}~\return(\mathtt{Z}_{\type{Unit}}) \\
				&\rightarrow^+ \return(()) 
			\end{align*}  
			And this concludes this case. 
		\end{itemize}
		In particular, we have:
		$$\mathcal{E}[\sem{V}~()] \ge \sem{V}() $$
		for any $\Gamma \p V : \type{Nat}$ in PCF. 
		This show in particular that we have: 
		\begin{align*}
			&\phantom{equiv}\sem{\case(\mathtt{succ}(V);0 \mapsto C;\mathtt{succ}(x) \mapsto D} \\
			&\triangleq \letin{a = (\handle{H(C,D)}~\sem{\mathtt{succ}(V)}~())}~(a~(\la f,g. g~())) \\
			&\rightarrow^+ \letin{a = (\handle{H(C,D)}~\sigma(y. \sem{V}()))}~(a~(\la f,g. g~())) \\ 
			&\rightarrow^+ \sem{D}[n := V']
		\end{align*}
		with $V' \ge \la x. \sem{V}~()$, which is $\eta$-equivalent with $\sem{V}$.  
		Thus, we have indeed that 
		$$\sem{\case(\mathtt{succ}(V);0 \mapsto C;\mathtt{succ}(x) \mapsto D} \rightarrow^+ \sem{D}[n := V']$$
		 with $\sem{D}[n := V'] \ge \sem{D}[n := \sem{V}]$.  
	\end{itemize}
\end{proof}

\begin{lemma}
	If $\Gamma \p C : T$, $C \rightarrow D$ and $A \ge \sem{C}$, then there exists $B$ such that $A \rightarrow^+ B$ with $B \ge \sem{D}$ 
\end{lemma}

\begin{proof}
	This is a consequence of the two previous lemmas. 
\end{proof}

From this simulation, we obtain:
\else 
\begin{lemma}
	If $\Gamma \p C : T$, $C \rightarrow D$ and $A \ge \sem{C}$, then there exists $B$ such that $A \rightarrow^+ B$ with $B \ge \sem{D}$ 
\end{lemma}

The details of this proof can be found in the supplementary material, but the result is similar to the one obtained in \cite{Hillerstrom2018:Shallow}, Theorem~9. From this simulation, we obtain:
\fi 

\begin{theorem}
	For any term $\p C : \type{Nat}$ in PCF, there is a term of finitary \HEPCF\ $C_f \triangleq \letin{a = \sem{C}}~(a~())$ such that $\p_E C_f: \type{Unit}$ and $ET(C_f)$ represents the normal form of $C$ (if it exists, otherwise the $ET(C_f) \equiv \bot$).
\end{theorem}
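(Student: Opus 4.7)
The plan is to combine the typing lemma and the simulation lemma established just above (if $A \ge \sem{C}$ and $C \rightarrow D$, then $A \rightarrow^+ B$ for some $B \ge \sem{D}$) with a direct computation of the effect tree produced by an encoded PCF numeral. Well-typedness of $C_f$ is immediate: by the typing lemma, $\p_E \sem{C} : \sem{\type{Nat}} = \type{Unit} \rightarrow_E \type{Unit}$, and the typing rules for $\mathtt{let}$ and application then give $\p_E C_f : \type{Unit}$.

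I then split on whether $C$ reduces to a PCF normal form. In the terminating case, subject reduction and a standard progress argument force this normal form to be $\return(\bar{n})$ for some PCF numeral $\bar{n}$. Iterating the simulation lemma along $C \rightarrow^* \return(\bar{n})$, starting from $\sem{C} \ge \sem{C}$, yields $\sem{C} \rightarrow^+ B$ with $B \ge \sem{\return(\bar{n})} = \return(\sem{\bar{n}})$. The structural lemma about $\ge$ then gives $B \rightarrow^* \return(V')$ with $V' \ge \sem{\bar{n}}$, so $C_f \rightarrow^* V'~()$ and, by closure of $\ge$ under arbitrary contexts, $V'~() \ge \sem{\bar{n}}~()$. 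A straightforward induction on $n$, using the immediate unfoldings $\sem{0}~() \rightarrow \return(())$ and $\sem{\mathtt{succ}(V)}~() \rightarrow \sigma(y.\sem{V}~())$, shows that $ET(\sem{\bar{n}}~())$ is a chain of $n$ $\sigma$-nodes terminated by a $\return(())$-leaf; combining the two approximation lemmas coinductively then propagates this effect tree down to $ET(C_f)$.

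In the diverging case, where $C \rightarrow C_1 \rightarrow C_2 \rightarrow \cdots$ is infinite, iterating the simulation lemma yields an infinite HEPCF reduction $\sem{C} \rightarrow^+ B_1 \rightarrow^+ B_2 \rightarrow^+ \cdots$ with $B_i \ge \sem{C_i}$. Since HEPCF reduction is deterministic and each $B_i$ admits a further step (to $B_{i+1}$), $\sem{C}$ never reaches a normal form, so $C_f$ reduces infinitely inside the $\mathtt{let}$ context without ever exposing a top-level $\return$ or $\sigma$, and by definition $ET(C_f) = \bot$. The main obstacle I anticipate lies in the bookkeeping of $\ge$ in the terminating case: once $\sem{C}$ reaches $\return(V')$ with $V' \ge \sem{\bar{n}}$, one must show that the reductions of $V'~()$ produce \emph{exactly} the same effect tree as those of $\sem{\bar{n}}~()$, not merely a tree related by $\ge$. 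This requires applying the two approximation lemmas coinductively as the tree is built, checking at each $\sigma$-node and leaf that the administrative slack accumulated by $V'$ remains observationally invisible.
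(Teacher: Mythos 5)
Your proposal is correct and follows essentially the same route as the paper's proof: case split on termination of $C$, iterate the simulation lemma to land on $\return(V')$ with $V' \ge \sem{V}$, and observe that the effect tree of $V'~()$ is the unary $\sigma$-chain encoding the numeral, with $\bot$ in the divergent case. The only difference is that you spell out the coinductive use of the two approximation lemmas to show the $\ge$-slack is invisible in the effect tree, a step the paper's proof leaves implicit.
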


\begin{proof}
	Suppose that $\p C : \type{Nat}$ has no normal form, then $\sem{C}$ has no normal form, and thus the effect tree of $C_f$ is $\bot$. Otherwise, suppose that $C \rightarrow^* \return(V)$ with $V$ a closed value of type $\type{Nat}$. Then, $\sem{C} \rightarrow^* D$ with $D \ge \sem{\return(V)}$. In particular, $D \rightarrow^* \return(V')$ with $V' \ge \sem{V}$. Thus, the effect tree of $C_f$ is the same as $\sem{V}$. Because $V$ is a closed value of type $\type{Nat}$, this effect tree is the tree representation of the integer $V$, this concludes the proof. 
\end{proof}

\begin{corollary}
	Effect trees produced by \HEPCF\ terms are not \MSO-decidable.  
\end{corollary}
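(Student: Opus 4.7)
The plan is to reduce the halting problem for \PCF\ to \MSO\ model-checking on effect trees produced by \HEPCF\ terms, using the simulation theorem just established as the bridge. First I would invoke that theorem: for every closed PCF term $\p C : \type{Nat}$, we obtain a finitary \HEPCF\ computation $\p_E C_f : \type{Unit}$ whose effect tree is $\bot$ exactly when $C$ diverges, and is otherwise a finite tree ending in a $\return$ leaf (encoding, via the unary representation, the numeral that $C$ evaluates to). The map $C \mapsto C_f$ is effective since the translation $\sem{\cdot}$ is defined by a straightforward induction on the syntax.

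Next, I would exhibit a fixed, very simple \MSO\ specification that distinguishes the two behaviors. The property "$ET(C_f) \neq \bot$" is trivially expressible: as an APT it can be realized with a single state $q_i$ of priority $0$, transitions $\delta(q_i, \bot) = \mathtt{ff}$ and $\delta(q_i, s) = \mathtt{tt}$ for every other symbol $s$ in the signature. This automaton accepts precisely those effect trees whose root is labeled by an algebraic operation or a return value, i.e., exactly the terminating computations.

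Finally, I would close the argument by contradiction. If \MSO\ model-checking on effect trees of \HEPCF\ terms were decidable, then composing the effective translation $C \mapsto C_f$ with the decision procedure for the above specification would yield a decision procedure for termination of PCF terms of type $\type{Nat}$. Since PCF encodes every partial recursive function, this contradicts the undecidability of the halting problem, so \MSO\ model-checking must be undecidable for effect trees of \HEPCF. The main technical obstacle---the simulation up to administrative reductions that makes the translation faithful---has already been discharged in the preceding theorem; what remains for the corollary is only the bookkeeping sketched above.
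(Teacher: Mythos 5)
Your proposal is correct and follows essentially the same route as the paper: both reduce the halting problem for \PCF\ to the model-checking problem via the translation theorem, observing that $ET(C_f)$ is $\bot$ iff $C$ diverges and that (non-)$\bot$-ness is a trivially \MSO-expressible property. Your explicit one-state APT is just a slightly more concrete rendering of the paper's remark that one can ``ask if the effect tree contains a $\bot$.''
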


Indeed, given a \PCF\ term  of type $\type{Nat}$, we can translate it into 
\HEPCF\, and ask if the effect tree of this \HEPCF\ term contains a $\bot$, and 
this would decide if the initial term halts. Thus, this shows that the power of 
handlers are simply too expressive for \MSO\ model-checking.
\ifLong

\begin{remark}
	Our proof heavily relies on the fact that handlers can change the output 
	type, which allows here to do this encoding of pairs of computation 
	internally and returns at the end only the second computation. A careful 
	reader might wonder if the control of this output type is necessary for 
	undecidability, and that we could have a decidable \MSO\ model-checking for 
	restrained handlers, with a fixed output type, for example $\type{Unit} 
	_E\Rightarrow_{E'} \type{Unit}$. We show in the next section that this 
	restriction still leads to a language for which \MSO\ model-checking is 
	undecidable.
\end{remark}
\section{Handlers in ECPS}

In this section, we present a language written directly in CPS-style, with handlers, called HECPS. This language is based on the language described in \cite{Matache2018:MasterThesis,MatacheStaton2019:LogicAlgebraicEffects}. The handlers in ECPS seems less expressive than the handlers available in \HEPCF\, especially because we cannot chose the output and input type, however we will show that we still have an undecidability result for HECPS.

Formally, we define HECPS as the language with the following types and grammar: 
\begin{align*}
	T &::= \type{B} \mid \neg_{E}(T_1,\dots,T_n) \mid \type{k} \\
	E &::= \emptyset \midd \set{\sigma : \type{B} \ararrow k} \cup E \\
	V,W &::= v \midd \underline{n} \midd x \midd \la (x_1,\dots,x_n). t \midd \fix~f. V \\
	t,u &::= V~(W_1,\dots,W_n) \midd \sigma(V; x.t) \midd \return \midd \handle{H}~t \midd \case(V;t_1,\dots,t_k) \\ 
	H &::= \set{\return \mapsto t_r} \cup \set{\sigma_i(x;r) \mapsto t_i \mid 1 \le i \le n}
\end{align*}
The intuition behind this language is that all terms $t,u$ must be of type $o$, where $o$ is intuitively a return type corresponding to the type of tree in the $\la$-Y calculus. In particular, a value of HECPS is either of base type, or a function that outputs in type $o$, described by the type $\neg_{E}(T_1,\dots,T_n)$ meaning intuitively $T_1 \rightarrow_E \cdots \rightarrow_E T_n \rightarrow_E o$. Then,
the obvious difference with \HEPCF\ is the absence of $\letin{\cdot}$ expressions, because it is captured by a use of continuation, and there
is no way to return a value, because we can only stop the computation with a $\return$ of type $o$. 

The typing rules that differs from \HEPCF\ are given in Figure~\ref{f:HECPStyping}.

\begin{figure}
	\begin{framed}
		\begin{center}
			\AXC{$\Gamma, x_1 : T_1, \dots, x_n : T_n \p t $}
			\UIC{$\Gamma \p \la (x_1,\dots,x_n).t: \neg(T_1,\dots,T_n)$}
			\DP 
			\qquad 
			\AXC{$\Gamma, f : \neg(T_1,\dots,T_n) \p V : \neg(T_1,\dots,T_n)$}
			\UIC{$\Gamma \p \fix~f.V : \neg(T_1,\dots,T_n)$}
			\DP 
			\\
			\vvskip  
			\AXC{$\Gamma \p V : \neg(T_1,\dots,T_n)$}
			\AXC{$(\Gamma \p W_i : T_i)_{1 \le i \le n}$} 
			\BIC{$\Gamma \p V~(W_1,\dots,W_n) $}
			\DP 
			\qquad 
			\AXC{}
			\UIC{$\Gamma \p \return $}
			\DP 
			\\ 
			\vvskip  
			\AXC{$(\sigma : \type{B} \ararrow k) \in \Sigma$}
			\AXC{$\Gamma \p V : \type{B}$}
			\AXC{$\Gamma, x : \type{k} \p t$}
			\TIC{$\Gamma \p \sigma(V;x.t)$}
			\DP 
			\\ \vvskip 
			\AXC{$\Gamma \p V : \type{k}$}
			\AXC{$(\Gamma \p t_i)_{1 \le i \le k}$}
			\BIC{$\Gamma \p \case(V;t_1,\dots,t_k)$}
			\DP 
			\qquad 
			\AXC{$\Gamma \p H : E' \Rightarrow E$}
			\AXC{$\Gamma \p_{E'} t$}
			\BIC{$\Gamma \p_E \handle{H}~t$}
			\DP  
			\\ 
			\vvskip  
			\AXC{$E' = \set{\sigma_i : \type{B}_i \ararrow k_i \mid 1 \le i \le n}$}
			\AXC{$\Gamma \p_E t_r $}
			\AXC{$(\Gamma, x : \type{B}_i, r : \neg_E(k) \p_E t_i)_{1 \le i \le n}$}
			\TIC{$\Gamma \p \set{\return \mapsto t_r} \cup \set{\sigma_i(x;r) \mapsto t_i \mid 1 \le i \le n} : E' \Rightarrow E$}
			\DP 
		\end{center}
	\end{framed}
	\caption{Static Semantics of HECPS}
	\label{f:HECPStyping}
\end{figure}

As for the semantics, it is similar to the one described in Figure~\ref{f:HEPCFSemantics}. The important difference with what we saw before is that we cannot control the input or output type, we need to work with type $o$. Because of this, the previous translation from PCF, that relied strongly on the output type, does not work directly. However, we will again prove that we can encode PCF in this language, and as before we rely on the idea to encode shallow handlers as deep handlers. Formally, a shallow handler is syntactically similar to a deep handler, but the typing rule and semantics are different, as described in Figure~\ref{f:shallow}. We differentiate shallow handlers from usual handlers by adding a $\dagger$ in the handle operation. In the typing derivation the difference is subtle, only the set of effect of the continuation $r$ changes. However, in the semantics the difference is flagrant, the continuation is not called with the handler again. 

\begin{figure}
	\begin{framed}
		\begin{center}
			\AXC{$\Gamma \p H : E' \Rightarrow^{\dagger} E$}
			\AXC{$\Gamma \p_{E'} t$}
			\BIC{$\Gamma \p_E \shandle{H}~t$}
			\DP  
			\\ 
			\vvskip  
			\AXC{$E' = \set{\sigma_i : \type{B}_i \ararrow k_i \mid 1 \le i \le n}$}
			\AXC{$\Gamma \p_E t_r $}
			\AXC{$(\Gamma, x : \type{B}_i, r : \neg_{E'}(k) \p_E t_i)_{1 \le i \le n}$}
			\TIC{$\Gamma \p \set{\return \mapsto t_r} \cup \set{\sigma_i(x;r) \mapsto t_i \mid 1 \le i \le n} : E' \Rightarrow^{\dagger} E$}
			\DP
			\\
			\vvskip  
			\AXC{$t \rightarrow t'$} 
			\UIC{$\shandle{H}~t \rightarrow \shandle{H}~t' $}
			\DP 
			\\ 
			\vvskip 
			For the following rules, we define $H =\set{\return \mapsto t_r} \cup \set{\sigma_i(x;r) \mapsto t_i \mid 1 \le i \le n}$ 
			\\
			\vvskip 
			\AXC{}
			\UIC{$\shandle{H}~\return \rightarrow t_r$} 
			\DP 
			\\
			\vvskip 
			\AXC{}
			\UIC{$\shandle{H}~\sigma_i(V;x.t) \rightarrow t_i[x := V, r := (\la x. t)] $}
			\DP 
		\end{center}
	\end{framed}
	\caption{Shallow Handlers in ECPS}
	\label{f:shallow}
\end{figure}

\subsection{ECPS with Shallow Handlers}

We introduce shallow handlers in order to decompose the translation from PCF, we will first show that we can encode PCF in ECPS with shallow handlers, and then we will see how we can encode shallow handlers using deep handlers in ECPS (the important point being that the proof given in the previous section and in \cite{Hillerstrom2018:Shallow} does not work anymore).

\begin{definition}[Translation from PCF to ECPS with Shallow Handlers]
	As in the previous section, we need to encode the infinite type of natural numbers into finite types. We introduce the effect $E = \set{\sigma : \type{Unit} \ararrow 1}$. We define the \emph{zero} term, denoted $\mathtt{Z}_o$ by
	$ \mathtt{Z}_o = \return$. 
	
	The translation $\sem{\cdot}$ from PCF to ECPS with shallow handlers is given by
	$$ \sem{\type{Nat}} = \neg_E(1) \qquad \sem{T \rightarrow U} = \neg_E(\sem{T},\neg_{E}(\sem{U})).$$
	And  we want, as usual for a CPS translation, that if $\Gamma \p V : T$ then $\sem{\Gamma} \p \sem{V} : \sem{T}$ and that 
	if $\Gamma \p C : T$ then $\sem{\Gamma} \p \sem{C} : \neg_{E}(\neg_{E}(\sem{T}))$
	\begin{align*}
		\sem{0} &\triangleq \la x. \mathtt{Z}_o 
		&\sem{\mathtt{succ}(V)} &\triangleq \la x. \sigma(y. \sem{V}~y) \\
		\sem{x} &\triangleq x 
		&\sem{\la x. C} &\triangleq \la (x,k). \sem{C}~k \\ 
		\sem{\fix f. V} &\triangleq \fix f. \sem{V}  
		&\sem{V~W} &\triangleq \la k. \sem{V}~(\sem{W},k) \\ 
		\sem{\return(V)} &\triangleq \la k. k~\sem{V} 
		&\sem{\letin{x = C}~D} &\triangleq \la k. \sem{C} (\la x. \sem{D}~k)  
	\end{align*}   
	$$\sem{\case(V;0 \mapsto C;\mathtt{succ}(n) \mapsto D)} \triangleq \la k. \shandle{H(C,D,k)}~\sem{V} \underline{0}$$
	where $H(C,D,k)$ is defined by, when $C$ and $D$ have type $T$, and $k$ of type $\neg_{E}(T)$ by: 
	\begin{align*}
		\return &\mapsto \sem{C}~k &\sigma(n) &\mapsto \sem{D}~k
	\end{align*}
	This time, the encoding of case is rather simple. Indeed, we will show that shallow handlers naturally encodes the conditional with this construction.  
\end{definition}

\begin{lemma}
	If $\Gamma \p V : T$ then $\sem{\Gamma} \p \sem{V} : \sem{T}$ and if $\Gamma \p C : T$ then $\sem{\Gamma} \p \sem{C} : \neg_{E}(\neg_{E}(\sem{T}))$
\end{lemma}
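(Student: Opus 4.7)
The plan is to proceed by mutual induction on the two typing derivations $\Gamma \p V : T$ and $\Gamma \p C : T$ of PCF, following the structure of the translation. In each case I check that the translated term can be given the claimed type in the ECPS type system of Figure~\ref{f:HECPStyping} together with the shallow-handler rules of Figure~\ref{f:shallow}. Recall that on types we have $\sem{\type{Nat}} = \neg_E(1)$ and $\sem{T \rightarrow U} = \neg_E(\sem{T},\neg_E(\sem{U}))$, and extend $\sem{\cdot}$ pointwise to contexts.

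For the value cases most work is routine. The constant $0$ is translated to $\la x.\return$, which has type $\neg_E(1)$ since the body $\return$ is typable as a term under any context with the effect set $E$. For $\mathtt{succ}(V)$, by induction $\sem{V} : \neg_E(1)$, so $\sem{V}\,y$ is a term at $E$ whenever $y : 1$, hence $\sigma(y.\sem{V}\,y)$ is a term at $E$ and the $\la$-abstraction gives the expected type $\neg_E(1)$. The variable and abstraction cases are immediate from the induction hypothesis and the abstraction rule, and $\fix$ is handled by the fixpoint rule exactly as in the source. For computations, $\sem{V\,W} = \la k.\sem{V}\,(\sem{W},k)$, $\sem{\return(V)} = \la k.\,k\,\sem{V}$, and $\sem{\letin{x = C}~D} = \la k.\sem{C}\,(\la x.\sem{D}\,k)$ all follow the standard double-negation CPS pattern: the induction hypothesis supplies $\sem{C}$ at type $\neg_E(\neg_E(\sem{T}))$, and the type $\neg_E(\neg_E(\sem{U}))$ is obtained by $\la k$-abstracting after applying $\sem{C}$ to a freshly built continuation of type $\neg_E(\sem{T})$.

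The main case, and the only one that requires a careful unfolding, is $\sem{\case(V;0 \mapsto C;\mathtt{succ}(n) \mapsto D)} = \la k.\shandle{H(C,D,k)}~\sem{V}\,\underline{0}$. Here I assume $k : \neg_E(\sem{T})$ and argue that the handled term has type $o$ under $E$. By the induction hypothesis on $V$, $\sem{V} : \neg_E(1)$, so $\sem{V}\,\underline{0}$ is a term typable under effect set $E$; this is what the shallow handler is applied to. For the shallow-handler type rule I need $H(C,D,k) : E \Rightarrow^{\dagger} E$, producing terms at $E$. The $\return$ clause $\sem{C}\,k$ is a term at $E$ by the induction hypothesis on $C$ (which gives $\sem{C} : \neg_E(\neg_E(\sem{T}))$) applied to $k : \neg_E(\sem{T})$. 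The operation clause $\sigma(n) \mapsto \sem{D}\,k$ should be read as binding the continuation variable $n$ to the continuation (the parameter of $\sigma$ is of type $\type{Unit}$ and is ignored), so that $n : \neg_{E}(1) = \sem{\type{Nat}}$; this matches the type assigned to $n$ in $\sem{\Gamma},n:\sem{\type{Nat}}$, and hence the induction hypothesis on $D$ gives $\sem{D} : \neg_E(\neg_E(\sem{T}))$, from which $\sem{D}\,k$ is a term at $E$.

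The main obstacle is purely bookkeeping: checking that the shallow-handler typing rule from Figure~\ref{f:shallow} gives exactly the continuation type $\neg_E(1)$ that is needed to reuse the induction hypothesis on $D$ under the enriched context $\sem{\Gamma}, n : \sem{\type{Nat}}$. Once this alignment is observed, the handler clause typings match and the rest of the $\case$ case closes. All the other cases are syntax-directed and follow immediately from the induction hypotheses, so no additional lemmas are required.
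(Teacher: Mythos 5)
Your proof is correct and is exactly the induction on typing derivations that the paper's one-line "the proof is straightforward" alludes to; the only case deserving attention is the $\case$ clause, and you handle it as intended, noting that the shallow-handler rule assigns the continuation variable $n$ the type $\neg_{E'}(1)$ with $E'=E$, which coincides with $\sem{\type{Nat}}$ and lets the induction hypothesis on $D$ apply under $\sem{\Gamma}, n:\sem{\type{Nat}}$. No gaps.
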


\begin{proof}
	The proof is straighforard. 
\end{proof}

Now, we want to prove that the translation is a correct simulation.

\begin{lemma}
	If $\Gamma \p C : T$ and $C \rightarrow D$ then for all $\sem{\Gamma} \p k : \neg_{E}(\sem{T})$, $\sem{C}~k \rightarrow^+ \sem{D}~k$.  
\end{lemma}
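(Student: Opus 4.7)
The plan is to prove the simulation by induction on the reduction $C \to D$, following the standard template for CPS-simulation arguments, supported by a substitution lemma stating that $\sem{V[W/x]} = \sem{V}[\sem{W}/x]$ for values and $\sem{C[W/x]} = \sem{C}[\sem{W}/x]$ for computations. This substitution lemma is a routine structural induction on PCF syntax that I will establish first.

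For the purely functional base cases, the verification is mechanical. The $\beta$-rule $(\la x. C)\,W \to C[W/x]$ unfolds to $\sem{(\la x. C)\,W}\,k = (\la k'. \sem{\la x. C}\,(\sem{W}, k'))\,k$, which reduces in two further steps through $(\la (x,k'). \sem{C}\,k')(\sem{W}, k)$ to $\sem{C}[\sem{W}/x]\,k = \sem{C[W/x]}\,k$ by the substitution lemma; the $\fix$ rule is analogous. The let-return rule $\letin{x = \return(V)}\,D \to D[V/x]$ unfolds to $(\la k'. \sem{\return(V)}\,(\la x. \sem{D}\,k'))\,k$, which reduces through $(\la k'. k'\,\sem{V})(\la x. \sem{D}\,k)$ to $\sem{D}[\sem{V}/x]\,k$. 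The contextual let-rule is handled by applying the induction hypothesis to the inner computation after observing that $\sem{\letin{x = C_1}\,C_2}\,k$ reduces in one step to $\sem{C_1}\,(\la x. \sem{D}\,k)$, a context of the form required by the IH applied with continuation $\la x. \sem{D}\,k$.

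The interesting cases are the two reductions of $\case$. For the zero branch, $\sem{\case(0;\,0 \mapsto C;\,\mathtt{succ}(n) \mapsto D)}\,k$ reduces to $\shandle{H(C,D,k)}\,(\sem{0}\,\underline{0})$; since $\sem{0}\,\underline{0} = (\la x. \mathtt{Z}_o)\,\underline{0} \to \return$, a further contextual reduction brings us to $\shandle{H(C,D,k)}\,\return$, whose return clause fires to give precisely $\sem{C}\,k$. For the successor branch, $\sem{\case(\mathtt{succ}(V);\,0 \mapsto C;\,\mathtt{succ}(n) \mapsto D)}\,k$ reduces to $\shandle{H(C,D,k)}\,((\la x. \sigma(y.\sem{V}\,y))\,\underline{0})$, then to $\shandle{H(C,D,k)}\,\sigma(();\,y. \sem{V}\,y)$, whereupon the $\sigma$-clause of the shallow handler (in which the variable $n$ plays the role of the continuation binder $r$) fires to yield $\sem{D}\,k$ with $n$ substituted by $\la y.\sem{V}\,y$.

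The main obstacle is that the successor case produces $\sem{D}[\la y.\sem{V}\,y\,/\,n]\,k$ while the statement requires $\sem{D[V/n]}\,k = \sem{D}[\sem{V}/n]\,k$, and $\la y. \sem{V}\,y$ is only $\eta$-equivalent to $\sem{V}$, not syntactically identical. To close the gap, the proof must either appeal to an $\eta$-conversion step allowed in the reduction, or, as is done in \cite{Hillerstrom2018:Shallow} and in the deep-handler case earlier in the paper, weaken the statement to a simulation up to an approximation relation $\ge$ that tolerates such administrative $\eta$-expansions. I expect the cleanest route is to prove the lemma up to this $\ge$-relation, in which case the successor case follows directly since $\la y. \sem{V}\,y$ and $\sem{V}$ are related by $\ge$ whenever $\sem{V}$ is a value of the appropriate negative type, and then the upgrade to the stated form is automatic wherever the PCF variable $n$ occurs only in elimination position inside $\sem{D}$.
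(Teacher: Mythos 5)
Your proof follows essentially the same route as the paper's: all cases except $\case$ are treated as a standard CPS simulation (via a substitution lemma), and the zero and successor branches of $\case$ are discharged through the shallow handler's return and $\sigma$ clauses respectively. You even identify the same residual gap in the successor case---that the handler yields $\sem{D}[n := \la y.\sem{V}\,y]\,k$ rather than $\sem{D}[n := \sem{V}]\,k$---which the paper resolves exactly as you propose, by declaring that it works up to $\eta$-equivalence.
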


\begin{proof}
	The only difference with a standard CPS translation is for the $\case$ constructor. Thus, only detail this case. 
	
	\begin{itemize}
		\item Consider the reduction $\case(0,0 \mapsto C; \mathtt{succ}(n) \mapsto D) \rightarrow C$.
		Then, we have:
		\begin{align*}
			\sem{\case(0,0 \mapsto C; \mathtt{succ}(n) \mapsto D)}~k &\rightarrow \shandle{H(C,D,k)}~\sem{0}() \\
			&\rightarrow  \shandle{H(C,D,k)}~\return \\ 
			&\rightarrow  \sem{C}~k 
		\end{align*}
		This concludes this case. 
		\item Consider the reduction $\case(\mathtt{succ}(V),0 \mapsto C; \mathtt{succ}(n) \mapsto D) \rightarrow D[n := V]$	
		Then, we have:
		\begin{align*}
			\sem{\case(\mathtt{succ}(V),0 \mapsto C; \mathtt{succ}(n) \mapsto D)}~k &\rightarrow \shandle{H(C,D,k)}~\sem{\mathtt{succ}(V)}() \\
			&\rightarrow  \shandle{H(C,D,k)}~\sigma(y. \sem{V}~y) \\ 
			&\rightarrow  \sem{D}[n := \la y. \sem{V}~y]~k \\ 
			&\equiv_{\eta} \sem{D}[n := \sem{V}]~k
		\end{align*}
		We need to use the $\eta$ equivalence to get exactly the result, but we could always formally avoid it, but for the sake of simplicity we just consider that we work with $\eta$-equivalence. 
	\end{itemize}
\end{proof}

From this, we can conclude the undecidability result for ECPS with shallow handlers.

\begin{lemma}{}
	For any term $\p C : \type{Nat}$ in PCF, there is a term of ECPS with shallow handlers $t_f \triangleq \sem{C}~(\la n. n~\underline{0})$ such that $\p_E t_f$ such that the effect tree of $t_f$ represents the normal form of $C$ (if it exists, otherwise the effect tree is $\bot$).
\end{lemma}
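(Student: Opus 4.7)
The plan is to combine the simulation lemma just proved with a direct analysis of how closed values of type $\type{Nat}$ behave under the CPS translation, then case-split on whether $C$ terminates.

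First, I would iterate the previous simulation lemma to lift reductions of $C$ to reductions of $\sem{C}$ applied to an arbitrary continuation $k$. Specifically, the lemma gives $\sem{C}~k \rightarrow^+ \sem{D}~k$ whenever $C \rightarrow D$, so by induction on the length of reduction sequences, $C \rightarrow^* D$ implies $\sem{C}~k \rightarrow^* \sem{D}~k$. I would also want to note that the converse essentially holds in terms of generating effect-tree nodes: since the only way $\sem{C}~k$ can expose a $\sigma$ at its root is after a sequence of reductions mirroring a reduction in $C$ reaching either a $\mathtt{succ}$-guarded pattern match or a $\return(V)$ with $V$ containing a $\mathtt{succ}$, divergence of $C$ must yield an infinite reduction sequence for $\sem{C}~k$ that never exposes any $\sigma$ or $\return$, so $ET(t_f) = \bot$.

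Next, suppose $C$ does terminate, so $C \rightarrow^* \return(V)$ for some closed PCF value $V$ of type $\type{Nat}$; necessarily $V = \mathtt{succ}^n(0)$ for a unique $n \in \mathbb{N}$. By the iterated simulation and the translation of $\return$ and application, I would derive
\begin{align*}
 t_f = \sem{C}~(\la n. n~\underline{0})
 &\rightarrow^* \sem{\return(V)}~(\la n. n~\underline{0}) \\
 &= (\la k. k~\sem{V})~(\la n. n~\underline{0}) \\
 &\rightarrow^+ \sem{V}~\underline{0}.
\end{align*}
It then suffices to show that $\sem{\mathtt{succ}^n(0)}~\underline{0}$ generates exactly the tree $\sigma^n(\return)$, i.e.\ a spine of $n$ applications of $\sigma$ terminated by $\return$. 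This I would prove by a short induction on $n$: the base case is $\sem{0}~\underline{0} = (\la x. \return)~\underline{0} \rightarrow \return$, and the inductive case is $\sem{\mathtt{succ}(V)}~\underline{0} = (\la x. \sigma(y. \sem{V}~y))~\underline{0} \rightarrow \sigma(y. \sem{V}~y)$, whose unique continuation $\sem{V}~y$ (for each value of $y$, which is irrelevant here since $\sigma$ has arity $1$ over $\type{Unit}$) is by induction the tree $\sigma^{n-1}(\return)$.

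The only subtle step is handling the ``no normal form'' case cleanly: I need that an infinite reduction sequence of $C$ does not accidentally get mapped to a $\sem{C}~k$ that still produces some effect-tree structure. The key point is that every reduction $C \rightarrow D$ is mirrored by at least one reduction step $\sem{C}~k \rightarrow^+ \sem{D}~k$ (strict progress), and in between, no reduction of $\sem{C}~k$ exposes an operation at the head unless $C$ itself is already in normal form of the appropriate shape. I would argue this by inspecting the translation clauses to confirm that a head $\sigma$ in $\sem{C}~k$ can only appear once the underlying $C$ has been fully reduced to a returned successor value; equivalently, the administrative reductions in $\sem{C}~k$ preserve the property ``head is not yet an operation'' until a $\sem{\mathtt{succ}(\cdot)}$ or $\sem{\return(V)}$ emerges. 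Combining both cases, $ET(t_f)$ is $\sigma^n(\return)$ when $C$ reduces to (the encoding of) $n$, and $\bot$ otherwise, which is exactly the claim.
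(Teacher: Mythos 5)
Your proposal is correct and follows essentially the same route as the paper's proof: case-split on termination of $C$, use the iterated simulation lemma to reduce $t_f$ to $\sem{V}~\underline{0}$ in the terminating case, and observe that this term generates the spine $\sigma^n(\return)$ encoding $V$, with divergence of $C$ yielding $\bot$ via strict progress of the simulation. Your extra inductions (on $n$ for the value encoding, and the care about not prematurely exposing an operation in the divergent case) only make explicit what the paper leaves implicit.
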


\begin{proof}
	Suppose that  $\p C : \type{Nat}$ has no normal form, then $\sem{C}$ has no normal form, and thus the effect tree of $t_f$ is $\bot$. Otherwise, $C \rightarrow^* \return(V)$ with $V$ a value of type $\type{Nat}$. Then, with the previous lemma, we obtain 
	$$t_f \rightarrow^* \sem{\return(V)}~(\la n. n~\underline{0}) \rightarrow^* \sem{V}~\underline{0}  $$
	and we can easily see that the effect tree of $\sem{V}~\underline{0}$ represents exaclty the integer $V$. 
\end{proof}

Thus, to conclude on the undecidability of HECPS, we need to show that we can simulate shallow handlers using deep handlers.

\subsection{Encoding Shallow Handlers with Deep Handlers}

As we are only interested in encoding the previous translation, we fix the set of effect $E = \set{\sigma : \type{Unit} \ararrow 1}$. We introduce a duplication of this effect, and consider the set $E_d = \set{\sigma_d : \type{Unit} \ararrow 1}$. We will use this operation $\sigma_d$ as a 'marked' effect, similarly to the use of 'marked' letters that we can find in some Turing Machine problems. In the following proof, we will mainly work directly on effect trees, because we will introduce handlers and each of these handlers will slightly transform 
the computation by just changing a bit the effect tree. We believe this proof is far more intuitive. However, if we want to prove formally a simulation, we would need to use administrative context like before. 

As a first step toward this intuition of marked effects, we introduce the marking handlers, denoted $H_d$, and described by:
\begin{align*}
	H_d \triangleq \{ \return \mapsto \return;
	\qquad \sigma(r) \mapsto \sigma_d(x. r~x); 	
	\qquad \sigma_d(r) \mapsto \sigma_d(x. r~x) \}
\end{align*}
It is easy to see that the type of this handler is $(E \cup E_d) \Rightarrow E_d$ (it can also be the weaker type $(E \cup E_d) \Rightarrow (E \cup E_d)$ , and that this handler takes a computation $C$ and simulates $C$ but replace all effects $\sigma(x.t)$ by $\sigma_d(x.t)$, without changing anything else. 
We now introduce a handler that change everything except the root of the tree, to do this we use the previous handler:
\begin{align*}
	H_r \triangleq \{ &\return \mapsto \return; \qquad \sigma(r) \mapsto \sigma(y. \handle{H_d}~(r~y))	
\end{align*}
This handler $H_r$ has type $E \Rightarrow (E \cup E_d)$. We have, 
\begin{itemize}
	\item If $\Gamma \p_{E} t$, if $t \rightarrow t'$ then $\handle{H_r}~t \rightarrow \handle{H_r}~t'$.
	\item $\handle{H_r}~\return \rightarrow \return$
	\item $\handle{H_r}~\sigma(x.t) \rightarrow \sigma(y. \handle{H_d}~(\la x.(\handle{H_r}~t)~y))$
	Using $\alpha$-equivalence, we obtain:
	$\handle{H_r}~\sigma(x.t) \rightarrow^+ \sigma(x. \handle{H_d}~(\handle{H_r}~t))$. 
\end{itemize} 
For the sake of clarity, let us note $\handle{ H_d \circ H_r}~t$ for $\handle{H_d}~(\handle{H_r}~t)$. 
We have:
\begin{itemize}
	\item If $\Gamma \p_{E \cup E_d} t$, if $t \rightarrow t'$ then $\handle{H_d \circ H_r}~t \rightarrow^+ \handle{H_d \circ H_r}~t'$.
	\item $\handle{H_d \circ H_r}~\return \rightarrow^+ \return$.
	\item $\handle{H_d \circ H_r}~\sigma(x.t) \rightarrow^+ \sigma_d(x. \handle{H_d}~(\handle{H_d \circ H_r}~t))$. 
\end{itemize}
Now remark that applying $H_d$ twice does not change anything, by definition of $H_d$. And as we can see with the previous description, the effect tree of $\handle{H_d \circ H_r}~t$ is the same as $\handle{H_d}~t$ alone. Thus, we can show that the effect tree of $\handle{H_r}~t$ is the effect tree of $t$ when we replace all $\sigma$ by $\sigma_d$ except on the root. Thus, we have just found a way to distinguish the root of the effect tree from the rest, and we can use that to simulate a shallow handler.

Suppose that we want to simulate the following shallow handler $h$ with a type corresponding to the shallow handlers we used in the previous translation:
\begin{prooftree}
	\AXC{$\Gamma \p_E t_r $}
	\AXC{$\Gamma, r : \neg_{E}(1) \p_E t_{\sigma}$}
	\BIC{$\Gamma \p \set{\return \mapsto t_r} \cup \set{\sigma(r) \mapsto t_{\sigma}} : E \Rightarrow^{\dagger} E$}
\end{prooftree} 
Consider the following (deep) handler:
\begin{align*}
	H_h \triangleq \{ \return \mapsto t_r;
	\qquad \sigma(r) \mapsto t_{\sigma};
	\qquad \sigma_d(r) \mapsto \sigma(x. r~x) \}
\end{align*}
that has the type $(E \cup E_d) \Rightarrow E$
and consider the following term: $\handle{H_h \circ H_r}~t$. 
We have: 
\begin{itemize}
	\item If $t \rightarrow t'$ then $\handle{H_h \circ H_r}~t \rightarrow^+ \handle{H_h \circ H_r}~t'$
	\item $\handle{H_h \circ H_r}~\return \rightarrow^+ t_r$
	\item $\handle{H_h \circ H_r}~\sigma(y.t) \rightarrow^+ t_\sigma[r := (\la x. \handle{H_h \circ H_d \circ H_r}~t))]$
\end{itemize}
And now we have to show that $\handle{H_h \circ H_d \circ H_r}~t$ behaves like $t$ (the context is administrative). We already know that $H_d \circ H_r$ behave like $H_d$, and on effect tree with effect only in $E_d$, the handler $H_h$ will just replace every $\sigma_d$ by $\sigma$. Thus, the effect tree of $t$ is exactly the effect tree of $\handle{H_h \circ H_d \circ H_r}~t$. Thus, $H_h \circ H_r$ behaves like a shallow handler. We can then take the previous encoding of $PCF$ in ECPS with shallow handlers, and replace the use of a shallow handler $h$ by the use of $H_h \circ H_r$, and we obtain a valid encoding of $PCF$ to ECPS with deep handlers. 

\begin{theorem}
	HECPS is not \MSO-decidable
\end{theorem}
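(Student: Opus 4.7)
The plan is to combine the two preceding constructions into a single translation from \PCF\ into \HECPS\ that preserves the effect tree up to a harmless relabeling, and then reduce the \PCF\ halting problem to an \MSO\ model-checking query on the resulting effect tree. Since the halting problem for \PCF\ is undecidable but is expressible as a trivial \MSO\ formula (``the root of the tree is not $\bot$''), this will yield the theorem.

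First, I would take the translation $\sem{\cdot}$ from \PCF\ into ECPS with shallow handlers defined above. By the correctness lemma for that translation, for any closed \PCF\ term $\p C : \type{Nat}$, the effect tree of $t_f \triangleq \sem{C}~(\la n.\, n~\underline{0})$ is a finite $\sigma$-chain ending in $\return$ whenever $C$ normalises, and is $\bot$ otherwise. Next, using the construction of the previous subsection, I would replace every occurrence of a shallow-handle $\shandle{h}~t$ in $t_f$ (there is only one shape of shallow handler used, namely $H(C,D,k)$) by the composite $\handle{H_h \circ H_r}~t$ built with the marking handlers $H_r$ and $H_d$ and a deep handler $H_h$ whose clauses copy those of $h$ and relay the duplicated effect $\sigma_d$ back to $\sigma$. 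Call the resulting closed \HECPS\ term $t_f^{\dagger}$.

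The key technical step is verifying that this textual replacement preserves the effect tree, modulo irrelevant administrative reductions. For this I would prove the following invariant, exactly as sketched in the shallow-to-deep simulation: for any $\p_E t$, the term $\handle{H_d \circ H_r}~t$ has the same effect tree as $\handle{H_d}~t$ (because applying $H_d$ twice is idempotent on effect trees), and consequently $\handle{H_h \circ H_r}~t$ reduces the root constructor of $t$ using the clauses of $h$ while leaving the subtrees unchanged up to the bijection $\sigma \leftrightarrow \sigma_d$ that $H_h$ undoes. Iterating, $t_f^{\dagger}$ and $t_f$ have identical effect trees. The main obstacle here is to turn this tree-level intuition into a formal simulation; I would reuse the approximation-up-to-administrative-context machinery introduced for the \HEPCF\ undecidability proof, lifted to \HECPS\ (where there are no let-bindings, so the administrative contexts reduce to handle-frames and are even simpler).

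Finally, I would conclude the reduction. Given $\p C : \type{Nat}$, build $t_f^{\dagger}$ in logarithmic space. By the simulation, the effect tree $ET(t_f^{\dagger})$ contains the leaf $\bot$ if and only if $C$ diverges. The property ``$\bot$ does not occur in the tree'' is expressible as an APT (a single-state automaton with priority $0$ that accepts every constructor except $\bot$, on which it rejects), hence as an \MSO\ formula. A decision procedure for \MSO\ model-checking on effect trees of \HECPS\ terms would therefore decide the \PCF\ halting problem, contradiction. Thus \HECPS\ is not \MSO-decidable.
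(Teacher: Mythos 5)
Your proposal follows essentially the same route as the paper: first the translation of \PCF\ into ECPS with shallow handlers, then the simulation of the (single) shallow handler by the composite $\handle{H_h \circ H_r}~t$ built from the marking handlers $H_d$, $H_r$ and the relaying deep handler $H_h$, with the key invariant that $H_d \circ H_r$ agrees with $H_d$ at the level of effect trees, and finally the reduction from the \PCF\ halting problem via the \MSO-expressible property that $\bot$ does not occur. The argument is correct and matches the paper's proof, including its acknowledgement that the tree-level reasoning would need the administrative-context machinery to be made fully formal.
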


\section{EPCF with Restricted Handlers (Generic Effect)}
\else 
\begin{remark}
	This proof is dependent on the fact that handlers can change the output type, which allows here to do this encoding of pairs of computation internally, create a copy and returns at the end only the second computation. A careful reader might wonder if the control of this output type is necessary for undecidability, and that we could have a decidable \MSO\ model-checking for restrained handlers, with a fixed output type, for example $\type{Unit} _E\Rightarrow_{E'} \type{Unit}$. This is typically the kind of handler we would obtain with an extension of the \ECPS\ language \cite{MatacheStaton2019:LogicAlgebraicEffects,Matache2018:MasterThesis} with handlers. However, we show in the supplementary material that this restriction still leads to a language for which \MSO\ model-checking is undecidable.
\end{remark}
\subsection{Recovering Decidability in a Calculus for Generic 
Effects}
\fi
Does the results in the last section mean that we can have algebraic effects 
but that we have to get away with handlers if HOMC is our concern? Essentially, 
it rather shows that general handlers, define a transformation on effect trees 
that is simply \emph{too expressive} for HOMC. However, this does not mean that 
\emph{simpler} tree transformations should not be expressible in HORS, and this 
section is devoted to introducing a class of handlers whose underlying tree 
transformations is amenable to HOMC. Intuitively, an interpretation of an 
algebraic operation $\sigma : \type{B} \ararrow k$ could be just a function of 
type $B \rightarrow k$, with no access to the continuation. This is of course 
much less expressive than standard handlers, but we will argue that what we 
obtain is not \emph{too} restrictive. Formally, we could capture this by a 
restriction of handlers, defined in the following language, called \GEPCF\ (for 
Generic Effects \PCF):
\begin{align*}
&\mbox{(Types)}&	T,U &::= \type{B} \mid T \rightarrow_E U \mid k \\
&\mbox{(Effects)}&	E &::= \emptyset \midd \set{\sigma : \type{B} \ararrow k} \cup E \\
&\mbox{(Values)}&	V,W &::= v \midd \underline{n} \midd x \midd \la x. C \midd \fix~f. V \\
&\mbox{(Computations)}&	C,D &::= V~W \midd \sigma(V; x.C) \midd \return(V) \midd \letin{x = C}~D \\
&&&\phantom{::=} \midd  \handle{H}~C \midd \case(V;C_1,\dots,C_k) \\ 
&\mbox{(Handlers)}&	H &::= \set{\return(x) \mapsto C_r} \cup \set{\sigma_i(x) \mapsto C_i \mid 1 \le i \le n}
\end{align*}
Note that as expected, the handler for an algebraic operation does not have 
access to the continuation. The static and dynamic semantics is in 
Figure~\ref{f:GEPCFsemantics}.

\begin{figure}
	\begin{framed}
		\begin{center}
			\AXC{$\Gamma \p H : T _E\Rightarrow_{E'} U $}
			\AXC{$\Gamma \p_{E} C : T$}
			\BIC{$\Gamma \p_{E'} \handle{H}~C : U$}
			\DP  
			\\ 
			\vvskip  
			\AXC{$E = \set{\sigma_i : \type{B}_i \ararrow k_i \mid 1 \le i \le n}$}
			\AXC{$\Gamma, x : T \p_{E'} C_r : U $}
			\AXC{$(\Gamma, x : \type{B}_i, \p_{E'} C_i : k_i)_{1 \le i \le n}$}
			\TIC{$\Gamma \p \set{\return(x) \mapsto C_r} \cup \set{\sigma_i(x) \mapsto C_i \mid 1 \le i \le n} : T _E\Rightarrow_{E'} U$}
			\DP
			\\
			\vvskip  
			\AXC{$t \rightarrow t'$} 
			\UIC{$\handle{H}~t \rightarrow \handle{H}~t' $}
			\DP 
			\\ 
			\vvskip 
			For the following rules, we define $H =\set{\return(x) \mapsto C_r} \cup \set{\sigma_i(x) \mapsto C_i \mid 1 \le i \le n}$ 
			\\
			\vvskip 
			\AXC{}
			\UIC{$\handle{H}~\return(V) \rightarrow C_r[x := V]$} 
			\DP 
			\\
			\vvskip 
			\AXC{}
			\UIC{$\handle{H}~\sigma_i(V;y.C) \rightarrow \letin{y = C_i[x := V]}~(\handle{H}~C) $}
			\DP 
		\end{center}
	\end{framed}
	\caption{Interpretations in GPCF}
	\label{f:GEPCFsemantics}
\end{figure}

Please observe that we e can always simulate a \GEPCF\ handler by a \HEPCF\ 
handler:
$$ \return \mapsto C_r \qquad \sigma_i(x;r) \mapsto \letin{z = C_i}~(r~z) $$
Thus, those restrained handlers are a specific case of standard handlers in 
which the handling of an operation $\sigma_i$ always has the shape $\letin{z = 
C_i}~(r~z)$. Informally, if we look at it with the point of view of tree 
transformations, this is exactly what we would obtain from a generic effect for 
the effect tree monad, that is why we call this language \GEPCF. 

Let us now show that this language can be translated into the $\laY$-calculus, 
thus proving the \MSO-decidability.

\begin{definition}[Translation from \GEPCF\ to $\la$-Y calculus]
	We define the translation $\toY{\cdot}$ on types and terms. For the sake of conciseness, in types, we use some abuse of notations, such as pairs in the left-hand side of an arrow type to denote the obvious curried version of this type. 
	$$ \toY{\type{B}} \triangleq o \qquad \toY{k} \triangleq o^k \rightarrow o \qquad \toY{(T \rightarrow_E U)} \triangleq \toY{T} \rightarrow \toY{E} \rightarrow \neg\toY{U}  \rightarrow o $$
	$$ \toY{\emptyset} \triangleq () \qquad \toY{E \cup \sigma : \type{B} \ararrow k} \triangleq (\toY{E},o \rightarrow \neg\toY{k}  \rightarrow o)  $$
	And we want to show that if $\Gamma \p V : T$ then $\toY{\Gamma} \p \toY{V} : \toY{T}$ and that if $\Gamma \p_E C : T$ then $\toY{\Gamma} \p \toY{C} : \toY{E} \rightarrow (\toY{T} \rightarrow o) \rightarrow o$
	Informally, we will do a CPS-translation in which there is also a \emph{handler continuation}, of type $\toY{E}$, giving an interpretation for all algebraic effects in $E$. We take a $\la Y$-calculus with a signature $\Sigma$  with at least all the constants values of EPCF. 
	\begin{align*}
		\toY{v} &\triangleq v 
		&\toY{\underline{n}} &\triangleq \la (x_1,\dots,x_k). x_n \\
		\toY{x} &\triangleq x 
		&\toY{(\la x. C)} &\triangleq \la x. \toY{C} \\
		\toY{(\fix~x.V)} &\triangleq Y~(\la x. \toY{V}) 
		&\toY{(V~W)} &\triangleq \toY{V}~\toY{W} \\ 
		\toY{(\return(V))} &\triangleq \la (\vect{h},c). c~\toY{V} 
		&\toY{(\sigma_i(V;x.C))} &\triangleq \la (\vect{h},c). h_i~\toY{V}~(\la x. \toY{C}~\vect{h}~c)
	\end{align*}
	\begin{align*}
		\toY{(\letin{x = C}~D)} &\triangleq \la (\vect{h},c). \toY{C}~\vect{h}~(\la x. \toY{D}~\vect{h}~c) \\
		\toY{(\case(V;C_1,\dots,C_k))} &\triangleq \la (\vect{h},c). \toY{V}~(\toY{C_1}~\vect{h}~c)~\cdots~(\toY{C_k}~\vect{h}~c) \\
		\text{We suppose } H &= \set{\return(x) \mapsto C_r ; \sigma_i(x) \mapsto C_i \mid 1 \le i \le n} \\
		\toY{(\handle{H}~C)} &\triangleq \la (\vect{h},c). \toY{C}~(\la (x,r). \toY{C_1}~\vect{h}~r)~\cdots~(\la (x,r). \toY{C_n}~\vect{h}~r)~(\la x. \toY{C_r}~\vect{h}~c).  
	\end{align*} 
\end{definition}

\begin{lemma}
	If $\Gamma \p V : T$ then $\toY{\Gamma} \p \toY{V} : \toY{T}$ and that if $\Gamma \p_E C : T$ then $\toY{\Gamma} \p \toY{C} : \toY{E} \rightarrow \neg \neg \toY{T} $
\end{lemma}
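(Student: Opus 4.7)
The plan is to prove the two statements by a simultaneous structural induction on the two typing derivations $\Gamma \p V : T$ and $\Gamma \p_E C : T$. Each clause of the translation corresponds to exactly one typing rule, so there is one case per rule, and in almost every case the required typing derivation in the $\la Y$-calculus can be read off mechanically from the inductive hypotheses together with the definitions of $\toY{(\cdot)}$ on types and effect contexts.

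First I would dispatch the value cases. For $v$, $\underline{n}$, $x$ and $\fix$ the claim is immediate from the definitions of $\toY{\type{B}}$, $\toY{k}$, the context translation, and the typing rule for $Y$. For $\la x.C$, the IH gives $\toY{\Gamma},x : \toY{T}\p \toY{C} : \toY{E}\rightarrow \neg\neg\toY{U}$, and unfolding $\toY{(T\rightarrow_E U)} = \toY{T}\rightarrow\toY{E}\rightarrow\neg\toY{U}\rightarrow o$ shows that $\la x.\toY{C}$ has exactly that type. For application $V~W$, the IH gives $\toY{V}:\toY{T}\rightarrow\toY{E}\rightarrow\neg\toY{U}\rightarrow o$ and $\toY{W}:\toY{T}$, so $\toY{V}~\toY{W}$ has type $\toY{E}\rightarrow\neg\neg\toY{U}$ as required.

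Next I would handle the computation cases. For $\return(V)$, $\la(\vect{h},c).c~\toY{V}$ is well-typed because $c$ has type $\neg\toY{T}$ and $\toY{V}:\toY{T}$. For $\letin{x=C}~D$ and $\case$ the types compose in the obvious way, using the IHs twice and the type $\toY{k}=o^k\rightarrow o$ of enumeration values. For $\sigma_i(V;x.C)$, the projection of $\vect{h}$ onto the component corresponding to $\sigma_i$ has type $o\rightarrow\neg\toY{k_i}\rightarrow o$ by construction of $\toY{E}$, and applying it to $\toY{V}$ and to $\la x.\toY{C}~\vect{h}~c$ yields the desired $o$.

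The one case that deserves a bit of care, and the one I expect to be the main obstacle, is the handler clause. Here $\toY{C}$ has type $\toY{E}\rightarrow\neg\neg\toY{T}$ for the handled effect set $E=\{\sigma_i:\type{B}_i\ararrow k_i\}$ and inner type $T$, so it expects a tuple of handlers whose $i$-th component has type $\toY{\type{B}_i}\rightarrow\neg\toY{k_i}\rightarrow o$, plus a $\toY{T}$-continuation. I would check term-by-term that the provided components $\la(x,r).\toY{C_i}~\vect{h}~r$ do have that type: by IH, $\toY{C_i}:\toY{E'}\rightarrow\neg\neg\toY{k_i}$, and $\vect{h}:\toY{E'}$, $r:\neg\toY{k_i}$ are in scope with the correct types, so the body is in $o$. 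Symmetrically, $\la x.\toY{C_r}~\vect{h}~c$ has type $\toY{T}\rightarrow o = \neg\toY{T}$, because $\toY{C_r}:\toY{E'}\rightarrow\neg\neg\toY{U}$ and $c:\neg\toY{U}$. Combining these, the whole term $\toY{(\handle{H}~C)}$ is well-typed of type $\toY{E'}\rightarrow\neg\neg\toY{U}$, as required. All other cases being routine, this completes the induction.
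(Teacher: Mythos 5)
Your proof is correct and is exactly the argument the paper intends: the paper dismisses this lemma with ``The proof is straightforward,'' and your structural induction on the typing derivations, with the handler clause singled out as the only case requiring a real type computation, fills in precisely the routine details that were left implicit.
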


The proof is straightforward. 

\begin{lemma}
	If $\Gamma \p_E C : T$ and $C \rightarrow D$ then for any well-typed continuations $(\vect{h},c)$, we have $\toY{C}_{\vect{h},c} \rightarrow^+ B =_{adm} \toY{D}_{\vect{h},c}$, where $=_{adm}$ is the equivalence relation with $(\la ({h},c). C)~\vect{h}~c \equiv C$.  
\end{lemma}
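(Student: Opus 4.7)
The plan is to proceed by induction on the derivation of $C \rightarrow D$, doing a case analysis on the reduction rule applied at the root. In each case I expect $\toY{C}~\vect{h}~c$ to reduce by a handful of $\beta$-steps (and, in the fixpoint case, one $Y$-unfolding) to a term that is either syntactically equal to $\toY{D}~\vect{h}~c$ or differs from it only by the administrative equivalence $(\la(\vect{h},c). M)~\vect{h}~c \equiv M$, which serves to re-wrap a body in its outer CPS abstraction so that it can accept the supplied continuations.

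Before starting the case analysis, I would first state and prove a routine substitution lemma by mutual induction on values and computations: $\toY{V[W/x]} = \toY{V}[\toY{W}/x]$ and $\toY{C[W/x]} = \toY{C}[\toY{W}/x]$. This will be used in every non-trivial case to identify the result of a $\beta$-reduction on the translated redex with the translation of the substituted body. The purely functional cases ($\beta$, fixpoint, $\case$, and let-return) are then straightforward one- or two-step reductions using this lemma. The congruence rules for \texttt{let} and \texttt{handle} follow from the induction hypothesis after a single $=_{adm}$ step to strip the outer $\la(\vect{h},c)$ and plug in the given $\vect{h}, c$. The let-sigma rule $\letin{x = \sigma(V;y.C)}~D \rightarrow \sigma(V;y.\letin{x = C}~D)$ requires unfolding both sides carefully: they both reduce, up to $=_{adm}$, to $h_i~\toY{V}~(\la y. \toY{C}~\vect{h}~(\la x. \toY{D}~\vect{h}~c))$, so the case closes by computing both sequences of $\beta$-steps.

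The main obstacle I anticipate is the handler-sigma case $\handle{H}~\sigma_i(V;y.C) \rightarrow \letin{y = C_i[x:=V]}~(\handle{H}~C)$. On the left, unfolding the translation exposes the freshly-built handler continuations $\vect{h}' = (\la(x,r).\toY{C_j}~\vect{h}~r)_j$ and $c' = \la x. \toY{C_r}~\vect{h}~c$ and triggers the translation of $\sigma_i$ against these new continuations, yielding $h'_i~\toY{V}~(\la y. \toY{C}~\vect{h}'~c')$, which further $\beta$-reduces to $\toY{C_i}[\toY{V}/x]~\vect{h}~(\la y. \toY{C}~\vect{h}'~c')$. On the right, unfolding the \texttt{let} and then the inner \texttt{handle} in the translation produces $\toY{C_i[V/x]}~\vect{h}~(\la y. \toY{C}~\vect{h}'~c')$ modulo $=_{adm}$. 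The two match thanks to the substitution lemma and the observation that $\vect{h}', c'$ depend only on the outer $\vect{h}, c$ and on $H$, and are therefore rebuilt identically on both sides; the main check is that no variable capture occurs when the same $\vect{h}$ and $c$ are threaded through $C_i$ on one side and through the reconstructed handler continuations on the other.
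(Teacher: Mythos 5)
Your proposal is correct and follows essentially the same route as the paper's proof: induction on the derivation of $C \rightarrow D$, dispatching the standard CPS cases quickly and concentrating on the handler-against-operation case, where both sides are unfolded to terms that agree up to the administrative equivalence and the (implicit in the paper, explicit in your write-up) substitution lemma. The only difference is that you spell out the substitution lemma and the let-$\sigma$ computation in more detail than the paper does, which is a harmless refinement rather than a different argument.
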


\begin{proof}
	The proof goes by induction on the derivation of $C \rightarrow D$. Some 
	cases are equivalent to the usual CPS translation so we can safely ignore 
	them. Similarly to Lemma~\ref{l:simulationEPCF}, we have to take care of 
	some administrative reductions. The new cases correspond to the cases of 
	handlers. The contextual case is straightforward since we can reduce the 
	head of the $\la-Y$-term. The most interesting case is the handlers against 
	an algebraic operation. For this case, we have 
	$$\toY{(\handle{H}~\sigma_i(V;y.C))}~\vect{h}~c \rightarrow^+ (\toY{C_i})[x := \toY{V}]~\vect{h}~(\la y. \toY{C}~\vect{\mathcal{H}}~\mathcal{C})$$
	where $\vect{\mathcal{H}}$ and $\mathcal{C}$ denotes the continuations such that 
	$$\toY{(\handle{H}~C)} \triangleq \la (\vect{h},c). \toY{C}~\vect{\mathcal{H}}~\mathcal{C}$$
	On the other hand, we have:
	$$\toY{(\letin{y = C_i[x := V]}~(\handle{H}~C))}~\vect{h}~c \rightarrow_{adm} \toY{C_i}[x := V]~\vect{h}~(\la y. (\la (\vect{h},c). \toY{C} \vect{\mathcal{H}}~\mathcal{C})~\vect{h}~c) $$
	And those two terms are administratively equivalent. 
	This concludes this proof. 
\end{proof}

We can now define a canonical continuation handler for a given set of effects.

\begin{definition}
	Consider a set $E =\set{\sigma_i : \type{B}_i \ararrow k_i}$ of effects. We pose $\Sigma_E = \set{\sigma_i : k_i +1}$ a signature for terms of the $\la$Y-calculus. We define the identity continuation handler for $E$ as the $\la$Y-term of type $\toY{E}$:
	$$(\vect{h}_E)_i \triangleq \la (x,f). \sigma_i~x~(f~(\la x_1,\dots,x_k. x_1))~\cdots~(f~(\la x_1,\dots,x_k. x_k)) $$
	In particular, remark that $\toY{(\sigma_i(V;x.C))}_{\vect{h}_E,c}$ becomes:
	$$\toY{(\sigma_i(V;x.C))}_{\vect{h}_E,c} \rightarrow \sigma_i~\toY{V}~(\toY{C}[x := \toY{\underline{0}}]~\vect{h}_E~c)~\cdots~(\toY{C}[x := \toY{\underline{k-1}}]\vect{h}_E~c) $$ 
	which corresponds exactly to the effect tree of $\sigma_i(V;x.C)$. 
\end{definition}

With this remark, we can conclude with the following theorem:

\begin{theorem}
	For any $\p_E C :T$, for any continuation $\p c : \toY{T} \rightarrow o$, we have:
	$$ET(C)[\return(V) \leftarrow BT(c~\toY{V})] = BT(\toY{V}~\vect{h}_E~c)  $$
\end{theorem}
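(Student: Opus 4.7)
The plan is to prove this by coinduction on the structure of the two (potentially infinite) trees, following the three-case analysis used in the definition of $ET(C)$ (Definition~\ref{d:effecttrees}). The key observation is that the identity continuation handler $\vect{h}_E$ was designed precisely so that, once plugged in, the administrative reduct of $\toY{\sigma_i(V;x.C)}\,\vect{h}_E\,c$ exposes the constant $\sigma_i$ at the head with the translations of the $k$ continuation branches as its children, matching the recursive clause of $ET$.

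First I would invoke the simulation lemma established just before the theorem: whenever $C \rightarrow D$ in \GEPCF, for any well-typed $(\vect{h},c)$ we have $\toY{C}\,\vect{h}\,c \rightarrow^+ B$ with $B =_{\mathit{adm}} \toY{D}\,\vect{h}\,c$. Iterating this, if $C \rightarrow^{\ast} N$ for some computation $N$, then $\toY{C}\,\vect{h}_E\,c$ reduces to a term administratively equivalent to $\toY{N}\,\vect{h}_E\,c$. Because administrative equivalence only contracts redexes of the form $(\lambda(\vect{h},c).M)\,\vect{h}\,c$, which do not affect the head symbol nor the Böhm tree (these are precisely the extra reductions that are harmless for computing $BT$, analogous to the administrative reductions of Lemma~\ref{l:simulationEPCF}), it is enough to analyze $\toY{N}\,\vect{h}_E\,c$ for $N$ in normal form or to detect divergence.

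The three cases then go as follows. If $C$ reduces infinitely without reaching a normal form, the simulation lemma propagates this divergence to $\toY{C}\,\vect{h}_E\,c$, so both sides equal $\bot$. If $C \rightarrow^{\ast} \return(V)$, then $\toY{C}\,\vect{h}_E\,c$ reduces administratively to $\toY{\return(V)}\,\vect{h}_E\,c = (\lambda(\vect{h},c).\,c\,\toY{V})\,\vect{h}_E\,c$, which further reduces to $c\,\toY{V}$; hence $BT(\toY{C}\,\vect{h}_E\,c) = BT(c\,\toY{V})$, which is exactly what the substitution $[\return(V) \leftarrow BT(c\,\toY{V})]$ applied to the leaf $\return(V) = ET(C)$ yields. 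If $C \rightarrow^{\ast} \sigma_i(v;x.C')$, then $\toY{C}\,\vect{h}_E\,c$ reduces administratively to $\toY{\sigma_i(v;x.C')}\,\vect{h}_E\,c$, which by the definitions of the translation and of $\vect{h}_E$ reduces in a few steps to
\[
\sigma_i\,v\,(\toY{C'[x:=\underline{0}]}\,\vect{h}_E\,c)\;\cdots\;(\toY{C'[x:=\underline{k-1}]}\,\vect{h}_E\,c),
\]
matching the recursive clause of $ET$ with constructor $\sigma_i$ at the root, parameter child $v$, and $k$ further subtrees; the coinductive hypothesis then equates each subtree $BT(\toY{C'[x:=\underline{j}]}\,\vect{h}_E\,c)$ with $ET(C'[x:=\underline{j}])[\return(V) \leftarrow BT(c\,\toY{V})]$, closing the case.

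The main obstacle I expect is handling the administrative noise cleanly: the simulation lemma only gives equality up to $=_{\mathit{adm}}$, and one must verify that $=_{\mathit{adm}}$ is strong enough to be absorbed when computing Böhm trees, yet does not collapse the head constant produced by $\vect{h}_E$. I would discharge this by a short auxiliary lemma stating that if $M =_{\mathit{adm}} M'$ then $BT(M) = BT(M')$, exploiting that administrative reductions never touch the head symbol of a weak head normal form headed by a constant in $\Sigma_E$. With that in hand, the coinductive argument above goes through essentially mechanically.
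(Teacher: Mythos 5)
Your proposal matches the paper's proof: the paper also proceeds by (co)induction on the effect tree of $C$, splitting into the same three cases (divergence transferred via the simulation lemma, $\return(V)$ reducing to $c~\toY{V}$, and $\sigma_i(V;x.D)$ exposing the head constant $\sigma_i$ via $\vect{h}_E$ before applying the inductive hypothesis to the branches). Your additional care about absorbing $=_{\mathit{adm}}$ into Böhm-tree equality is a detail the paper leaves implicit, but the argument is essentially identical.
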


\begin{proof}
	We can proceed by induction on the effect tree of $C$. 
	\begin{itemize}
		\item If $C$ can be reduced infinitely, then by simulation, $\toY{C}~\vect{h}_E~c$ can be reduced infinitely too. 
		\item If $C \rightarrow^* \return(V)$. Then $\toY{C}~\vect{h}_E~c \rightarrow^* c~\toY{V}$ by definition of $\toY{\return(V)}$.  
		\item If $C \rightarrow^* \sigma(V;x.D)$. Then $\toY{C}~\vect{h}_E~c \rightarrow^* \sigma_i~\toY{V}~(\toY{D}[x := \toY{\underline{0}}]~\vect{h}_E~c)~\cdots~(\toY{D}[x := \toY{\underline{k-1}}]\vect{h}_E~c)$. And we can conclude by induction hypothesis. 
	\end{itemize}
\end{proof}

In particular, if we have $T$ has a base type and a continuation that is the encoding of the identity, we can recover exactly the initial effect tree. Thus, we have:

\begin{corollary}
	For any \GEPCF\ computation $\p C : T$ and $\p c : \toY{T} \rightarrow o$, the \MSO\ model-checking on $ET(C)[\return(V) \leftarrow BT(c~V^*)]$ is decidable.
\end{corollary}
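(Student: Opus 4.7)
The plan is to simply chain the simulation theorem immediately preceding the corollary with Ong's classical decidability result for \MSO\ model-checking on B\"ohm trees of closed ground $\la Y$-terms (recalled as the main decidability theorem in Section~\ref{s:preliminaries}). So no new combinatorial content is needed; the work is entirely in checking that the hypotheses of Ong's theorem are met by the $\la Y$-term produced by the translation.

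Concretely, I would first observe that by the typing lemma for $\toY{\cdot}$, we have $\toY{C} : \toY{E} \rightarrow \neg\neg\toY{T}$ as a closed $\la Y$-term. The identity continuation handler $\vect{h}_E$ was defined in the preceding definition to be a closed $\la Y$-term of type $\toY{E}$, and the hypothesis $\p c : \toY{T} \rightarrow o$ gives the remaining continuation. Hence $\toY{C}~\vect{h}_E~c$ is a closed $\la Y$-term of ground type $o$ over a \emph{finite} signature $\Sigma_E = \set{\sigma_i : k_i + 1}$ (finite because \GEPCF's base and enumeration types are all finite, so $\Sigma_Y$ is finite). This is exactly the setting in which Ong's theorem applies, so \MSO\ model-checking on $BT(\toY{C}~\vect{h}_E~c)$ is decidable.

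Finally, I invoke the theorem immediately above the corollary, which gives the equality
$$ET(C)[\return(V) \leftarrow BT(c~\toY{V})] = BT(\toY{C}~\vect{h}_E~c).$$
Since the two trees are literally equal, any \MSO\ formula (equivalently, any APT) holds on one iff it holds on the other, and decidability transfers immediately.

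I do not expect any real obstacle. The only point worth double-checking is that the signature on which the APT is specified (including the branching node introduced for the parameter of each $\sigma_i$) coincides with the signature over which $BT(\toY{C}~\vect{h}_E~c)$ is built; this is precisely why $\vect{h}_E$ was defined to produce a node $\sigma_i~\toY{V}~(\cdots)~\cdots~(\cdots)$ with $k_i+1$ children, matching Definition~\ref{d:effecttrees}. If the specification is given instead over the ``standard'' effect-tree signature $\Sigma'_T$ with parameter-indexed operations, one first applies Proposition~\ref{t:MSObranchingtrees} to translate the APT to our branching-for-parameters convention before invoking the argument above.
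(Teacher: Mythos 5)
Your proposal is correct and matches the paper's own (largely implicit) argument: the corollary is obtained exactly by combining the preceding simulation theorem, which identifies $ET(C)[\return(V) \leftarrow BT(c~\toY{V})]$ with the B\"ohm tree of the closed ground $\la Y$-term $\toY{C}~\vect{h}_E~c$ over the finite signature $\Sigma_E$, with Ong's decidability theorem. Your additional remark about reconciling the parameter-branching signature via Proposition~\ref{t:MSObranchingtrees} is also consistent with how the paper handles specifications over standard effect trees.
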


Similarly to what happened in Corollary~\ref{t:EPCFdecidability}, there are of 
course some degrees of freedom in the choice of the continuation $c$.

\section{Related Work}
\label{s:relatedwork}
\paragraph{Effectful Higher-Order Programs and their Verification}
This is definitely not the first paper concerned with the 
verification of 
higher-order effectful programs. the denotational semantics of 
calculi having this 
nature has been studied since Moggi’s seminal work on 
monads~\cite{Moggi1998:ComputationalMonads}, 
thus implicitly providing notions of equivalence and refinement. All this has been 
given a more operational flavor in Plotkin and Power's account on adequacy for 
algebraic effects~\cite{PlotkinPower2003:AlgebraicOperations}, from which the operational semantics presented in 
this paper is greatly inspired. Logics for algebraic effects have been 
introduced by Pretnar and Plotkin~\cite{PlotkinPretnar2008:LogicAlgebraicEffects}, by Matache and
Staton \cite{MatacheStaton2019:LogicAlgebraicEffects}, and by Simpson and 
Voorneveld~\cite{Simpson2019:BehaviouralEquivalenceEffects}. The latter has certainly been another major source of 
inspiration although, as explicitly stated by the 
authors~\footnote{In 
\cite{Simpson2019:BehaviouralEquivalenceEffects}, Section~10, 
paragraph~8: ``We view the infinitary propositional logic of 
this paper as providing a low-level language, into which 
practical high-level finitary logics for expressing program 
properties can potentially be compiled. [...] We view
the development of such high-level logics and their compositional reasoning principles, aimed at
practical specification and verification, as a particularly 
promising topic for future research.''}, 
automatic verification techniques are to be considered out of the scope. In 
fact, we are not aware of any attempt to study the decidability of the 
aforementioned theories.

\paragraph{Higher-Order Model Checking}
Model checking of higher-order programs has been an active topic of 
investigation in the last twenty years, with many positive results, staring 
from the pioneering and partial results by Knapik et al.\cite{Knapik2001:MSOforHORS,Knapi2002:HORSareeasy} to the Ong's 
already mentioned breakthrough result~\cite{Ong2006:HOMC}, followed by Kobayashi and 
co-authors' work on model checking as (intersection) type checking~\cite{KobayashiOng2009:TypeSystemHOMC}.
Noticeably, some of these works go in the direction of extending the 
aforementioned decidability results to higher-order calculi endowed with some 
\emph{specific} form of effect, like probabilistic 
choice~\cite{Kobayashi2020:Termination} or 
some form of resource usage~\cite{Kobayashi2009:TypesHOMC}. Outcomes are not always been on the positive 
side, as the undecidability of the model checking problem for probabilistic 
variations on HORSs shows. Again, we are not aware of any study aimed at giving 
general criteria for decidability. 

\paragraph{Effect Handlers}
Effect handlers are a linguistic feature allowing to give computational meaning 
to algebraic operations through handlers, i.e. routines specifically dedicated 
to the management of these effects, which in this way can be managed internally 
by the program itself rather than by the environment. Given 
their elegance and 
naturalness in generalizing standard language construction like 
the $\mathtt{try}~\mathtt{with}$
operator for exceptions, handlers has been largely studied both theoretically and concretely 
\cite{Plotkin2009:Handlers,KammarICFP2013:Handlers,Hillerstorm2017:Continuation,Hillerstrom2018:Shallow,Bauer2019:Algebraic,Sekiyama2020:PolymorphicEffects,Biernackietal2019:AbstractingEffects}.
 We are not aware of any study about handlers in a finitary 
setting, and even less about questions of decidability with 
regards to higher-order model checking.

\section{Conclusion}
This paper tackles, for the first time in a general way, the problem 
of evaluating the intrinsic difficulty of the higher-order model 
checking problem when applied to programs that exhibit effects, 
possibly managed through handlers. The results obtained are in two 
styles: while the problem of capturing algebraic operations in calculi 
amenable to HOMC does not pose particular problems and indeed can be 
solved in its generality, observing the effects produced by such 
operations and handling them must be done with great care: we observe 
that in general this leads to undecidability, but that in both cases, 
criteria can be defined allowing to keep the problem decidable. This 
consists, respectively, in observing the effects so that this 
observation can be expressed in \MSO\ and in a restricted class of 
handlers sufficient for the modeling of the so-called generic effects.

There are aspects that this paper deliberately overlooks, but which 
certainly deserve further study. First of all, it should be mentioned 
that the impact of effects on (known) complexity results about HOMC is 
not studied in detail here, but that the introduced translations (in 
particular those in Section~\ref{sect:translationEPCFtoY} and Section~\ref{s:modelcheckinghandlers}) could be implemented more efficiently, following works on order optimization \cite{NakamuraetAl2020:AverageCaseHardness}. 
It should also be said that the possibility of introducing logics more 
powerful than \MSO\ this way capturing quantitative observations without 
possibly falling back into the known cases of undecidability (e.g., in 
the case of probabilistic effects) is a very interesting direction 
that the authors intend to investigate in the immediate future.

\bibliographystyle{ACM-Reference-Format}
\bibliography{biblio.bib}

\end{document}
\endinput